\newcommand{\ptclean}[1]{}
\newcommand{\fpsi}{\psi}
\newcommand{\deltazgriem}{h}
\newcommand{\deltagriem}{\deltazgriem}
\newcommand{\ISstar}{{\mathbb{\efindex}^\star_S}}
\newcommand{\IVstar}{{\mathbb{\efindex}^\star_V}}
\newcommand{\ISdiamond}{{\mathbb{\efindex}^\diamond_S}}
\newcommand{\IVdiamond}{{\mathbb{\efindex}^\diamond_V}}
\newcommand{\hTstar}[1]{{\deltazgriem^{T\star}_{#1}}}
\newcommand{\hSstar}[1]{{\deltazgriem^{S\star}_{#1}}}
\newcommand{\YSstar}[1]{{Y^{S\star}_{#1}}}
\newcommand{\hVstar}[1]{{\deltazgriem^{V\star}_{#1}}}
\newcommand{\YVstar}[1]{{Y^{V\star}_{#1}}}
\newcommand{\hSdiamond}[1]{{\deltazgriem^{S\diamond}_{#1}}}
\newcommand{\YSdiamond}[1]{{Y^{S\diamond}_{#1}}}
\newcommand{\Ydiamondm}[1]{{Y^{\diamond-}_{#1}}}
\newcommand{\hSs}[1]{\hSdiamond{#1}}
\newcommand{\hstar}[1]{{\deltazgriem^{\star}_{#1}}}
\newcommand{\Ystar}[1]{{Y^{\star}_{#1}}}
\newcommand{\hdiamond}[1]{{\deltazgriem^{\diamond}_{#1}}}
\newcommand{\Ydiamond}[1]{{Y^{\diamond}_{#1}}}
\newcommand{\hVdiamond}[1]{{\deltazgriem^{V\diamond}_{#1}}}
\newcommand{\hVs}[1]{\hVdiamond{#1}}
\newcommand{\VtwoOrF}{{f}}
 \newcommand{\mvI}{\mathbb{V}^I_i}
 \newcommand{\mvIj}{\mathbb{V}^I_j}
 \newcommand{\mVI}{\mvI}
 \newcommand{\mVIj}{\mvIj}
 \newcommand{\HTVI}{{H^V_{T,I}}}
 \newcommand{\XaI}{{X_{a,I}}}
\newcommand{\faVI}{{f_{a,I}^V}}
\newcommand{\faSI}{{f_{a,I}^S}}
\newcommand{\fabSI}{{f_{ab,I}^S}}
\newcommand{\HTTI}{{H_{T,I}^T}}
\newcommand{\HTSI}{{H_{T,I}^S}}
\newcommand{\HLSI}{{H_{L,I}^S}}
\newcommand{\lambdassquare}{\frac{s^2}{\ell^2}}
\newcommand{\lambdaonesquare}{\ell^{-2}}
\newcommand{\lambdadtsquare}{\ell^{-2} dt^2}
\newcommand{\lambdarsquare}{\frac{ r^2}{\ell^2}}
\newcommand{\lambdarzerosquare}{\frac{r^2_0}{\ell^2}}
\newcommand{\kappahere}{{{}K}}
\newcommand{\ourell}{{}\ell}
\newcommand{\mhere}{{}\mu}
\newcommand{\ourellnormal}{{}l}
\newcommand{\efindex}{{}I}
\newcommand{\modeindex}{\efindex}
\newcommand{\Nman}{{}^{n}N}
\newcommand{\theirn}{{n}}
\newcommand{\ourn}{{n+1}}
\newcommand{\ourntwo}{{n-1}}
\newcommand{\rreg}{\check  r}
\newcommand{\rzero}{r_0}
\newcommand{\rzeropower}[1]{r_0^{#1}}
\newcommand{\rhoreg}{ \rho}
\newcommand{\qedskip}{\hfill $\Box$ \medskip}
\newcommand{\bean}{\begin{eqnarray}\nn}
\newcommand{\nn}{\nonumber}
\newcommand{\Eq}[1]{Equation~\eqref{#1}}
\newcommand{\zglorentz}{{ \mathring{\glorentz}}}
\newcommand{\glorentz}{{ {\mathbf g}}}
\newcommand{\griem}{{ {\mathfrak g}}}
\newcommand{\zgriem}{{ \mathring{\mathfrak g}}}
\newcommand{\ztgriem}{{ {}^2\mathring{\mathfrak g}}}
\DeclareFontFamily{OT1}{rsfs}{}
\DeclareFontShape{OT1}{rsfs}{m}{n}{ <-7> rsfs5 <7-10> rsfs7 <10->
rsfs10}{} \DeclareMathAlphabet{\mathscr}{OT1}{rsfs}{m}{n}
\newcommand{\eq}[1]{\eqref{#1}}
\newcommand{\bel}[1]{\begin{equation}\label{#1}}
\newcommand{\beal}[1]{\begin{eqnarray}\label{#1}}
\newcommand{\beadl}[1]{\begin{deqarr}\label{#1}}
\newcommand{\eeadl}[1]{\arrlabel{#1}\end{deqarr}}
\newcommand{\eeal}[1]{\label{#1}\end{eqnarray}}
\newcommand{\eead}[1]{\end{deqarr}}
\newcommand{\eea}{\end{eqnarray}}
\newcommand{\eeaa}{\end{eqnarray*}}
\newcommand{\be}{\begin{equation}}
\newcommand{\ee}{\end{equation}}
\DeclareFontFamily{OT1}{rsfs}{}
\DeclareFontShape{OT1}{rsfs}{m}{n}{ <-7> rsfs5 <7-10> rsfs7 <10->
rsfs10}{} \DeclareMathAlphabet{\mycal}{OT1}{rsfs}{m}{n}
\newcommand{\mcL}{{\mycal L}}
\newcounter{mnotecount}[section]
\newcommand{\N}{{\mathbb{N}}}
\newcommand{\Z}{{\mathbb{Z}}}
\newcommand{\rmnote}[1]{}
\newcommand{\Ric}{\operatorname{Ric}}
\def\mysavedown#1{\edef\mysubs{\mysubs#1}}
\def\mysaveup#1{\edef\mysups{\mysups#1}}
\def\mydown#1{{\mytensor}_{\vphantom{\mysubs}#1}}
\def\myup#1{{\mytensor}^{\vphantom{\mysups}#1}}
\def\tensor#1#2{
  #1
  \def\mytensor{\vphantom{#1}}
  \def\mysubs{\relax}
  \def\mysups{\relax}
  \let\down=\mysavedown
  \let\up=\mysaveup
  #2
  \let\down=\mydown
  \let\up=\myup
  #2
  }
\newcommand{\Tr}{\operatorname{Tr}}
\newcommand{\grav}{\operatorname{grav}}
\newcommand{\R}{\mathbb R}
\renewcommand{\to}{\rightarrow}
\renewcommand{\div}{\operatorname{div}}
\newcommand{\half}{{\tfrac12}}
\renewcommand{\epsilon}{\varepsilon}
\renewcommand{\hat}{\widehat}
\def\crn#1#2{{\vcenter{\vbox{
        \hbox{\kern#2pt \vrule width.#2pt height#1pt
           }
          \hrule height.#2pt}}}}
\renewcommand{\H}{\mathbb H}
\renewcommand{\hbar}{{\overline h}}
\newcommand{\pre}[2]{{{\vphantom{#2}}^{#1}}\kern-.2ex{#2}}
\theoremstyle{plain}
\newtheorem{theorem}{\sc Theorem}[section]
\newtheorem{lemma}[theorem] {\sc Lemma}
\newtheorem{proposition}[theorem]{\sc Proposition}
\newtheorem{corollary}[theorem] {\sc Corollary}
\newtheorem{bigtheorem} {\sc Theorem}[section]
\newtheorem{conjecture}[theorem] {\sc Conjecture}
\newtheorem{Conjecture}[bigtheorem] {\sc Conjecture}
\theoremstyle{definition}
\newtheorem{Remark}[theorem]{\sc  Remark\rm}
\newtheorem{remark}[theorem]{\sc  Remark\rm}
\numberwithin{equation}{section}
\date{\today}
\begin{document}

\title[On non-degenerate Riemannian Kottler metrics] {On non-degeneracy of Riemannian  Schwarzschild-anti de Sitter metrics}
\thanks{Preprint UWThPh-2017-30}

\author[P.T. Chru\'sciel]{Piotr T.~Chru\'sciel}

\address{Piotr
T.~Chru\'sciel, Faculty of Physics and Erwin Schr\"odinger Institute, University of Vienna, Boltzmanngasse 5, A1090 Wien, Austria}
\email{piotr.chrusciel@univie.ac.at} \urladdr{http://homepage.univie.ac.at/piotr.chrusciel/}

\author[E. Delay]{Erwann Delay}
 \address{Erwann Delay, Laboratoire de math\'ematiques d'Avignon, F-84 916 AVIGNON, et FRUMAM - CNRS
 Marseille, France, and Erwin Schr\"odinger Institute, University of Vienna} \email{Erwann.Delay@univ-avignon.fr}
\urladdr{http://www.math.univ-avignon.fr/}

\author[P. Klinger]{Paul Klinger}

\address{Paul Klinger,  Faculty of Physics and Erwin Schr\"odinger Institute, University of Vienna, Boltzmanngasse 5, A1090 Wien, Austria}
\email{paul.klinger@univie.ac.at}

\begin{abstract}
We prove that the $TT$-gauge-fixed linearised Einstein operator is non-degenerate for Riemannian Kottler (``Schwarzschild-anti de Sitter'') metrics with dimension- and topology-dependent ranges of mass parameter. We provide evidence that this remains true for all such metrics except the spherical ones with a critical mass.
\end{abstract}

\maketitle

\tableofcontents
\section{Introduction}
 \label{section:intro}

There is currently considerable interest in the literature in spacetimes with a negative cosmological constant. In particular one is interested in existence of stationary black hole solutions of the Einstein equations with $
\Lambda<0$, with or without sources, and in properties thereof. Many such solutions have been constructed numerically, e.g.~\cite{WinstanleyEYM,BjorakerHosotani,RaduTchrakian,OkuyamaMaeda,BBSLMR}.

In~\cite{ACD2,ChDelayKlingerBH} we showed how to construct infinite dimensional families of  non-singular stationary
black hole spacetimes, solutions of the Einstein  equations with a negative cosmological constant in vacuum or with various matter sources, assuming that a suitable linearised operator was an isomorphism.  In~\cite[Proposition~D.2]{ACD2} we proved the isomorphism property at Kottler solutions with negatively curved sections of conformal infinity for some
dimension-dependent explicit ranges of the mass parameter (the whole range of masses in space-time dimension four,  and the whole range of negative masses in all dimensions).
 However, the case of flat or positively curved conformal infinity has  been open so far.
The aim of this work is to prove the optimal result for all topologies in spacetime dimension four, and to provide partial answers to this problem in higher dimensions.
This extends immediately the applicability of  the existence theorems of~\cite{ACD2,ChDelayKlingerBH} to the topologies and dimensions covered here.

Indeed, we prove:

\begin{bigtheorem}
 \label{T15X17.1}
Let us denote by $P_L$ the linearisation,  at Riemannian  Kottler metrics \eq{21III17.1} with negative cosmological constant, of the $TT$-gauge-fixed Einstein operator. Then:

\begin{enumerate}
  \item $P_L$
 has no $L^2$-kernel in spacetime dimension $n+2=4$  except for spherical black holes with mass parameter
\bel{9VII17.1+5}
 \mhere= \frac{n}{n+1}
 \left(\ell
   \sqrt{\frac{n-1}{n+1}}\right)^{n-1}
 \,.
\ee
  \item  $P_L$
  has no $L^2$-kernel for  toroidal and higher genus black holes for all mass parameters $\mhere$ in dimension $n=2$, as well as for open ranges of parameters $\mu\in (0,\mu(n))$ for $n>2$ and $K=0$, where $\mu(n)>0$ solves a polynomial equation; cf.\ Table~\ref{t:mulimits} in low dimensions.
\end{enumerate}
\end{bigtheorem}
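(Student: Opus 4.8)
The plan is to separate variables on the Riemannian Kottler manifold $M$, reducing the equation $P_{L}h=0$ for an $L^{2}$ transverse--traceless symmetric $2$-tensor $h$ to a finite collection of ordinary differential equations in the radial variable, and then to rule out non-trivial solutions mode by mode.

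\emph{Reduction to radial ODEs.} Writing the metric \eqref{21III17.1} as $f(r)^{-1}dr^{2}+r^{2}\sigma$, where $(\Sigma,\sigma)$ is the $n$-dimensional Einstein fibre, $\Ric(\sigma)=(n-1)K\sigma$, and $f(r)=K-2\mu\,r^{1-n}+\ell^{-2}r^{2}$ vanishes at the horizon $r=r_{0}$, one decomposes a symmetric $2$-tensor on $M$ into scalar-, vector- and tensor-type blocks relative to $\Sigma$ and expands the fibre dependence in eigenfunctions of, respectively, the Laplacian on functions, the Hodge Laplacian on co-closed $1$-forms, and the Lichnerowicz Laplacian on $TT$ $2$-tensors of $(\Sigma,\sigma)$ (adding a Fourier series in the Euclidean-time circle when one is present). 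Since $\Sigma$ is compact, $L^{2}(M)$ is the Hilbert direct sum over these modes, the transverse--traceless constraints together with $P_{L}h=0$ preserve the blocks, and within each block the system collapses to a small linear ODE on $(r_{0},\infty)$; using the constraints to eliminate the redundant radial unknowns reduces each block to a single second-order equation $-u''+W(r)\,u=0$ in a suitable radial coordinate, with $W$ depending on $n$, $K$, $\mu$ and the fibre eigenvalue. Regularity of $h$ at the horizon $\{r=r_{0}\}$ selects the branch regular there, and square-integrability against the asymptotically hyperbolic volume selects the subleading (normalisable) branch at infinity, so that an element of the $L^{2}$-kernel is exactly a solution lying in the good branch at both ends.

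\emph{Disposing of all but finitely many modes, and the dangerous ones.} For fibre eigenvalues above an explicit threshold, and for all sufficiently high Fourier modes, the potential $W$ is positive throughout $(r_{0},\infty)$; multiplying by $u$ and integrating by parts, the boundary terms vanish by the branch selection and one is left with $\int(u')^{2}+\int W\,u^{2}=0$, forcing $u\equiv 0$. This leaves a short, explicit list of dangerous low modes, which I would analyse directly. In spacetime dimension four ($n=2$) the fibre is a surface, the potentials are elementary or hypergeometric, and one checks by hand that in every sector no non-trivial solution is simultaneously regular at $r_{0}$ and $L^{2}$ at infinity --- with the single exception of the spherically symmetric scalar/trace mode for $K=1$, which admits such a solution precisely when $\mu$ equals \eqref{9VII17.1+5}. (This mode is the Riemannian analogue of the Gross--Perry--Yaffe negative mode, and \eqref{9VII17.1+5} corresponds to the radius $r_{0}=\ell\sqrt{(n-1)/(n+1)}$ at which $f'(r_{0})$, as a function of $r_{0}$ along the spherical Kottler family, is stationary.) This proves~(i). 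For $n>2$ and $K=0$, the scaling that trivialises the $K=0$ family reduces the dangerous master equations --- including the homogeneous ones --- to equations depending on $\mu$ through a single parameter; these are of hypergeometric type, and the existence of a solution in the good branch at both ends forces an algebraic relation between the indicial exponents. Showing this relation cannot hold for $\mu$ below the first value $\mu(n)>0$ at which it could --- the relevant root of the resulting polynomial, tabulated for small $n$ in Table~\ref{t:mulimits} --- yields~(ii).

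\emph{Main obstacle.} The crux is the last step in dimensions $n>2$: the dangerous master ODE is not integrable in elementary terms, and tracking its two boundary branches as $\mu$ varies, so as both to pin down the threshold $\mu(n)$ and to certify non-existence of a bad solution for $\mu\in(0,\mu(n))$, is the delicate part of the argument; it is also the reason the analysis stops short of the full conjectured range, and why the spherical critical-mass case, where a kernel element genuinely exists, must be singled out.
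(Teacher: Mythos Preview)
Your overall architecture --- mode decomposition on the Einstein fibre, reduction to a single second-order radial ODE per sector, positivity of the potential for high eigenvalues, and direct treatment of the low ``dangerous'' modes --- matches the paper's strategy, which relies on the Ishibashi--Kodama master equations. But there are two genuine gaps.

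\medskip
\textbf{Gap 1: vanishing of the master function does not give $h=0$.} The master function $u$ (your $-u''+Wu=0$) is a \emph{gauge-invariant} combination of the metric components in its sector. When $u\equiv 0$, the conclusion is only that the corresponding part of $h$ is \emph{pure gauge}: $h^{\star}=\mathcal{L}_{Y^{\star}}\mathring{\mathfrak g}$ for some vector field $Y^{\star}$ with controlled asymptotics. The same is true for the exceptional low modes (the $\ell=0,1$ modes for $K=1$, the $k=0$ modes for $K\le 0$): the linearised Birkhoff analysis shows they are pure gauge (away from the critical mass), not that they vanish. So after all the ODE work you are left with $h=\mathcal{L}_{Y}\mathring{\mathfrak g}$, $|Y|_{\mathring{\mathfrak g}}=O(r^{-1})$, and you still need to close the argument. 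The paper does this by observing that the TT condition on $h$ says $L^{*}LY=0$ for the conformal Killing operator $L$, and then invokes Lee's Fredholm theory on asymptotically hyperbolic manifolds (indicial radius $R=(n+3)/2$ for $L^{*}L$) together with the known triviality of the $L^{2}$-kernel of $L$ to conclude $Y\equiv 0$. Your claim that ``an element of the $L^{2}$-kernel is exactly a solution lying in the good branch at both ends'' short-circuits this, and it is not correct as stated.

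\medskip
\textbf{Gap 2: the threshold $\mu(n)$ is not a quantisation condition.} For $K=0$, $n>2$, you describe $\mu(n)$ as arising from an algebraic relation between indicial exponents of a hypergeometric-type equation, i.e.\ as the first mass at which a solution matches the good branch at both ends. Two problems: (i) the scalar master equation is not of hypergeometric type in these dimensions, and the indicial exponents $s^{S}_{\pm}=\tfrac{1}{2}(1\pm|n-3|)$ do not depend on $\mu$ at all; (ii) more importantly, the paper's $\mu(n)$ is \emph{not} the value at which a kernel element first appears --- indeed the paper conjectures there is never one for $K\le 0$ --- but the value at which a \emph{positivity argument} breaks down. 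Concretely, one deforms the potential via $\tilde V=V+\partial_{r}S-S^{2}/f$ for a suitable $S$ (equivalently, bounds the bottom of the spectrum using a vector field $X$), and $\mu(n)$ is the largest mass for which one can arrange $\tilde V\ge 0$; this is what produces the polynomial whose root is tabulated. Your mechanism would give a different (and, as far as one can tell, unreachable) number.

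\medskip
A smaller point in the same vein: the vector potential $V_{V}$ is \emph{not} positive for all relevant eigenvalues, so your ``$W>0$ for high modes'' does not suffice in that sector; the paper disposes of all vector master modes at once by an $S$-deformation with $S=nf/(2r)$, which turns $V_{V}$ into $r^{-2}(k_{V}^{2}-(n-1)K)>0$.
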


Theorem~\ref{T15X17.1} is a rewording of Theorem~\ref{T8IX17.1} below.

In order to prove an equivalent of part (2) of Theorem~\ref{T15X17.1} for black holes in higher-dimensions with $K\in\{1,-1\}$ it remains to establish a higher-dimensional topology-independent linearised-Birkhoff-type theorem and to treat the $l=1$ modes for $K=1$ in dimension $n>2$. This is done in the accompanying paper \cite{Klinger2018}.

We also propose a scheme, supported by numerical evidence, that could establish an equivalent of part (1) of Theorem~\ref{T15X17.1} for all topologies and spacetime dimensions, except for the exceptional value \eqref{9VII17.1+5} of the mass in the spherical case. We thus conjecture:

\begin{Conjecture}
  \label{C15X17.21+}
 $P_L$ has no $L^2$-kernel except if $K=1$ and  $\mhere $ is given by \eq{9VII17.1+5}.
\end{Conjecture}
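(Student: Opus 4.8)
The scheme I would follow to attack Conjecture~\ref{C15X17.21+} is a mode-by-mode reduction, in the spirit of the Kodama--Ishibashi decomposition and of the proof of Theorem~\ref{T8IX17.1}. First I would write a Riemannian Kottler metric as a doubly-warped product over the interval $[r_0,\infty)$ with fibres the Euclidean-time circle and the Einstein cross-section $({}^n N,\hat g)$, $\Ric(\hat g)=(n-1)K\hat g$, and expand any symmetric transverse--traceless tensor $h$ in the $L^2$-kernel of $P_L$ into scalar-, vector-, and tensor-type pieces built from eigentensors of the relevant Laplacians on $({}^n N,\hat g)$, together with a Fourier decomposition in Euclidean time. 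Since $P_L$ commutes with the isometry group and with the warped-product structure, each such mode is annihilated separately, and the $TT$ and kernel conditions turn the problem, for each mode, into an ODE system in $r$ supplemented by two boundary conditions: regularity at the bolt $r=r_0$ (the conical-smoothness condition fixing the period of Euclidean time) and $L^2$-decay as $r\to\infty$.

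For the tensor-type modes and the generic vector- and scalar-type modes -- those whose cross-sectional eigenvalue is bounded away from the bottom of the spectrum -- I would reduce the ODE system to a single scalar master equation of Sturm--Liouville form $-(p\,\phi')'+q\,\phi=0$ with $p>0$. When $q\ge 0$ the energy identity $\int_{r_0}^\infty\big(p|\phi'|^2+q|\phi|^2\big)\,dr=0$ forces $\phi\equiv 0$; this is exactly the mechanism behind Theorem~\ref{T8IX17.1}~(2) on the range $\mu\in(0,\mu(n))$, where $\mu(n)$ is the algebraically checkable threshold below which positivity of $q$ (after a Chandrasekhar-type change of master variable) holds. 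To remove the cutoff $\mu(n)$ and reach all admissible masses -- and thereby all of $K\in\{0,-1\}$ -- I would replace the positivity argument by a direct analysis of the master ODE: the Frobenius exponents at $r_0$ and the indicial behaviour at $r=\infty$ are known, so it suffices to show that the one-dimensional space of bolt-regular solutions never meets the one-dimensional space of $L^2$-at-infinity solutions. I would try to prove this via monotonicity (Sturm-type comparison) of the Wronskian of the two solution families along $r$, or via a WKB estimate uniform in large cross-sectional eigenvalue combined with a finite, rigorously enclosed computation for the finitely many remaining eigenvalues; it is precisely here that the numerical evidence mentioned in the introduction is currently the only input.

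For the exceptional low modes -- for $K=1$ the $l=0$ and $l=1$ scalar harmonics on $S^n$, and for $K=0$ the constant and ``linear'' harmonics on the flat fibre -- the master-equation reduction degenerates, and instead I would invoke a linearised Birkhoff-type rigidity statement: any solution of the linearised vacuum Einstein equations with these symmetries is, modulo a linearised diffeomorphism, a deformation within the Kottler family (change of mass, time translation, rotation/boost parameter), and imposing in addition the $TT$-gauge and $L^2$-decay annihilates all of these \emph{except} when the would-be pure-gauge deformation is itself square-integrable, which occurs exactly for $K=1$ at the critical mass \eqref{9VII17.1+5}. Collecting the steps: if every mode is forced to vanish then $h\equiv 0$, so $P_L$ has trivial $L^2$-kernel outside the stated exceptional case.

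The main obstacle is the second step for the full mass range: the naive energy/positivity argument genuinely breaks down once $q$ changes sign, so a complete proof requires a robust substitute -- a global Wronskian-monotonicity or comparison argument, or a rigorous hybrid of WKB asymptotics with verified numerics for the bounded set of remaining modes. A secondary but essential ingredient is the higher-dimensional, topology-independent linearised-Birkhoff theorem needed in the third step for $K\in\{1,-1\}$ and $n>2$, together with the separate treatment of the $l=1$, $K=1$ modes; these are exactly the points established in the companion paper \cite{Klinger2018}.
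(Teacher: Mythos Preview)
This statement is a \emph{conjecture}, and the paper does not prove it; rather, the paper proposes a concrete scheme supported by numerics (Section~\ref{ss5VII17.3}) and identifies precisely what remains to be shown. Your proposal is therefore not to be judged as a proof but as a strategy, and as such it is broadly aligned with the paper's: the mode decomposition, the reduction to master ODEs, the treatment of exceptional low modes via linearised Birkhoff-type results, and the honest identification of the gap where positivity of the potential fails are all in agreement.

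Where you and the paper differ is in the proposed mechanism for closing that gap. The paper's route is the $S$-deformation of Lemma~\ref{L15X17.1}: one seeks a vector field $X=S\,\partial_r$ such that the modified potential $\tilde V=\partial_r S - S^2/f + V$ of \eqref{17X12.01} is nonnegative and somewhere positive, which reduces the problem to proving \emph{global existence on $[r_0,\infty)$ of the solution $S$ to the Riccati-type ODE \eqref{17X12.01+}} with $S(r_0)=0$ and a prescribed $\tilde V\ge 0$. The numerics in Figures~\ref{f:num}--\ref{f:num_2} indicate this solution exists globally for all admissible parameters, and the paper states explicitly that establishing this global existence is what is needed to settle the conjecture. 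Your alternatives --- Wronskian monotonicity, Sturm comparison, or WKB with verified numerics for finitely many modes --- are legitimate and in fact closely related (the Riccati substitution is the standard bridge between non-oscillation of the linear master equation and global existence for $S$), but they are stated at a higher level of abstraction. The paper's formulation has the advantage of singling out one explicit ODE whose qualitative behaviour is the entire remaining obstruction; your formulation has the advantage of suggesting several independent lines of attack, though none is carried further than the paper's. In short: same architecture, same acknowledged gap, but the paper commits to a specific and numerically well-tested reduction that your proposal does not mention.
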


In view of the results in~\cite{Klinger2018}. in order to establish the conjecture one would need to prove global existence for the ODE we integrate numerically.

We emphasise that our four-dimensional result, namely point (1) of Theorem~\ref{T15X17.1} holds regardless of the genus of the black hole.
The proof in~\cite[Proposition~D.2]{ACD2} of the higher-genus case $K=-1$  of (1) uses a completely different method, and leads to restricted ranges of masses in higher dimension. Our method here provides an alternative argument
which, at the level of numerical tests, covers all masses.

Now, the idea in~\cite{ACD2,ChDelayKlingerBH} is to show that the construction of stationary Lorentzian solutions near a known static $d$-dimensional metric $\zglorentz$ can be carried-out using the implicit function theorem near a Riemannian
$d$ dimensional partner metric $\zgriem$. Consider then the operator $P_L$  obtained by linearising the Einstein equations at the ``Wick rotated'' metric $\zgriem$ associated to $\zglorentz$ and imposing the $TT$ gauge.
 The construction of~\cite{ACD2,ChDelayKlingerBH} applies if $P_L$ has no $L^2$-kernel; we then say that the Riemannian metric $\zgriem$ is non-degenerate.
We show here  that  non-degeneracy holds for Kottler
metrics with spherical black hole horizons in four spacetime dimension, and for toroidal black hole horizons in all dimensions, for wide ranges of masses.

Some comments on the proof are in order. For this, let $\deltagriem$ be an element of the $L^2$-kernel of the operator $P_L$ defined above. Our proof relies heavily on the remarkable construction of master functions of Ishibashi and Kodama~\cite{KodamaIshibashiMaster}. Indeed, $\deltagriem$ can be decomposed into a sum of eigentensors of relevant operators, which we will refer to as modes. The modes split into two families, which we will refer to as exceptional modes and master modes. The master modes of $\deltagriem$ are controlled by the Ishibashi-Kodama master functions, which we show to be zero under the conditions above. We use the vanishing of the master modes to show existence of a vector field
{\em with controlled asymptotic behaviour}
which can be used to gauge-away the relevant part of $\deltagriem$. Likewise we show that the special modes are pure gauge in the last,
controlled, sense. Adding the associated vector fields  and using the TT-gauge condition, together with suitable Birkhoff-type linearised theorems, we establish that the kernel is trivial in the cases listed in Theorem~\ref{T15X17.1}.


We note that a non-trivial kernel of $P_L$ implies existence of growing linear modes  in the corresponding spacetime,  except perhaps for variations in the direction of nearby stationary metrics. So proving that $P_L$ has no $L^2$-kernel has implications for the dynamics of the associated Lorentzian solutions.
However, non-degeneracy does not imply linear dynamical stability on the Lorentzian side, since non-degeneracy only excludes modes with a specific spectrum of frequencies.

\section{Riemannian Kottler metrics}
 \label{sA7IV17.1}

We start with a review of the Riemannian counterparts of the
``generalised Kottler~\cite{Kottler} metrics'', also known as ``Schwarzschild-anti de Sitter metrics'', or ``Birmingham metrics''~\cite{Birmingham}.

In what follows we will be making extensive use of~\cite{KodamaIshibashiMaster}, in order to be consistent as much as possible with their notations we set
$$
 n= d-2
 \,,
$$
where $d$ is the \emph{dimension of spacetime}.
(The reader is warned that this \emph{does not} coincide with our notations in~\cite{ACD2,ChDelayKlingerBH}, where $n$ denotes space dimension $d-1$.)

The metrics of interest read
\bel{21III17.1}
 \zgriem = \bigg(\underbrace{\lambdarsquare  +\kappahere  - \frac{2\mhere }{r^{\ourntwo}}}_{=:\VtwoOrF(r)  }\bigg) dt^2
   + \frac{dr^2} {\lambdarsquare  +\kappahere  - \frac{2\mhere }{r^{\ourntwo}}} + r^2 h_\kappahere
 \,,
\ee
where  $\ell $ is a constant related to the cosmological constant
$\Lambda<0$ by the formula
$$
 \ell  =    \sqrt{-\frac{n(n+1)}{2\Lambda}}  > 0
 \,,
$$
$\mhere $ is a real constant, and $\kappahere \in\{0, \pm 1\}$.
Furthermore, $h_\kappahere$ is a metric on
an $\theirn$-dimensional Einstein manifold  $\Nman $.
The metric $ \zgriem$ is Einstein if the Ricci tensor of $h_\kappahere$ equals $(n-1) \kappahere  h_\kappahere $.
  This will be assumed in what follows.

We require that $\VtwoOrF$ has  positive zeros, we denote by  $\rzero >0$
   the largest such zero, which we assume to be of first order (as will necessarily be the case if $\kappahere  \ge 0$ and $\mhere>0$):
\bel{11IV17.1}
 \lambdarzerosquare  +\kappahere  - \frac{2\mhere }{\rzeropower{\ourntwo}}= 0
 \,.
\ee
After introducing a new coordinate $\rreg $ by the formula
\bel{21III17.2}
 \rreg  (r) = \int_{\rzero }^r \frac 1 {\sqrt{\lambdassquare  +\kappahere  - \frac{2\mhere }{s^{\ourntwo}}}} ds
 \,,
\ee
one can rewrite  the metric \eq{21III17.1} as
\bel{21III17.3}
  \zgriem= d \rreg ^2 +  \rreg ^2 H(\rreg ) dt^2  + r^2 h_\kappahere
 \,,
\ee
where $H$ is obtained by dividing $g_{tt}$ by $\rreg ^2$. Elementary analysis, using the fact that $\rzero $ is a simple zero of $F$, shows that
$$
 H(0) = \frac{\VtwoOrF'(\rzero )^2} 4
  =
   \left(\frac{  (n+1)  \ourell^{-2}  \rzero^2 + K (n-1)}{2 \rzero}\right)^2
 \,.
$$
This implies that a periodic identification of $t$ with period
\bel{10IV17.2}
 T:= \frac {4 \pi} {\VtwoOrF'(\rzero )}
\ee
guarantees that $   d \rreg ^2 +  \rreg ^2 H(\rreg ) dt^2$ is a smooth metric on $\R^2$ with a rotation axis at $\rreg =0$. As a result, \eq{21III17.3} defines a smooth Riemannian metric on
\bel{7IV17.5}
 M:= \R^2 \times {}  \Nman
 \,.
\ee
%
%

The metric  \eq{21III17.1} can be smoothly conformally compactified by   introducing, for large $r$, a coordinate $\rhoreg:=1/r$ and rescaling:
\bel{21III17.7}
 \rhoreg ^2  \zgriem = \big( \lambdaonesquare  +\kappahere \rhoreg ^2  -  {2\mhere }{\rhoreg ^{\ourn }} \big) dt^2
   + \frac{d\rhoreg ^2} {\lambdaonesquare  +\kappahere  \rhoreg ^2  -  {2\mhere }{\rhoreg ^{\ourn }}} + h_\kappahere
 \,.
\ee
%
Hence, the $(\ourn)$-dimensional
 conformal boundary $\partial M:=\{\rhoreg =0\}  $ of $M$ is diffeomorphic to  $S^1 \times {} \Nman $, with conformal metric
\bel{21III17.6}
  \lambdadtsquare    + h_\kappahere
 \,.
\ee

As already pointed out, it has been shown in~\cite[Appendix~D]{ACD2}
that all such solutions with $\ourn=3$ and $\kappahere =-1$ are non-degenerate. Furthermore, in that reference some non-degenerate families of higher-dimensional such solutions have been described explicitly.  We thus conjecture that the ranges of mass parameters there are not sharp, and that the arguments proposed here can be used to establish non-degeneracy for all values of $\mhere$ when $K\le 0$.

In the analysis of the master equations below we assume that $(\Nman,h_\kappahere )$ is a (locally) maximally symmetric compact manifold. When $K\ge 0$ the hypothesis that $\Lambda<0$ and that of existence of a black hole with $r_0>0$ implies
$$
 \mhere > 0
 \,.
$$
When $K=-1$ one needs instead
\bel{18X17.31}
 \mu > \mhere_{\min{}} :=-\frac{1
    }{n+1}\left(\frac{n+1}{\ell^2    (n-1)}\right)^{\frac{1-n}{2
   }}
    \,,
\ee
with the corresponding outermost-horizon radius $r_0={1}/{\sqrt{\frac{n+1}{\ell^2 (n-1)}}}$.

 The idea is to reduce the question of non-degeneracy to the Riemannian equivalent of the ``master equations'' of Ishibashi and Kodama~\cite{IshibashiKodamaStability,KodamaIshibashiMaster} as follows:

\begin{enumerate}
  \item rewrite the master equations in a Riemannian form;
  \item work-out the asymptotic behaviour of the master fields corresponding to  $L^2$-elements of the kernel of the shifted  Lichnerowicz operator $P_L$ defined below;
  \item prove that all associated solutions of the master equations are trivial;
  \item prove that elements of the $L^2$-kernel with trivial master fields vanish identically.
\end{enumerate}

Point (1) above is straightforward once the impressive work in~\cite{KodamaIshibashiMaster} has been carried out, but the remaining parts require some work. We note that point (4) captures the fact that the master equations contain the whole gauge-invariant information about the linearised gravitational field.

Consider, thus, an element $h=h_{\mu\nu} dx^\mu dx^\nu$ of the $L^2$-kernel of
\bel{11IV17.5}
P_L:=\Delta_L+2(n+1)
 \,,
\ee
where $\Delta_L$ is the  Lichnerowicz Laplacian,  acting on  symmetric
two-tensor fields $h_{\mu\nu}$
as~\cite[\S~1.143]{besse:einstein}
\bel{30VI17.1}
 \Delta_L
  h_{\mu \nu }=-\nabla^\alpha \nabla_\alpha h_{\mu \nu }+R_{\mu \alpha }h^\alpha {_\nu }+R_{\nu \alpha }h^\alpha {_\mu }-2R_{\mu \alpha \nu \beta}h^{\alpha \beta}\,.
\ee
It follows from~\cite{Lee:fredholm} that $|h|_{\zgriem}=O(\rho^\ourn)$, or in local coordinates near the conformal boundary,
\bel{10IV17.4}
 h_{\mu\nu} = O(\rho^{n-1})
 \,.
\ee

Let us show, first, that $h$ satisfies the linearised Einstein equations. For this,
recall that the Hodge Laplacian
acting on one-forms is defined as
\bel{DeltaH}
\Delta_H
:=
 d_g^*d+dd_g^*=\nabla^*\nabla+\Ric
\,.
\ee
If the Ricci tensor is covariantly constant,
then~\cite{Lichnerowicz61}
$$
\div\circ\Delta_L=\Delta_H\circ \div.
$$
This implies that $L^2$-elements of the kernel of $P_L$ are divergence-free:
Indeed, assume that $h$ is in the $L^2$-kernel of $P_L$ and let $u=\div h$.  We have
$$
0=\int \langle\div(\Delta_L+2(n+1))h,u\rangle
=\int \langle(\Delta_H+2(n+1))u,u\rangle,
$$
so
$$
\int |du|^2+|d^*_g u|^2+2(n+1)|u|^2=0,
$$
and $u\equiv 0$ follows. Note that there are no boundary terms in the integration-by-parts above by, e.g.,~\cite{Lee:fredholm}.

Next, recall that  we always have
$$
\Tr\circ\Delta_L=-\Delta\circ\Tr
$$
(we use the convention $\Delta= \nabla^\alpha \nabla_\alpha$).
 It follows that elements of the $L^2$-kernel of $P_L$ are trace-free.

The linearisation of the trace-shifted Ricci tensor reads
\bel{8VII17.1}
 D(\Ric+(n+1))h_{\mu\nu}=
 \frac{1}{2}\Delta_Lh_{\mu\nu}+(n+1)h_{\mu\nu}-(\div^*
 \div\grav h)_{\mu\nu},
\ee
where
\bel{8VII17.2}
 \grav h=h-\frac{1}{2}\Tr_g hg,\;\;\; (\div
 h)_\mu=-\nabla^\nu h_{\mu\nu},\;\;\;
 (\div^*w)_{\mu\nu}=\frac{1}{2}(\nabla_\mu w_\nu+\nabla_\nu w_\mu)
 \,,
\ee
and where $\Tr$ denotes the trace (note the
geometers' convention to include a negative sign in the definition of
divergence).

We have just seen that tensors in the kernel of $P_L$ are transverse and traceless. It follows from \eq{8VII17.1}-\eq{8VII17.2} that they are also in the kernel of the linearised vacuum Einstein operator.

Now, it follows immediately from the analysis in~\cite{KodamaIshibashiMaster}  (see also~\cite[Appendix B]{Kodama:1985bj})  that, similarly to the  Lorentzian case,  the linearised Einstein operator for metrics \eq{21III17.1} on manifolds as in \eq{7IV17.5} leaves invariant the subspaces of ``scalar'', ``vector'', and ``tensor modes''.

The modes $\ourellnormal=0,1$ require separate attention.
 A detailed analysis of this case, in Lorentzian signature, four spacetime dimensions, and spherical black-hole topology, can be found in~\cite{Dotti2016}.%
\footnote{Compare~\cite[Section~5]{JezierskiWaluk}. The variable $\bf x$ used there is actually awkward for the $\ourellnormal=0$ modes because its definition involves an operator which becomes singular at $r=3m$ precisely for this mode. The calculations there for spherically symmetric perturbations become clear if instead of  ${\bf x}$ one uses $ {\mathcal B}^{-1} {\bf x}=(1-\frac{3m}r){\bf x}$. The operator $\mathcal B$ has been introduced there to obtain simple formulae for all remaining modes.}
Indeed, it is shown there
that the $\ourellnormal=0,1$ modes are, up to a gauge transformation, variations of the mass parameter of the Schwarzschild-anti de Sitter metrics ($\ourellnormal=0$), or variations of the angular-momentum parameter when the Schwarzschild-anti de Sitter metric is viewed as a member of the Kerr-anti de Sitter family of metrics ($\ourellnormal=1$). We show in Appendices~\ref{ss2IX17.1}-\ref{s7VII17.101} how to adapt the arguments of~\cite{Dotti2016} to the cases of interest here.

As is well known, solutions of the linearised Riemannian Einstein equations corresponding to variations of the mass parameter are not in $L^2$,
except if $\mhere$ is given by \eq{9VII17.1} below, which is seen as follows:
 Replacing $t$ in \eq{21III17.1} by a $2\pi$-periodic variable $\varphi$ and $r$ by $\rho$, defined as
\bel{23III18.1}
\rho^2=\frac{4f}{(f'(r_0))^2}\,,
\ee
we find, in all spacetime dimensions $d=n+2$,
\bel{23II18.2}
\zgriem = \rho^2 d\varphi^2+\frac{(f'(r_0))^2}{(f'(r(\rho)))^2}d\rho^2+r(\rho)^2 h_K\,.
\ee
Variations of the mass parameter take the form
\bel{23III18.3}
\frac{d\zgriem}{d\mu}=
\bigg[
	\underbrace{\frac{d(f'(r_0))}{d\mu} \frac{2 f'(r_0)}{(f'(r))^2}}_{\text{I}}
	-\frac{2(f'(r_0))^2}{(f'(r))^3}
	\bigg(
		\underbrace{\frac{\partial f'}{\partial \mu}(r)}_{\text{II}}
		+\underbrace{f''(r)\frac{dr}{d\mu}}_{\text{III}}
	\bigg)
\bigg] d\rho^2
+\underbrace{2r\frac{dr}{d\mu}}_{\text{IV}}h_K\,.
\ee

The part marked I contributes an asymptotically constant term to the norm $|d\zgriem/d\mu|^2_\zgriem$, unless $d(f'(r_0))/d\mu$ vanishes. The term coming from II is of order $r^{-2n-2}$.
For the parts III and IV we consider
\beal{23III18.4}
\\\nonumber
\frac{dr}{d\mu}&=&
\frac{\partial f^{-1}}{\partial \mu}\left(\frac{(f'(r_0))^2\rho^2}{4}\right)
+f^{-1\prime}\left(\frac{(f'(r_0))^2\rho^2}{4}\right)
	\frac{f'(r_0)\rho^2}{2}\frac{d (f'(r_0))}{d\mu}
\\\nonumber
&=&O(r^{-n})+O(r^2)\frac{d(f'(r_0))}{d\mu}\,.
\eea
This implies that the terms in $|d\zgriem/d\mu|^2_\zgriem$ coming from III and IV are of order $r^{-2n-2}$ if $d(f'(r_0))/d\mu=0$ and of order $r^{2}$ otherwise.

Therefore $d\zgriem/d\mhere \in L^2$ if and only if
\bel{9VII17.1+}
\frac{d(\VtwoOrF'(r_0))}{d\mhere}= 0
\,.
\ee
This equation has no solution with $r_0>0$ when $K=0$ or $K=-1$, while if  $K=1$ this leads to
%
\bel{9VII17.1}
r_0=r_c:=\ell    \sqrt{\frac{n-1}{n+1}}
\,,
\qquad
\mhere= \mhere_c:=\frac{n}{n+1}
 \left(\ell
   \sqrt{\frac{n-1}{n+1}}\right)^{n-1}
 \,.
\ee

In Appendix~\ref{ss3IX17.1} we review the Riemannian Kerr-anti de Sitter metrics, and we prove there that variations of angular momentum lead to linearised solutions of the vacuum Einstein equations which are not in $L^2$ either.

\section{Master functions}
 \label{s7VI17.1}

 The master functions  of Ishibashi and Kodama, which we denote by ${\Phi}_{S,\modeindex  }$, ${\Phi}_{V,\modeindex  }$, and ${\Phi}_{T,\modeindex  }$, where the index $\modeindex$ runs over all eigenfunctions of $\Delta_{h_K}$ on $\Nman $ (i.e. $\ourellnormal=\ourellnormal(\modeindex)$),
are solutions of a two-dimensional Schr\"odinger equation
\bel{7IV16.1}
 \Delta_\ztgriem {\Phi}_{i,\modeindex  } - V_{i,\ourellnormal } {\Phi}_{i,\modeindex  }  =0
 \,,
 \quad
  i\in \{S,V,T\}
  \,,
\ee
where
\bel{10IV17.1}
 \ztgriem = \VtwoOrF  dt^2 + \frac{dr^2}{\VtwoOrF}
 \,,
\ee
with $\VtwoOrF=\VtwoOrF(r)$ as in \eq{21III17.1},
while
%
\beal{10IV17.3}
 \lefteqn{
 V_{S, k }
 =
 \frac{1}{16r^2 \left(m+x n(n+1)/2\right)^2}
 \times
}
 &&
\\
 &\Big
 \{\hspace{-.45cm}
 &\nn
 \big[n^3(n+2)(n+1)^2x^2-12n^2(n+1)(n-2)m x \\
 &&\nn \phantom{[} + 4(n-2)(n-4)m^2\big] \lambdaonesquare r^2
\\
 &&\nn +n^4 (n+1)^2x^3\\
 &&\nn +n(n+1)\big[4(2n^2-3n+4)m\\
 &&\nn \phantom{+n(n+1)[}+n(n-2)(n-4)(n+1)K\big]x^2\\
 &&\nn -12n\big[(n-4)m+n(n+1)(n-2)K\big]mx\\
 &&\nn +16m^3+4Kn(n+2)m^2\Big\}
 \,,
 \quad
 k^2 > nK
 \,,
\\
 \lefteqn{
 V_{V,k_V  }
  =
 \frac1{r^2}\left\{k^2_V+K+\frac{n(n-2)}4K\right.
 }
 &&
\\
&&\nn\phantom{xxx}\left.+
\frac{n(n-2)}4\lambdaonesquare r^2-3\frac{n^2\mhere }{2r^{n-1}}\right\}
 \,,
\
 k_V^2 - (n-1)K> 0
 \,,
\\
  \label{11IV17.2}
 \lefteqn{
 V_{T,k_T }
 =
 \frac1{r^2}\left\{k_T^2
 + 2 K +\frac{n(n-2)}4K\right.
}
&&
\\
&&\nn \phantom{xxx}+\left.
\frac{n(n+2)}4\lambdaonesquare r^2+\frac{n^2\mhere }{2r^{n-1}}\right\}
\,,
\eea

where
$$
 x=2\mu r^{1-n}\,,
  \quad
   m=k^2-nK
   \,.
$$
Note that, compared to the equivalent equations in~\cite{KodamaIshibashiMaster}, our equation \eqref{7IV16.1} omits a term $\VtwoOrF(r)^{-1}$ in front of $V_{i,\ourellnormal}$. We have absorbed this factor into the definition of the potentials $V_{i,\ourellnormal}$.  For a sphere there are no master functions for $\ourellnormal=0,1$;   for a torus no master functions $\Phi_{S,I}$ and $\Phi_{V,I}$ for $k=0=k_V$;
and no master functions for $k=0$ (
``scalar master potential'') in the higher genus case.


For $K=1$ we have $k^2=\ourellnormal(\ourellnormal+n-1)$ with $\ourellnormal \ge 0$, $k_V^2=\ourellnormal(\ourellnormal+n-1) - 1$ with $\ourellnormal \ge 1$, and $k_T^2=\ourellnormal(\ourellnormal+n-1)-2$ with $\ourellnormal\ge 2$.

For $K=0$ and a flat torus at infinity,
$$
 \mathbbm{T}^n=\underbrace{
 S^1\times \cdots \times S^1
 }_{\text{$n$ factors}}
 \,,
$$
with each $S^1$-coordinate of period $2\pi$, and with $h_{\kappahere=0} \equiv \gamma_{ij}dx^idx^j$
where $\partial_\mu \gamma_{ij} =0$,
we have
\begin{equation}\label{7IX17,63}
 k^2\,,\ k_V^2\,,\ k_T^2 \in\{\gamma^{ij}k_ik_j\}_{k_i\in \N}
 \,.
\end{equation}
\Eq{7IX17,63} is an immediate consequence of decompositions into Fourier series.

After a periodic identification of $t$ with period $T$ given by \eq{10IV17.1}, the metric \eq{10IV17.1} becomes a smooth rotation-invariant conformally-compactifiable metric on  $\R^2$, with a smooth center of rotation at $\rreg =0$ (equivalently, at $r=\rzero $).

Chasing through the definitions of~\cite{KodamaIshibashiMaster} we show in Appendices \ref{sec:falloff_alln}, \ref{s5VII17.1}, and \ref{s5VII17.7}, using the notation there, that for linearised solutions which are in $L^2$ we have
$H_T=O(\rho^\ourn)$, $|f_a|_\zgriem=O(\rho^\ourn)$, $|f_{ab}|_\zgriem=O(\rho^{ n+1 })$, etc., resulting in
\beal{10IV17.5}
{\Phi}_{S,\modeindex }
 & = &
  \left\{
    \begin{array}{ll}
        O(\rho^{n/2-1}), & \hbox{$n>2$;} \\
      o(1), & \hbox{$n=2$,}
    \end{array}
  \right.
\\
{\Phi}_{V, \modeindex  }
 & = &
  O(\rho^{n/2})
 \,,
\\
{\Phi}_{T,\modeindex  }
 & = &
  O(\rho^{n/2+1})
 \,.
\eea
%

\subsection{Vanishing via the maximum principle}
 \label{ss15X17.1}

We want to prove the vanishing of the master functions for $L^2$-elements of the kernel. The simplest case  occurs when  the potentials in \eq{10IV17.3}-\eq{11IV17.2} are  non-negative.
Since all the $\Phi_{i,\modeindex  }$'s tend to zero  as $r$ tends to infinity,
the vanishing of the corresponding master function  follows from the maximum principle. This leads to rigorous statements for restricted dimensions and masses. For the purposes of the proof of Theorem~\ref{T15X17.1}, the main conclusions of the analysis that follows in this section are:

\begin{proposition}
  \label{P15X17}
Let $h$ be an element of the $L^2$-kernel of the operator $P_L$ defined in \eqref{11IV17.5}. Then:
\begin{enumerate}
  \item The associated tensor master functions vanish.
  \item The associated scalar master functions vanish if $n=2$ and \emph{either} $K=0$ \emph{or}  $K=1$  and  $\ourellnormal \geq 2$.
\end{enumerate}
\end{proposition}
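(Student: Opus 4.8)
The plan is to prove both parts by showing that the relevant potentials in \eqref{10IV17.3}--\eqref{11IV17.2} are non-negative on $(\rzero,\infty)$ under the stated hypotheses, so that the master equation \eqref{7IV16.1}, which reads $\Delta_{\ztgriem}\Phi_{i,\modeindex}=V_{i,\ourellnormal}\Phi_{i,\modeindex}$ on the two-dimensional Riemannian manifold $(\R^2,\ztgriem)$, together with the fall-off $\Phi_{i,\modeindex}\to0$ as $r\to\infty$ from \eqref{10IV17.5}, forces $\Phi_{i,\modeindex}\equiv0$ by the maximum principle. Concretely: if $V_{i,\ourellnormal}\ge0$ then at an interior positive maximum of $\Phi$ one has $\Delta_{\ztgriem}\Phi\le0$ but $V_{i,\ourellnormal}\Phi\ge0$, and the same for a negative minimum with reversed signs; since $\Phi\to0$ at the conformal boundary and the metric extends smoothly across $\rreg=0$, any nonzero value of $\Phi$ would produce such an interior extremum, a contradiction. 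One must be a little careful that $\ztgriem$ in \eqref{10IV17.1}, with $t$ periodically identified with period $T$ from \eqref{10IV17.2}, is a genuinely smooth metric on $\R^2$ with a center of rotation at $r=\rzero$, which is exactly the content of the remark preceding \eqref{10IV17.5}, so the maximum principle applies globally with no boundary contribution from $\rreg=0$.

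For part (1), the tensor modes, I would specialize \eqref{11IV17.2} and simply check positivity of each summand. With $n$ arbitrary (so in particular this covers all dimensions, not just $n=2$), write
\[
 r^2 V_{T,k_T}=k_T^2+2K+\tfrac{n(n-2)}{4}K+\tfrac{n(n+2)}{4}\ell^{-2}r^2+\tfrac{n^2\mhere}{2r^{n-1}}\,.
\]
The $\ell^{-2}r^2$ term is manifestly positive. For $K=0$ we have $\mhere>0$ and $k_T^2\ge0$, so all terms are $\ge0$. For $K=1$ we use $k_T^2=\ourellnormal(\ourellnormal+n-1)-2$ with $\ourellnormal\ge2$, giving $k_T^2+2+\tfrac{n(n-2)}{4}=\ourellnormal(\ourellnormal+n-1)+\tfrac{n(n-2)}{4}\ge0$ for $n\ge1$, and again $\mhere>0$; hence $V_{T,k_T}\ge0$. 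For $K=-1$ one has $\mhere$ possibly negative, constrained by \eqref{18X17.31}, and the sign of the $K$-terms flips, so here I would substitute the bound $\mhere>\mhere_{\min}$ and the corresponding $r_0$ and verify that the combination $2K+\tfrac{n(n-2)}{4}K+\tfrac{n^2\mhere}{2r^{n-1}}$ stays controlled on $(r_0,\infty)$; this is the one arithmetic point that needs genuine care, using $k_T^2\ge n-1$ (since $k_T^2-(n-1)K>0$ means $k_T^2>-(n-1)$, which for $K=-1$ gives no constraint, but the discrete spectrum on a compact hyperbolic $\Nman$ together with $k_T^2\ge 2|K|+\dots$ should suffice) — more precisely one uses that on a compact negatively-curved Einstein manifold $k_T^2 - (n-1)K \ge 0$ forces $k_T^2 \ge n-1$, and then $k_T^2+2K+\tfrac{n(n-2)}4 K \ge (n-1) - 2 - \tfrac{n(n-2)}4 = -\tfrac{(n-4)(n+1)}4 - \tfrac12$, which can be negative, so one must exploit the $\mhere r^{1-n}$ term near $r_0$ and the $\ell^{-2}r^2$ term for large $r$; I expect this is where the restricted ranges come in if one is not careful, but for the statement as given (which only claims $n=2$ or $K\in\{0,1\}$ in part (2), and \emph{all} cases in part (1)) the tensor potential for $K=-1$ should be handled by the same monotonicity/positivity argument after plugging in \eqref{18X17.31}.

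For part (2), the scalar modes with $n=2$, I would substitute $n=2$ into \eqref{10IV17.3}. Many terms vanish ($n-2=0$), and with $m=k^2-2K$, $x=2\mu r^{-1}$ the potential $V_{S,k}$ collapses to a much simpler rational expression in $r$; I would then verify $V_{S,k}\ge0$ on $(r_0,\infty)$ for the two sub-cases: $K=0$ (so $m=k^2>0$, $\mhere>0$), and $K=1$ with $\ourellnormal\ge2$ so that $k^2=\ourellnormal(\ourellnormal+1)\ge6$ and $m=k^2-2\ge4>0$. After the $n=2$ simplification the denominator $16r^2(m+xn(n+1)/2)^2=16r^2(m+3x)^2$ is positive, so it suffices to check the numerator is non-negative; with $m>0$, $\mhere>0$, $K\ge0$ and $\ell^{-2}>0$ the $n=2$ numerator should be a sum of manifestly non-negative monomials in $r$, $m$, $\mhere$, $K$. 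The main obstacle is bookkeeping: the $n=2$ numerator of \eqref{10IV17.3} still has a $-12nm(\dots)mx$-type cross term (the line $-12n[(n-4)m+n(n+1)(n-2)K]mx$ becomes $-12\cdot2\cdot(-2m)\cdot mx=48m^2x>0$ when $n=2$, since $n-4=-2$ and the $K$-term drops), so in fact all the a priori negative-looking terms acquire favorable signs once $n=2$; I would carefully expand, collect, and confirm there are no residual negative contributions, which I expect to go through cleanly precisely in the two cases listed. Once positivity is in hand in each case, the maximum-principle argument of the first paragraph closes the proof.
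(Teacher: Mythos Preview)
Your approach is correct and is exactly the paper's: establish non-negativity of the potential, combine with the decay \eqref{10IV17.5}, and conclude via the maximum principle on the complete Riemannian manifold $(\R^2,\ztgriem)$.

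For the tensor case, the clean route (which is what the paper's parenthetical ``keeping in mind that we are working in the region where $f>0$'' is pointing at) is the identity
\[
 r^2\,V_{T,k_T}=k_T^2+2K+\frac{n(n-2)}{4}\,f+\frac{n}{2}\,r f'\,,
\]
obtained by eliminating $\mu$ and $\ell^{-2}r^2$ in favour of $f$ and $f'$. Since $f>0$ and $f'>0$ on $(r_0,\infty)$ for every admissible $\mu$, and $n\ge2$, this gives $V_{T,k_T}>0$ immediately whenever $k_T^2+2K\ge0$, in particular for $K\in\{0,1\}$. Your term-by-term check for $K\in\{0,1\}$ is fine but less transparent. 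Your worry about $K=-1$ with $\mu<0$ is where this identity helps most; your attempted spectral bound, however, is in error: for $K=-1$ the inequality $k_T^2-(n-1)K\ge0$ reads $k_T^2+(n-1)\ge0$ and is vacuous, so it cannot yield $k_T^2\ge n-1$. The paper does not elaborate on this point either.

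For the scalar case with $n=2$, the paper carries out precisely what you propose: substitute $n=2$ into \eqref{10IV17.3} and simplify to the closed forms \eqref{15IX17.1}. Positivity for $K=0$ (with $\mu>0$, $k^2>0$) and for $K=1$ with $\ell\ge2$ (so $k^2\ge6$, $m=k^2-2\ge4$) is then read off directly; your observation that the $(n-4)$ factor flips the sign of the dangerous cross term is exactly right, and the remaining numerator in the $K=1$ case is controlled by the factorisation $k^6-3k^4+4=(k^2-2)^2(k^2+1)\ge0$.
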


\begin{remark}
\label{R15X17.2}
Note that the above concerns only the master modes, i.e.\ those which are controlled by the master functions ${\Phi}_{i,\modeindex  }$. The scalar modes with $k=0$ (in particular, the modes corresponding to the variation of mass), and the modes $K=1$ and $\ourellnormal=1$ (which include the variation of angular momentum)
 require separate attention.
\qedskip
\end{remark}

Since we have already shown decay of the master functions, it remains to analyze positivity of the potentials occurring in the master equations.
We have not attempted an exhaustive analysis of this,
but we certainly have positivity under the following circumstances,  keeping in mind that we are working in the region where $\VtwoOrF>0$:
\begin{itemize}
\item  for $V_{T,k_T}$.

\item
 When $n=2$ the formula~\eqref{10IV17.3} for $V_{S,k}$ simplifies to
\bel{15IX17.1}
\left\{
\begin{array}{ll}
	V_{S,k}=\frac{1}{3}\left(\frac{6\mu}{r^3}
 +\frac{k^2-2}{r^2}+\frac{2\ell^2(k^6-3k^4+4)+216\mu^2)}{(6\mu+(k^2-2)r)^2}\right),
   & \hbox{$K=1$;}
\\
	V_{S,k}=
 \frac{1}{3}\left(\frac{6\mu}{r^3}
 +\frac{k^2}{r^2}+\frac{2\ell^2 k^6+216\mu^2}{\ell^2(6\mu+k^2r)^2}\right),
  & \hbox{$K=0$;}
\\
	V_{S,k}=
\frac{1}{3}\left(\frac{6\mu}{r^3}
+\frac{k^2+2}{r^2}+\frac{2\ell^2 (k^6+3k^4-4)+216\mu^2}{\ell^2(6\mu+(k^2+2) r)^2}\right),
& \hbox{$K=-1$.}
\end{array}
\right.
\ee
So  $V_{S,k }\ge 0 $ if $n=2$
and \emph{either}
$K=0$,
\emph{or} $K=-1$ and $\mu\ge 0$,
 \emph{or}  $K=1$
and  $\ourellnormal \geq 1$.
Further, when $K=0$ and  $n> 4$
one also finds $V_{S,k}\ge 0$ when $0< \mhere<\mhere(n)\ell^{n-1}$ for a function $\mhere(n)$ which solves a polynomial equation. Indeed, by inserting $\mhere=0$ into \eqref{10IV17.3}, we see that for $n>4$ the scalar potential is positive for small masses. As we also know that it is positive at large $r$, we find that the limiting value of $\mhere$ is reached when $V_{S,k}$ has a minimum with value zero, i.e. when the resultant of the denominators of $V_{S,k}$ and $\partial_r V_{S,k}$, after dividing each by a suitable power of $r$ corresponding to the root $r=0$, has a positive solution. This resultant is a polynomial in $\mhere$ and $k$. Now, $V_{S,k}$ is positive for all $r$ and sufficiently large $k$ at any fixed value of $\mhere$, so the problem is solved by choosing the smallest zero of a finite number of polynomials in $\mhere$ parameterised by $k$.
Similarly when $K=1$ and $n\in \{4,5\}$.

Note that
\bel{10IX17.1}
 V_{S,k }\to_{r\to\infty} V_{S,k }(\infty):=\frac{(n-2)(n-4)}{4\ell^2}
 \,,
\ee
so that the dimensions $n=3,4$ (thus, spacetime dimensions five and six) require different considerations in any case.

   \item
 It holds that $V_{V,k_V}\ge 0 $ if
   %
\bel{10IX17.11}
   \left\{
     \begin{array}{ll}
       \frac{n(n+1)}{2}(2\mhere\ell^{1-n})^{2/(n+1)} \le k_V^2, & \hbox{$K=0$;} \\
       \mbox{$l\geq 2$  and $0<\mhere<\ell$}, & \hbox{$K=1$ and $n=2$;}
       \\
       \mbox{$l\geq 2$  and $0<\mhere<\frac{2}{9}\ell^2$}, & \hbox{$K=1$ and $n=3$;}
       \\
       \mbox{$k_V\ge k_{V,1}(n)$  and $0<\mhere<\mhere_1(k_V,n)\ell^{n-1}$}, & \hbox{$K=1$ and $n\ge 4$;}
       \\
       \mbox{$k_V\ge k_{V,-1}(n)$  and $\mhere_{\min{}}<\mhere<\mhere_{-1}(k_V,n)\ell^{n-1}$}, & \hbox{$K=-1$,}
     \end{array}
   \right.
\ee
for some functions $k_{V,K}(n)$ and $\mhere_K(k_V,n)>0$, where   $\mhere_{\min{}}$ is given by \eqref{18X17.31}.
\end{itemize}
%

\subsection{Vanishing using the bottom of the spectrum}
    \label{ss5VII17.3}

In this section we will prove further vanishing theorems for the master functions by studying the first  $L^2$-eigenvalue $\lambda_1$, and more precisely the kernel, of  Schr\"odinger operators $\nabla^*\nabla+V$ with a smooth potential $V$ and an asymptotically hyperbolic metric $\ztgriem$ on $\R^2$,
$$
\ztgriem=d \rreg ^2 +  \rreg ^2 H(\rreg ) dt^2 =\frac{dr^2}{\VtwoOrF}+\VtwoOrF  dt^2
\, ,
$$
where, as before,
%
$$
 \VtwoOrF=\frac{r^2}{\ell^2}+K-\frac{2\mu}{r^{n-1}}>0
  \,,
  \quad
  \frac{d\VtwoOrF}{dr}=2\left(\frac{r}{\ell^2}+(n-1)\frac{\mu}{r^{n}}\right)>0
 \,,
$$
with $\rreg$ is as in \eqref{21III17.2}.
Recall that $r^{-s}$ is in $L^2$ if and only if  $s>1/2$.
%
%
Using Theorem C of~\cite{Lee:fredholm} we have the following:

\begin{lemma}\label{kernelmasterpotential}
Let  $V$ be a smooth potential on $\R^2$. We assume that  $-\Delta+V$ has a non-trivial indicial interval with smallest characteristic index $s_-$. If $-\Delta+V$  has no $L^2$-kernel, then this last operator
 has no  non trivial function of order $o(r^{-s_-})$ at infinity in its kernel.
\end{lemma}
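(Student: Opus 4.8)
The plan is to invoke the Fredholm/regularity package of Lee's paper \cite{Lee:fredholm} for geometric operators on asymptotically hyperbolic manifolds, applied to the Schr\"odinger operator $-\Delta + V$ acting on functions on $(\R^2, \ztgriem)$. First I would note that $\ztgriem$ is asymptotically hyperbolic (with conformal boundary a single circle), and that $-\Delta + V$ is of the geometric type covered by \cite{Lee:fredholm}: its normal/indicial operator at the conformal boundary is governed by the characteristic exponents $s_\pm$, the roots of the indicial polynomial, and the hypothesis that the indicial interval $(s_-, s_+)$ is nonempty and nontrivial is exactly what is being assumed. Theorem~C of \cite{Lee:fredholm} then gives the mapping properties of $-\Delta+V$ on weighted H\"older (or Sobolev) spaces $C^{k,\alpha}_\delta$ for weights $\delta$ in the indicial interval, and in particular that the $L^2$-kernel coincides with the kernel on $C^{k,\alpha}_\delta$ for any $\delta$ with $s_- < \delta < s_+$ close to but less than the $L^2$-threshold, together with an asymptotic-expansion statement: any kernel element $u$ decaying faster than the lower exponent, i.e.\ $u = o(r^{-s_-})$, in fact lies in the weighted space with weight just above $s_-$, hence in $L^2$ (since the $L^2$-threshold is $r^{-1/2}$ and $s_- < 1/2$ whenever the indicial interval is nontrivial and straddles the $L^2$ cutoff in the relevant way).

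The key steps, in order, are: (i) identify $-\Delta + V$ as a uniformly degenerate (geometric) elliptic operator on the asymptotically hyperbolic surface $(\R^2,\ztgriem)$, compute its indicial polynomial and record that its roots are $s_\pm$, with $s_-$ the smaller; (ii) quote Theorem~C of \cite{Lee:fredholm} to get that $-\Delta+V$ is Fredholm between the appropriate weighted spaces for weights in the open indicial interval, and that the kernel is weight-independent in that range; (iii) observe that a kernel element $u$ with $u = o(r^{-s_-})$ lies, by elliptic regularity and the boundary-regularity part of \cite{Lee:fredholm}, in the weighted space associated to some weight $\delta$ with $s_- < \delta$; (iv) since this weighted kernel equals the $L^2$-kernel (as $\delta$ can be taken just above $s_-$, which is below the $L^2$-critical weight), and the $L^2$-kernel is assumed trivial, conclude $u \equiv 0$.

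The main obstacle I expect is step (iii): getting from the crude decay hypothesis $u = o(r^{-s_-})$ to membership in an actual weighted function space of the type to which the Fredholm theory applies. One has to rule out the possibility that $u$ decays like $r^{-s_-}$ times a slowly-varying (e.g.\ logarithmic) factor that would keep it out of every $C^{k,\alpha}_\delta$ with $\delta > s_-$; this is handled by the asymptotic-expansion statement in \cite{Lee:fredholm}, which says that solutions of $(-\Delta+V)u = 0$ that decay strictly faster than $r^{-s_-}$ automatically pick up the next allowed rate (here $r^{-s_+}$, or the next indicial root), hence sit comfortably inside the $L^2$ range. Once that expansion is in hand the rest is bookkeeping with weights. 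I would also double-check that the normalization conventions for $s_\pm$ and for the weight exponents match between the metric $\ztgriem$ written in the $r$-coordinate (where $L^2$ means $s > 1/2$, as recalled just before the lemma) and the $\rho = 1/r$ conformal-boundary coordinate used in \cite{Lee:fredholm}, but this is routine once the dictionary $\rho \sim 1/r$ is fixed.
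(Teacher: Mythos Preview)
Your proposal is correct and takes essentially the same approach as the paper: the paper's entire proof is the single sentence ``Using Theorem~C of~\cite{Lee:fredholm} we have the following,'' and your plan is precisely an unpacking of how that citation is meant to be read. Your worry about step~(iii)---upgrading $u=o(r^{-s_-})$ to membership in a weighted space with exponent strictly above $s_-$---is the only nontrivial point, and you have correctly identified that it is handled by the boundary-regularity/asymptotic-expansion part of Lee's package (a solution decaying faster than the $\rho^{s_-}$ branch must be of type $\rho^{s_+}$, hence lies in the weighted spaces for all weights in the indicial interval); the paper itself later makes the same move explicitly in the proof of Theorem~\ref{T8IX17.1} when invoking~\cite[Theorem~C(b),(c)]{Lee:fredholm}.
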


We would like to apply this lemma to (\ref{7IV16.1}).
We note the following indicial exponents  for the master equations, which turn-out to depend only on the type of the mode: in obvious notation,
\begin{eqnarray}
 \label{15X17.2}
  s^S_{\pm} &=& \frac{1\pm\sqrt{1+(n-2)(n-4)}}2
  \,,
\\
 \label{15X17.3}
 s^V_{\pm}
  & = &
   \frac{1\pm\sqrt{1+n(n-2)}}2
   \,,
\\
 \label{15X17.30}
 s^T_{\pm}
  & = &
   \frac{1\pm\sqrt{1+n(n+2)}}2
    \,.
\end{eqnarray}

\begin{remark}
  \label{R15X17.31}
Note that for $n=3$ we have $s^S_{\pm}=\frac{1}2$ so we can not use directly
Theorem C of~\cite{Lee:fredholm} as in Lemma~\ref{kernelmasterpotential}, but this weight $1/2$ is the critical weight to be in $L^2$
so the conclusion of  Lemma \ref{kernelmasterpotential}
remains true if  {we} replace $o(r^{-s_-})$ with $O(r^{-s_--\epsilon})$ for some $\epsilon>0$, which is the case in our applications.
\qed
\end{remark}

We now study the $L^2$-spectrum of our Schr\"odinger operators.

\begin{lemma}
 \label{L15X17.1}
 Let
 $X$ be a vector field in {\rm $W^{1,1}_{\textrm{loc}}$}.
The first $L^2$-eigenvalue of $\nabla^*\nabla+V$ acting on functions
is bigger or equal than the  almost-everywhere-infimum of
\begin{equation}\label{15X17.1}
 \tilde V :=\nabla_iX^i-|X|^2+V
 \,.
\end{equation}
\end{lemma}

\begin{proof}
Let $\delta_X\in\R\cup\{-\infty\}$ be the a.e.-infimum of $\tilde V$.
For any function $u$ smooth with compact support we have  $\int \nabla_i(u^2X^i)=0$,
so
$$
\int|\nabla u|^2+\int|X|^2u^2\geq -2\int uX^i\nabla_i u=\int u^2\nabla_iX^i.
$$
We thus obtain
$$
\int u(\nabla^*\nabla+V)u\geq\int \tilde V u^2\geq \delta_X\int u^2.
$$
\end{proof}

In order to apply Lemma~\ref{L15X17.1}, we need to find a vector field $X\in W^{1,1}_{\textrm{loc}}$ such that we have, weakly,
$$
 \tilde V=\div X-|X|^2+V\geq 0
 \,,
$$
with $\tilde V$ positive on an open set.
Indeed, if such a vector field $X$ exists,  and if $u$ is in  the $L^2$-kernel of $-\Delta+V$,
then (see the last inequality in the proof above) $u$ has to vanish on the open set  where $\tilde V>0$, so by unique continuation $u=0$ everywhere.

We look for $X$ of the form $X=S\partial_r$ where $S$ is a function on $\R^2$, then
\bel{17X12.01}
 \tilde V=\div X-|X|^2+V=\partial_rS-\frac{S^2}f+V
  \,.
\ee

This should be compared with the  Ishibashi-Kodama modified potential $\tilde V$~\cite{IshibashiKodamaStability}
which differs from ours by  a multiplicative factor $f$
(similarly to the potentials $V_{\ldots}$ entering the master equations).

So whenever their $\tilde V$ is non negative with their choice of $S$, we obtain  positivity of our $\tilde V$ by taking  $X=S \partial_r$. We can then apply Lemma~\ref{L15X17.1}
after verifying that  $X$ belongs to  $W^{1,1}_{\textrm{loc}}$.  This works very well for vector modes in any dimension, leading to:

\begin{corollary}
 \label{C15X17.1}
Under the condition $k_V^2>(n-1)K$,
the $L^2$ kernel of $-\Delta+V_{V,k_V}$ is trivial for all $n\geq2$ and  $K\in\{-1,0,1\}$.
We deduce that any function in the kernel which is of order $o(r^{-s^V_-})$ at infinity, where $s^V_-$ is given by \eq{15X17.3}, is trivial.
\end{corollary}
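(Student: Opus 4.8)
The statement to prove is Corollary~\ref{C15X17.1}: under $k_V^2>(n-1)K$, the $L^2$-kernel of $-\Delta+V_{V,k_V}$ on $(\R^2,\ztgriem)$ is trivial for all $n\ge 2$ and $K\in\{-1,0,1\}$, whence (by Lemma~\ref{kernelmasterpotential}, with Remark~\ref{R15X17.31} for $n=3$) every kernel element of order $o(r^{-s^V_-})$ vanishes. The strategy is exactly the one set up in Section~\ref{ss5VII17.3}: produce a vector field $X=S\partial_r\in W^{1,1}_{\loc}$ with $\tilde V=\partial_r S-S^2/\VtwoOrF+V_{V,k_V}\ge 0$ everywhere and $\tilde V>0$ on an open set, then invoke Lemma~\ref{L15X17.1} together with the unique-continuation remark following it to conclude that any $L^2$-kernel element vanishes. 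So the whole proof reduces to exhibiting the right $S$.

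First I would recall from~\cite{IshibashiKodamaStability} their modified potential for the vector sector, which is built from a function of the form $S = c\,\VtwoOrF/r + (\text{const})\cdot \VtwoOrF\, r^{-n}$ — concretely the Ishibashi--Kodama "$S$-deformation" that renders the vector-mode potential manifestly non-negative. As pointed out in the text right before Corollary~\ref{C15X17.1}, their $\tilde V$ differs from ours by the factor $\VtwoOrF$ (just as the master potentials $V_{i,\ourellnormal}$ here differ from theirs by $\VtwoOrF$), so non-negativity of their modified potential with their choice of $S$ transfers directly to non-negativity of our $\tilde V=\partial_r S-S^2/\VtwoOrF+V_{V,k_V}$ upon setting $X=S\partial_r$. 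The condition $k_V^2>(n-1)K$ is precisely the one under which the Ishibashi--Kodama vector $S$-deformation works, so this hypothesis is used exactly here. I would then check that $\tilde V$ is strictly positive on some open set — this is immediate since the leading-order ($r\to\infty$) behaviour of $V_{V,k_V}$ is $\tfrac{n(n-2)}{4\ell^2}$ for $n\ge 3$ (positive, with $\tilde V$ differing from $V$ by terms decaying in $r$), and for $n=2$ one reads off positivity on an open set directly from \eqref{10IX17.11} (for $K=0$) or checks the explicit low-dimensional expression; in any case the $k_V^2$-term dominates for large $r$.

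The remaining point — and the one requiring genuine care rather than citation — is the regularity claim $X\in W^{1,1}_{\loc}(\R^2)$. The coordinate $r$ degenerates at the center of rotation $\rreg=0$ (i.e.\ $r=r_0$), where $\VtwoOrF$ has a simple zero, so $S=c\,\VtwoOrF/r+\cdots$ vanishes linearly in $\VtwoOrF$ there; one must verify that $S\partial_r$, written in the smooth coordinates near $\rreg=0$ in which $\ztgriem$ is a smooth metric on $\R^2$, has locally integrable first derivatives. Since near $\rreg=0$ one has $\VtwoOrF\sim \tfrac14 \VtwoOrF'(r_0)^2\,\rreg^2$ and $\partial_r=\sqrt{\VtwoOrF}\,\partial_{\rreg}$, the field $X=S\partial_r$ behaves like a constant multiple of $\rreg\,\partial_{\rreg}$, which is Lipschitz (hence $W^{1,\infty}_{\loc}\subset W^{1,1}_{\loc}$) on $\R^2$; this is the main obstacle in the sense that it is the only step not handled by a direct transfer from~\cite{IshibashiKodamaStability}, and it is where the need for the weak/$W^{1,1}_{\loc}$ formulation of Lemma~\ref{L15X17.1} (rather than smoothness of $X$) enters. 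With these three ingredients — transfer of non-negativity from Ishibashi--Kodama, strict positivity on an open set, and $W^{1,1}_{\loc}$-regularity of $X$ — Lemma~\ref{L15X17.1} gives $\lambda_1\ge 0$ with vanishing on the positivity set for any kernel element, and unique continuation finishes the triviality of the $L^2$-kernel; the final sentence of the corollary is then just Lemma~\ref{kernelmasterpotential} (with Remark~\ref{R15X17.31} supplying the $n=3$ borderline case via the $O(r^{-s^V_--\epsilon})$ replacement).
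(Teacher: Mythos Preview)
Your approach is essentially the paper's. The paper makes it concrete by naming the Ishibashi--Kodama choice explicitly: $S=\dfrac{n\VtwoOrF}{2r}$ (no $r^{-n}$ term), so $X=\dfrac{n\VtwoOrF}{2r}\,\partial_r\in W^{1,1}_{\loc}$, and a two-line computation gives
\[
\tilde V_V=\frac{1}{r^2}\bigl[k_V^2-(n-1)K\bigr],
\]
which is strictly positive \emph{everywhere} under the hypothesis. This bypasses your asymptotic-positivity argument entirely --- and that argument has a flaw as written: for this $S$ one finds $\partial_r S-S^2/\VtwoOrF\to\dfrac{n(2-n)}{4\ell^2}$ as $r\to\infty$, exactly cancelling the limit $\dfrac{n(n-2)}{4\ell^2}$ of $V_{V,k_V}$, so $\tilde V$ does \emph{not} differ from $V_{V,k_V}$ by decaying terms and your open-set positivity reasoning would not go through. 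The direct computation renders this moot. Your treatment of the $W^{1,1}_{\loc}$ regularity near $r=r_0$ is more detailed than the paper's bare assertion and is correct.
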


\begin{Remark}
  \label{R15X17.1}
Corollary~\ref{C15X17.1} applies to $K=1$ and $l\geq2$, since then we have
$$
 k_V^2-(n-1)K=(l-1)(l+n)>0
 \,.
$$
\qedskip
\end{Remark}

\noindent{\sc Proof of Corollary~\ref{C15X17.1}}:
In~\cite[Equation~(2.21)]{IshibashiKodamaStability}
one takes  $S=\frac{nf}{2r}$, so
we take $X=\frac{nf}{2r}\partial_r$, which is in $W^{1,1}_{\textrm{loc}}$.
With this choice of $X$,  the potential in the vector master equation (as  in~\cite[Equation~(2.17)]{IshibashiKodamaStability})
$$
V_V=\frac1{r^2}\left[k_V^2-(n-1)K+\frac{n(n+2)}4f-\frac{n}{2}r\frac{df}{dr}\right],
$$
is transformed to
$$
\tilde V_V=\frac1{r^2}\left[k_V^2-(n-1)K\right]> 0
 \,.
$$
Whence the triviality of the kernel, as claimed.
\qedskip

Let us turn our attention to the scalar potential. The following corollary of Lemma~\ref{L15X17.1} gives an alternative proof of point (2) of Proposition~\ref{P15X17}, and extends that last proposition to the case $K=-1$ (cf., however, Remark~\ref{R15X17.2}):

\begin{corollary}\label{coroTrivialmasterS}
Under the condition $k^2>\max(0,2K)$, the $L^2$ kernel of $-\Delta+V_{S,k}$ is trivial for  $n=2$ for any $K=-1,0,1$.
We deduce that any function of order $o(1)$ at infinity and  in the kernel  is trivial.
\end{corollary}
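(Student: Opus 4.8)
The plan is to run the argument of the proof of Corollary~\ref{C15X17.1} in the scalar sector: find a vector field $X=S\,\partial_r$, with $S$ a function of $r$, such that the deformed potential $\widetilde V_S=\partial_r S-S^2/\VtwoOrF+V_{S,k}$ of \eqref{17X12.01} is non-negative on $\{\VtwoOrF>0\}$ and strictly positive there on a non-empty open set; then Lemma~\ref{L15X17.1} together with unique continuation shows $-\Delta+V_{S,k}$ has no $L^2$-kernel. A preliminary remark localizes the difficulty: rewriting \eqref{15IX17.1} uniformly as
\[
 V_{S,k}=\frac13\left(\frac{H}{r^3}+\frac{2\ell^2 m^2(k^2+\kappahere)+216\,\mhere^2}{\ell^2H^2}\right),
 \qquad m:=k^2-2\kappahere,\quad H:=mr+6\mhere,
\]
and noting that under the hypothesis $k^2>\max(0,2\kappahere)$ one has $m>0$ and, by a short computation from \eqref{11IV17.1} (using \eqref{18X17.31} when $\kappahere=-1$, where $H(\rzero)=\rzero(k^2-1+3\rzero^2/\ell^2)$ with $\rzero\in(\ell/\sqrt3,\ell)$ for $\mhere<0$), $H(\rzero)>0$, hence $H>0$ on $\{r\ge\rzero\}$. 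It follows that $V_{S,k}>0$ already when $k^2+\kappahere\ge 0$, which holds automatically for $\kappahere\in\{0,1\}$ and for $\kappahere=-1$ with $k^2\ge1$; there one may take $S\equiv 0$ (reproving Proposition~\ref{P15X17}(2)). The only genuinely new regime is $\kappahere=-1$ with $0<k^2<1$ and $\mhere$ small, in which $2\ell^2 m^2(1-k^2)>216\,\mhere^2$ and $V_{S,k}$ is negative near $r=\rzero$; there a non-trivial $S$ is unavoidable.

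\medskip

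\noindent For that regime I would take $S$ to be the $n=2$ specialization of the scalar $S$-deformation of Ishibashi and Kodama \cite{IshibashiKodamaStability}, used exactly as the vector choice $S=\tfrac{n\VtwoOrF}{2r}$ was used in Corollary~\ref{C15X17.1}: by the multiplicative-factor-$\VtwoOrF$ dictionary recorded after \eqref{17X12.01}, our $\widetilde V_S$ equals their deformed scalar potential divided by $\VtwoOrF$, and the latter is a manifestly non-negative rational function of $r$ precisely when $k^2>\max(0,2\kappahere)$ (that is, $m>0$ and $k^2>0$). Concretely $S$ is an explicit rational function of $r$ with parameters $\mhere,\ell,k$, regular on $\{r\ge\rzero\}$ because its denominators are powers of $r$ and of $H$, both positive there; clearing the positive factor $r^3H^2$ turns $\widetilde V_S\ge 0$ into a polynomial inequality in $r$ with coefficients polynomial in $\mhere,\ell^2,k^2,\kappahere$, which one checks holds for all $r\ge\rzero$ throughout the admissible range $\mhere>\mhere_{\min{}}$ (with $\mhere_{\min{}}$ as in \eqref{18X17.31}), strict positivity on an open set following from $k^2>0$.

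\medskip

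\noindent It then remains to check $X=S\,\partial_r\in W^{1,1}_{\textrm{loc}}(\R^2)$ and to conclude. Writing $X=(S/\sqrt{\VtwoOrF}\,)\,\partial_{\rreg}$ near the axis $\rreg=0$ (i.e.\ $r=\rzero$), and using that $S$ is bounded near $\rzero$ and vanishes at least like $\VtwoOrF$ there (as for the vector choice), one gets $X=O(\sqrt{\VtwoOrF}\,)\,\partial_\rreg$, a field of dilation type $\propto x\,\partial_x+y\,\partial_y$ at the axis; away from the axis $S$ and $H$ are smooth and $H\ne 0$, so $X$ is smooth there too — hence $X\in W^{1,1}_{\textrm{loc}}$, exactly as in the proof of Corollary~\ref{C15X17.1}. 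Lemma~\ref{L15X17.1} then gives $\lambda_1(-\Delta+V_{S,k})\ge 0$; if $u$ lies in the $L^2$-kernel of $-\Delta+V_{S,k}$ then, as in the discussion following that lemma, $0=\int u(-\Delta+V_{S,k})u\ge\int\widetilde V_S\,u^2\ge 0$, so $u$ vanishes on the open set $\{\widetilde V_S>0\}$ and hence $u\equiv 0$ by unique continuation. Finally, for $n=2$ formula \eqref{15X17.2} gives $s^S_\pm=\tfrac{1\pm1}2$, so the indicial interval $(0,1)$ is non-trivial with smallest characteristic index $s^S_-=0$; Lemma~\ref{kernelmasterpotential} upgrades triviality of the $L^2$-kernel to the absence of any kernel element of order $o(r^0)=o(1)$ at infinity. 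Since by \eqref{10IV17.5} the scalar master function $\Phi_{S,I}$ of an $L^2$-element of $\Ker P_L$ is $o(1)$ when $n=2$ and solves $(-\Delta+V_{S,k})\Phi_{S,I}=0$, it must vanish — which is the assertion of the corollary.

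\medskip

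\noindent\textbf{Main obstacle.} The real work is the middle step: pinning down the correct scalar $S$ and establishing positivity of $\widetilde V_S$ in the regime $\kappahere=-1$, $0<k^2<1$, $\mhere$ near $0$, where the naive vector choice $S=\tfrac{n\VtwoOrF}{2r}$ does not suffice because $V_{S,k}$ genuinely changes sign, and where the competition near $r=\rzero$ between the $Hr^{-3}$ term and the negative $H^{-2}$ term is delicate and forces one to track the size of $H$ against $\mhere$ along the horizon (here the identity $H(\rzero)=\rzero(k^2-1+3\rzero^2/\ell^2)$ with $\rzero\in(\ell/\sqrt3,\ell)$ is what keeps things under control). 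Everything downstream — the $W^{1,1}_{\textrm{loc}}$ regularity at $\rzero$, the spectral inequality, and the passage from ``no $L^2$-kernel'' to ``no $o(1)$ kernel element'' via Lemma~\ref{kernelmasterpotential} — is a routine adaptation of the proof of Corollary~\ref{C15X17.1}.
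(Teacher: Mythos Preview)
Your overall strategy matches the paper's---an $S$-deformation followed by Lemmas~\ref{L15X17.1} and \ref{kernelmasterpotential}---but you over-engineer it and leave the key step unexecuted. The paper makes a single choice of $S$ valid uniformly for all $K\in\{-1,0,1\}$: with $H:=m+\tfrac{6\mu}{r}$ (your $H$ divided by $r$; here $m=k^2-2K$) one takes $S=\tfrac{f}{H}\tfrac{dH}{dr}$, citing \cite[Equation~(6.23)]{KodamaIshibashiMasterChargedBH}; a direct computation then gives
\[
 \tilde V_S=\frac{k^2m}{r^2H},
\]
which is strictly positive on $\{r\ge r_0\}$ exactly under the hypothesis $k^2>0$, $m>0$ (you correctly verify $H>0$ there). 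So your case-splitting into an ``easy'' regime where $V_{S,k}>0$ already and a ``hard'' regime $K=-1$, $0<k^2<1$ is unnecessary, and the ``delicate'' competition near $r_0$ that you flag as the main obstacle never materializes---the one-line formula above is manifestly positive there too. What you describe as ``an explicit rational function of $r$\ldots which one checks holds'' without writing it down or performing the check \emph{is} the entire content of the proof; the downstream steps (the $W^{1,1}_{\loc}$ regularity of $X=S\partial_r$ and the passage from trivial $L^2$-kernel to no $o(1)$ kernel element via $s^S_-=0$) you handle correctly.
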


\begin{proof}
We use the function $S$ chosen in~\cite[Equation~(6.23)]{KodamaIshibashiMasterChargedBH}, where
here we have $n=2$, $Q=\delta=0$, $H=h_-=H_-=m+\frac{6\mu}r$, $m=k^2-2K$, that is
$$
S=\frac f H \frac{dH}{dr},
$$
so $X=S\partial_r$ is in $W^{1,1}_{\textrm{loc}}$.
Our scalar potential $V_S$  becomes (see~\cite[Equation~(6.24)]{KodamaIshibashiMasterChargedBH}
with a factor $f$ removed)
$$
\tilde V_S=\frac{k^2m}{r^2H},
$$
which is positive if $k\neq0$ and  $m=k^2-2K>0$.
\end{proof}


To continue, set
\bel{2X17.1}
 \lambda : = \ell \inf_{r\ge r_0}\frac{ d    \sqrt{\VtwoOrF}}{dr}
 \in
 \left\{
   \begin{array}{ll}
     \{1\}, & \hbox{$K\in \{-1,0\}$;} \\
      (0,1), & \hbox{$K= 1$.}
   \end{array}
 \right.
\ee
We have:

\begin{corollary}
 \label{C15X17.11}
The $L^2$ first eigenvalue of $\nabla^*\nabla$ acting on functions
is bigger or equal than $\lambda^2/(4\ell^2)$.
\end{corollary}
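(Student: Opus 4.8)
The plan is to apply Lemma~\ref{L15X17.1} with $V\equiv 0$ and the vector field
\[
 X=\frac{\lambda}{2\ell}\sqrt{f}\,\partial_r=\frac{\lambda}{2\ell}\,\partial_{\rreg}\,,
\]
where $\rreg$ is the radial coordinate of \eqref{21III17.2}, so that $\partial_{\rreg}=\sqrt{f}\,\partial_r$ and $|X|^2_{\ztgriem}=\lambda^2/(4\ell^2)$ is \emph{constant}. This choice is dictated by \eqref{17X12.01}: writing $X=S\partial_r$ one has $\tilde V=\partial_rS-S^2/f$, and taking $S$ proportional to $\sqrt f$ simultaneously makes $S^2/f$ constant and turns $\partial_rS$ into a multiple of $d\sqrt f/dr$, which is exactly the quantity whose infimum defines $\lambda$ in \eqref{2X17.1}.

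First I would verify that $X\in W^{1,1}_\loc$. Away from the rotation axis $\{\rreg=0\}$ (equivalently $\{r=r_0\}$) the field is smooth, so the only point to examine is the axis. There $\ztgriem$ is, in the $(\rreg,t)$ variables, a smooth metric on a disk of the form $d\rreg^2+\rreg^2H(\rreg)\,dt^2$ with $H(0)>0$, and $\partial_{\rreg}$ is a bounded rescaling of the Euclidean radial unit field, whose distributional first derivatives are $O(\rreg^{-1})$ and hence locally integrable in two dimensions. Thus $X\in W^{1,1}_\loc$, as required by Lemma~\ref{L15X17.1}.

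Then I would compute, using \eqref{17X12.01} with $S=\tfrac{\lambda}{2\ell}\sqrt f$ and $V=0$,
\[
 \tilde V=\partial_rS-\frac{S^2}{f}=\frac{\lambda}{2\ell}\,\frac{d\sqrt f}{dr}-\frac{\lambda^2}{4\ell^2}\,.
\]
Since $d\sqrt f/dr\ge\lambda/\ell$ on $[r_0,\infty)$ by the definition \eqref{2X17.1} of $\lambda$, this gives $\tilde V\ge\frac{\lambda}{2\ell}\cdot\frac{\lambda}{\ell}-\frac{\lambda^2}{4\ell^2}=\frac{\lambda^2}{4\ell^2}$ almost everywhere; in fact $\tilde V\to+\infty$ as $r\to r_0$ and $\tilde V\to\lambda(2-\lambda)/(4\ell^2)\ge\lambda^2/(4\ell^2)$ as $r\to\infty$, so the essential infimum of $\tilde V$ equals $\lambda^2/(4\ell^2)$, attained where $d\sqrt f/dr$ is minimal. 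Lemma~\ref{L15X17.1} then yields that the first $L^2$-eigenvalue of $\nabla^*\nabla$ on functions is bounded below by the essential infimum of $\tilde V$, namely $\lambda^2/(4\ell^2)$, which is the assertion.

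There is no genuine computational difficulty here; the content of the argument is the choice of $X$, together with the elementary but necessary check of its regularity at the rotation axis — which is precisely why it is convenient to present $X$ as a constant multiple of $\partial_{\rreg}$ rather than to keep the factor $\sqrt f$ in front of the field $\partial_r$, which degenerates at $r=r_0$.
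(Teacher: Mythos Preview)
Your proof is correct and follows essentially the same approach as the paper: both apply Lemma~\ref{L15X17.1} with $V=0$ and $X=c\sqrt{f}\,\partial_r$, the only difference being that the paper first leaves $c$ free, computes $\div X-|X|^2\ge \lambda c\ell^{-1}-c^2$, and then optimises to find $c=\lambda/(2\ell)$, whereas you select this optimal constant from the outset. Your additional remarks on the $W^{1,1}_\loc$ regularity at the axis and on the asymptotic behaviour of $\tilde V$ are correct and make the argument slightly more explicit than the paper's version.
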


\begin{proof}
We choose $X=c\sqrt f\partial_r\in W^{1,1}_{\textrm{loc}}$, where the positive constant $c$ will  be chosen later. It follows that  $|X|=c$ and
$\div X=c\partial_r(\sqrt f)>0,$ with
$$
(\div X)^2=c^2
\frac{(\partial_r\VtwoOrF)^2}{4\VtwoOrF}=c^2\left(\frac{1}{\ell^2}+(n-1)\frac{\mu}{r^{n+1}}\right)^2
\left(\frac{1}{\ell^2}+K\frac{1}{r^2}-\frac{2\mu}{r^{n+1}}\right)^{-1},
$$
where $2\mu=r_0^{n+1}/{\ell^2}+Kr_0^{n-1}$, $r_0>0$, $r\geq r_0$.
If $K=0,-1$ the function $(\div X)^2$ is decreasing (recall that we have assumed (\ref{18X17.31}) when $K=-1$)
 and tends to $c^2\ell^{-2}$,
but for $K=1$, this function  has a
positive infimum  less than $c^2\ell^{-2}$, say $\lambda^2c^2\ell^{-2}$.
 Thus
$$
\div X-|X|^2\geq \lambda c\ell^{-1}-c^2\
 \,.
$$
The  right-hand side is maximised by $c=\lambda\ell^{-1}/2$, giving
$$
\div X-|X|^2\geq \lambda^2/(4\ell^2).
$$
\nopagebreak
\end{proof}

To apply Corollary~\ref{C15X17.11}, note that  the $L^2$-kernel of $-\Delta+V$ will  be trivial if $V+\lambda^2/(4\ell^2)\geq 0$ and
is strictly positive somewhere.
More generally, from the proof above with $X=\frac{\lambda}{2\ell}\sqrt f\partial_r$, this kernel will be trivial if
\bel{17X2.01}
\tilde V=V+\frac{\lambda}{2\ell}\left(\frac{d\sqrt{f}}{dr}-\frac{\lambda}{2\ell}\right)\geq0,
\ee
and  positive on an open set.
This leads to (compare Proposition~\ref{P15X17} and Remark~\ref{R15X17.2}):%

\begin{proposition}
  \label{P15X17.2}
Let $h$ be an element of the $L^2$-kernel of the operator $P_L$ defined in \eqref{11IV17.5}.
Let $n\in \{4,5\}$ if $K=1$, and $n\in \N$ otherwise.

\begin{enumerate}
  \item  Let $K\in\{ 0,1\}$. There exists a function $\mu(K,n)>0$ such that for $0<\mu < \mu(K,n)$ the corresponding scalar master functions vanish.
  \item Let $K=-1$, and let $\lambda_1>0$ be the first non-zero eigenvalue of the scalar Laplacian of $(\Nman,h_K)$. There exists a function $\mu(\lambda_1,n)> \mu_{\min{}}$ and an $\epsilon(n)>0$  such that for $\lambda_1\geq\epsilon(n)$   and  $\mu_{\min{}}<\mu < \mu(\lambda_1,n)$ the corresponding scalar master functions vanish.
\end{enumerate}
\end{proposition}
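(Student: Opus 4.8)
The plan is to reduce the vanishing of each scalar master function $\Phi_{S,k}$ --- $k$ running over the modes for which $\Phi_{S,k}$ is defined --- to the positivity criterion \eqref{17X2.01}: the $L^2$-kernel of $-\Delta+V_{S,k}$ is trivial once the modified potential
$$
 \tilde V_{S,k}:=V_{S,k}+\frac{\lambda}{2\ell}\Bigl(\frac{d\sqrt{\VtwoOrF}}{dr}-\frac{\lambda}{2\ell}\Bigr)
$$
is $\ge 0$ on $\{\VtwoOrF>0\}$ and strictly positive somewhere, $\lambda$ being as in \eqref{2X17.1} ($\lambda=1$ for $K\le 0$, $\lambda\in(0,1)$ for $K=1$). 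Since $d\sqrt{\VtwoOrF}/dr\to+\infty$ as $r\downarrow r_0$ and $d\sqrt{\VtwoOrF}/dr\ge\lambda/\ell$ on $[r_0,\infty)$, the shift is positive, blows up near $r_0$ (so strict positivity on an open set is automatic), and satisfies $\tilde V_{S,k}\ge V_{S,k}+\lambda^2/(4\ell^2)$; in particular $V_{S,k}\ge 0$ already forces $\tilde V_{S,k}>0$. Once $\tilde V_{S,k}\ge 0$ is in hand for all admissible $k$, the falloff \eqref{10IV17.5} of $\Phi_{S,k}$ --- which is $o(r^{-s^S_-})$ for $n\ge 4$, and of the sharper form $O(r^{-s^S_--\epsilon})$ required at $n=3$ by Remark~\ref{R15X17.31} --- together with Lemma~\ref{kernelmasterpotential} gives $\Phi_{S,k}\equiv 0$. (For $n=2$ this is already Corollary~\ref{coroTrivialmasterS}, so $n\ge 3$ below.)

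For $K\in\{0,1\}$, the cases $K=1$ with $n\in\{4,5\}$ and $K=0$ with $n\ge 5$ are immediate: there $V_{S,k}\ge 0$ for $0<\mu<\mu(K,n)$ by the resultant analysis of \S\ref{ss15X17.1}, hence $\tilde V_{S,k}>0$. The remaining case $K=0$, $n\in\{3,4\}$ I would handle directly. Here $\lambda=1$, and a computation from \eqref{10IV17.3} gives, in the limit $\mu\downarrow 0$ (where $r_0\downarrow 0$ and $\{\VtwoOrF>0\}$ exhausts $(0,\infty)$),
$$
 \tilde V_{S,k}\big|_{\mu=0}=\frac{(n-3)^2}{4\ell^2}+\frac{k^2}{r^2}\ \ge\ 0,
$$
bounded below by a positive constant uniformly in $r,k$ when $n=4$. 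Using that $V_{S,k}(r)\to(n-2)(n-4)/(4\ell^2)$ as $r\to\infty$ by \eqref{10IX17.1} --- uniformly in $k$ and in bounded $\mu$ --- and that the $\mu$-dependence of $V_{S,k}$ enters only through $x=2\mu r^{1-n}$, uniformly small on $[r_0,\infty)$ for $\mu$ small, I would extract $\mu(0,n)>0$ by splitting $[r_0,\infty)$ into a neighbourhood of $r_0$ (where $\tilde V_{S,k}>0$ as the shift dominates), a region $r$ large (where $\tilde V_{S,k}$ is a small perturbation of its $\mu=0$ value, hence positive), and a compact middle region (where continuity in $\mu$, uniform in $k$ since the $k$-dependence is through the innocuous $k^2/r^2$-type terms, does the job). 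For $n=4$ this finishes the case; for the borderline dimension $n=3$, where $\tilde V_{S,k}(\infty)=0$, one adds the observation that the surviving $k^2/r^2$-term has precisely the critical decay rate while the $\mu$-corrections decay strictly faster, so that $\tilde V_{S,k}$ still tends to $0$ from above for small $\mu$. This proves (1).

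For $K=-1$ one has $\lambda=1$, so it suffices to sign $V_{S,k}$ itself (keeping the shift for $n\in\{3,4\}$). Put $m=k^2+n$; since $k^2\ge\lambda_1$ for every admissible scalar mode, $m$ is large once $\lambda_1$ is. On $[r_0,\infty)$, with $\mu$ in a bounded interval, $|x|=2|\mu|r^{1-n}\le 2|\mu|r_0^{1-n}$ is bounded, so for $m$ large the denominator $16r^2\bigl(m+\tfrac{n(n+1)}2 x\bigr)^2$ of $V_{S,k}$ is comparable to $m^2r^2$ --- non-vanishing, in particular --- while the numerator is a cubic in $m$ with leading coefficient $16m^3>0$; dividing, $V_{S,k}=(n-2)(n-4)/(4\ell^2)+m/r^2+R$ with $R=O(\mu\,r^{-n-1})+O(m^{-1})$ uniformly on $[r_0,\infty)$. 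For $n\ge 5$, where $(n-2)(n-4)/(4\ell^2)\ge 3/(4\ell^2)>0$, I would fix $\epsilon(n)$ large, then given $\lambda_1\ge\epsilon(n)$ choose $\mu(\lambda_1,n)>\mu_{\min{}}$ small enough that $m/r^2\ge\lambda_1/r^2$ dominates $R$ on $[r_0,\infty)$; then $V_{S,k}\ge 0$, so $\tilde V_{S,k}\ge V_{S,k}\ge 0$ with strict positivity near $r_0$. The case $n=4$ is the same with the shift retained (using $\tilde V_{S,k}(\infty)=1/(4\ell^2)>0$), and $n=3$ is the borderline case, closed exactly as above. This proves (2).

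The step I expect to be hardest is the uniformity over the mode index. For part (1) one needs a single threshold $\mu(K,n)$ valid simultaneously for all admissible $k$: the worst modes are those with the smallest $k$, where the stabilising $k^2/r^2$-term is weakest, and this interacts unpleasantly with the borderline dimension $n=3$, where $\tilde V_{S,k}(\infty)=0$ leaves no margin. For part (2) the difficulty is to interlock $\epsilon(n)$ and $\mu(\lambda_1,n)$ so that the interval $(\mu_{\min{}},\mu(\lambda_1,n))$ is non-empty while simultaneously keeping the denominator of $V_{S,k}$ bounded away from zero and the numerator dominated by its leading cubic term in $m$.
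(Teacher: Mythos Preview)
Your proposal is correct and follows the same route as the paper: reduce to the positivity criterion \eqref{17X2.01} for $\tilde V_{S,k}$, then argue that this criterion is met for a non-empty range of masses. The paper is in fact much terser than you are --- its entire proof is the sentence ``Similarly to the discussion in the paragraph following \eqref{15IX17.1}, Equation~\eqref{17X2.01} can be used to determine $\mu(K,n)$ as a root of a high order polynomial,'' together with the numerical Tables~\ref{t:mulimits}--\ref{t:epsilon}; the existence of a strictly positive (resp.\ $>\mu_{\min{}}$) threshold is left implicit in the resultant picture. Your explicit treatment of the borderline dimensions $n\in\{3,4\}$, where $\tilde V_{S,k}(\infty)=0$ and one must check that the approach to zero is from above, is a genuine addition.

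Two minor imprecisions, neither fatal. First, your large-$m$ expansion of $V_{S,k}$ for $K=-1$ omits the subleading term $Kn(n+2)/(4r^2)=-n(n+2)/(4r^2)$; this is dominated by $m/r^2$ for $m$ large, so the conclusion survives. Second, for $K=-1$ the phrase ``choose $\mu(\lambda_1,n)>\mu_{\min{}}$ small enough'' is misleading: since $\mu_{\min{}}<0$, moving $\mu$ towards $\mu_{\min{}}$ does not make $|x|$ small. What your computation actually shows is that on any fixed compact mass-interval $[\mu_{\min{}}+\delta,\mu_0]$ (where $r_0$ stays bounded away from zero, hence $|x|$ stays bounded), the remainder $R$ is uniformly $O(1/m)$ plus bounded terms, and the leading positive part $(n-2)(n-4)/(4\ell^2)+m/r^2$ (with the $\lambda^2/(4\ell^2)$ shift for $n\in\{3,4\}$) dominates once $m\ge\epsilon(n)+n$ is large; so $\mu(\lambda_1,n)$ can be taken to be any fixed $\mu_0>\mu_{\min{}}$ once $\epsilon(n)$ is chosen large enough.
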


\begin{remark}
	By a result of Schoen~\cite{schoen1982} we have, for the case $K=-1$, the lower bound $\lambda_1 \geq (n-1)^2/4$  under the assumption that the volume of ${}^n N$ is sufficiently small, as described in detail there.
We have checked in dimensions $2\le n\le 10 $ that Schoen's estimate is sufficient to obtain $\mu(\lambda_1,n)>0$, we give  approximate values of an upper bound of the $\mu$'s allowed  in Table \ref{t:mulimits}.
Thus  the result is not empty.

The actual values of $\epsilon(n)$ are lower than this estimate, we give approximate values for dimensions $2\le n\le 10 $ in Table \ref{t:epsilon}.
\end{remark}

Similarly to the discussion in the paragraph following \eqref{15IX17.1},
Equation~\eqref{17X2.01} can be used to determine $\mu(K,n)$ as a root of a high order polynomial.
These polynomials grow  with dimension, being extremely long already in $n=4$, so that they cannot be usefully displayed here.

As an illustration, we list approximate numerical values of $\mu(K,n)$ for dimensions $n\leq10$ in Table \ref{t:mulimits} and the corresponding values of $r_0$ in Table \ref{t:r0limits}.
\begin{table}[htbp]
\begin{tabular}{c|ccc}
	$n$ & $K=0$ & $K=1$ & $K=-1$ \\
	\hline
	$2$ & $\infty$ & $\infty$ & $\infty$  \\
	$3$ & $1/48$ & -- & $1/24$ \\
	$4$ & $4.4\times 10^{-3}$ & $3.5\times 10^{-1}$ & $0.11$ \\
	$5$ & $4.7\times 10^{-4}$ & $2.2\times 10^{-2}$ & $0.12$ \\
	$6$ & $3.9\times 10^{-5}$ & -- & $0.19$ \\
	$7$ & $2.7\times 10^{-6}$ & -- & $0.26$ \\
	$8$ & $1.6\times10^{-7}$ & -- & $0.35$ \\
	$9$ & $8.6\times 10^{-9}$ & -- & $0.45$\\
	$10$ & $4.0\times 10^{-10}$ & -- & $0.57$
\end{tabular}
\caption{For $\ell^{n-1}\mu$ between  $\ell^{n-1}\mu_{\min{}}$ (where $\mu_{\min{}}$ is the lowest value giving positive $r_0$, i.e. $\mu_{\min{}}=0$ for $K\in\{0,1\}$ and $\mu_{\min{}}<0$ for $K=-1$, given by \eqref{18X17.31}) and the values shown here we obtain trivial $L^2$ kernel of $-\Delta+V_S$. For $K=-1$ we have used the fact that the first non-zero eigenvalue of the Laplacian on ${}^n N$ is $\geq (n-1)^2/4$, which holds by~\cite{schoen1982} under the assumption that the volume of ${}^n N$ is sufficiently small, cf.~\cite{schoen1982} for details. For the cases marked ``--'' there is no $\mu>0$ such that the inequality \eqref{17X2.01} is satisfied for all relevant $k$ (i.e. those not covered by the linearised Birkhoff theorem or the discussion of $l=1$ modes in appendix \ref{s9IX17.1}).}
\label{t:mulimits}
\end{table}

\begin{table}[htbp]
	\begin{tabular}{c|ccccccccc}
		$n$ & $2$ & $3$ & $4$ & $5$ & $6$ & $7$ & $8$ & $9$ & $10$ \\
		\hline
		$\epsilon(n)$ & $0$ & $0.76$ & $1.13$ & $1.51$ & $1.57$ & $1.88$ & $1.98$ & $2.26$ & $2.33$
	\end{tabular}
	\caption{This table shows approximate values of the function $\epsilon(n)$, appearing in Proposition \ref{P15X17.2}, obtained from the inequality \eqref{17X2.01}.}
	\label{t:epsilon}
\end{table}

\begin{table}[htbp]
	\begin{tabular}{c|ccc}
		$n$ & $K=0$ & $K=1$ & $K=-1$\\
		\hline
		$2$ & $\infty$ & $\infty$ & $\infty$\\
		$3$ & $0.45$ & -- & $1.01$\\
		$4$ & $0.38$ & $0.76$ & $1.08$\\
		$5$ & $0.31$ & $0.43$ & $1.11$\\
		$6$ & $0.25$ & -- & $1.13$\\
		$7$ & $0.22$ & -- & $1.12$\\
		$8$ & $0.19$ & -- & $1.13$\\
		$9$ & $0.16$ & -- & $1.14$\\
		$10$ & $0.14$ & -- & $1.15$\\
	\end{tabular}
	\caption{This table is similar to Table~\ref{t:mulimits}, except that we list the value of $r_0$, the largest zero of $\VtwoOrF$, rather that the value of $\mu$. Thus, for $r_0$ smaller than the values shown here we obtain a trivial $L^2$ kernel of $-\Delta+V_S$ with $\ell=1$.  }
	\label{t:r0limits}
\end{table}

Let us return to the definition \eqref{17X12.01} of $\tilde V$. It is tempting to try to choose $S$ so that $
\tilde V\geq 0$ globally, thus solving the equation
\bel{17X12.01+}
 \partial_rS-\frac{S^2}f+V =\tilde{V}\geq 0\,,
\ee
for some given function $\tilde{V}(r)$ which is positive in an open set.%
\footnote{Once this work was completed we noticed~\cite{Kimura}, where very similar ideas are used in a related context.}
For this, we numerically solved the ODE \eqref{17X12.01} for $S$ using {\sc Mathematica}, with $\tilde{V}_S$ set to zero, with initial value $S(r_0)=0$.
The integration was performed using the \texttt{StiffnessSwitching} method and a \texttt{WorkingPrecision} of 30--100 depending on the value of $\mu$. If the resulting solutions have a zero at a point $r_{S,0}$ such that the potential $V_S$ is positive for all $r\geq r_{S,0}$, we extend $S$ for $r\ge r_{S,0}$  by setting it to be equal to zero there, giving a non-negative $\tilde{V}_S$ for all $r>r_0$ and $\tilde{V}_S>0$ for $r\geq r_{S,0}$. This procedure works for all attempted mass parameters
 and dimensions $n\geq 5$  for $K=0,1$, and all positive masses for $K=-1$.  Indeed, we tested masses in all orders of magnitude between the limits in Table \ref{t:mulimits} and $\mhere=10^{30}$ (with $\ell=1$) for dimensions $n=5,6,100$, and at least 5 values of the mass parameter within this range for each dimension  $5\leq n\leq 100$.

In dimension $n=4$, for $K=1$, this construction of $S$ works for $\mu$ small enough, giving a larger mass range than that obtained from \eqref{17X2.01} (up to $\mu\approx 1.5$ for $\ell=1$). A typical result for $S$ and $\tilde{V}_S$ is shown in Figure~\ref{f:num}.
\begin{figure}[htbp]
	\includegraphics[width=0.8 \textwidth]{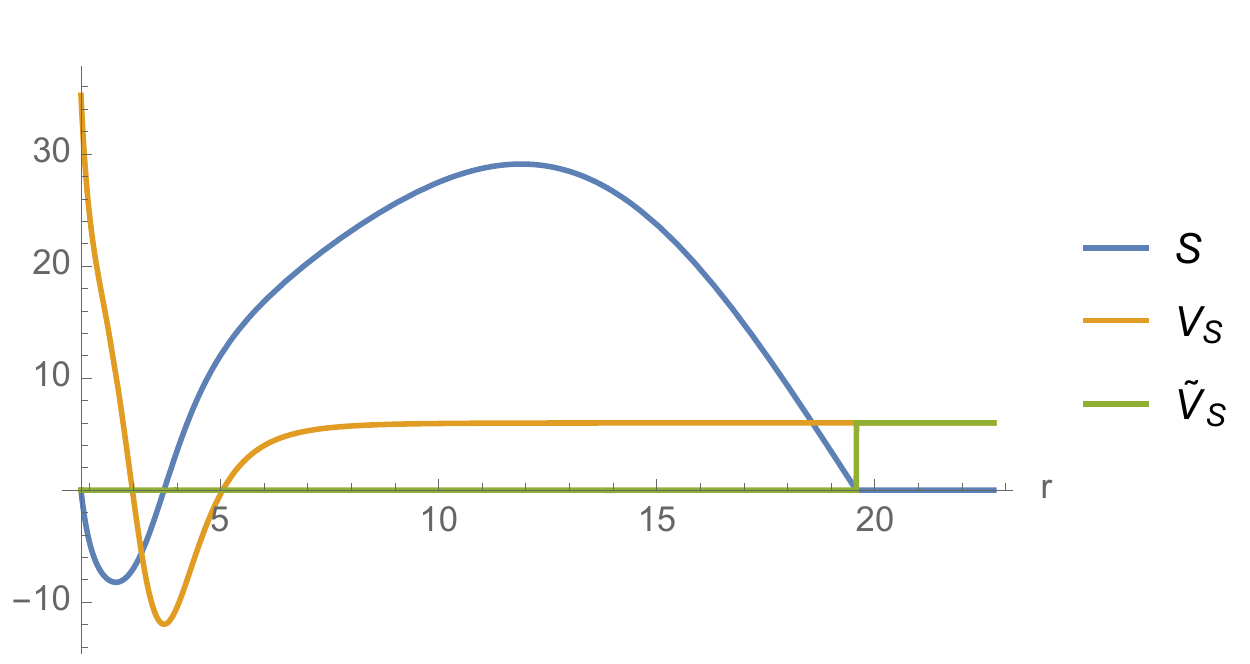}
	\caption{A typical numerical solution for $S$ with $\tilde V=0$, cut off at the third zero of $S$. Here $n=7$, $K=0$, $k=1$, $\mu=100$, $\ell=1$.}
	\label{f:num}
\end{figure}

In dimensions $n=3$ and $4$  for all $K$, and for $K=-1$ with negative mass parameter in all dimensions,  the numerical solution $S$ obtained in this way does not have a zero (except for $K=1$ with low mass) but appears to exist globally. Since we need positivity of $\tilde V$ somewhere to conclude, we instead numerically solved the ODE \eqref{17X12.01} with $\tilde{V}=\epsilon r^{-2}$ for $\epsilon>0$ (for $\mu>0$ we can choose $\epsilon=1$, for $\mu<0$ the choice of $\epsilon$ depends on $\mu$), with again initial value $S(r_0)=0$.
The resulting numerical solution $S$ appears to grow  asymptotically linearly for large $r$ and we therefore cannot set it to zero at some $r_{S,0}$, as a negative jump in $S$ would add a negative distributional component to $\tilde V$. Instead we just use the solution directly, without cutting off at finite distance. This works in fact for all masses,  dimensions, and $K=0,1,-1$, but the results are less conclusive as one has then to rely on the behaviour of the solution up to $r=\infty$, while the numerical integration necessarily stops at some finite $r$. A typical result for $S$ and $\tilde{V}_S$ for this approach is shown in Figure~\ref{f:num_2}.
\begin{figure}[htbp]
	\includegraphics[width=0.8 \textwidth]{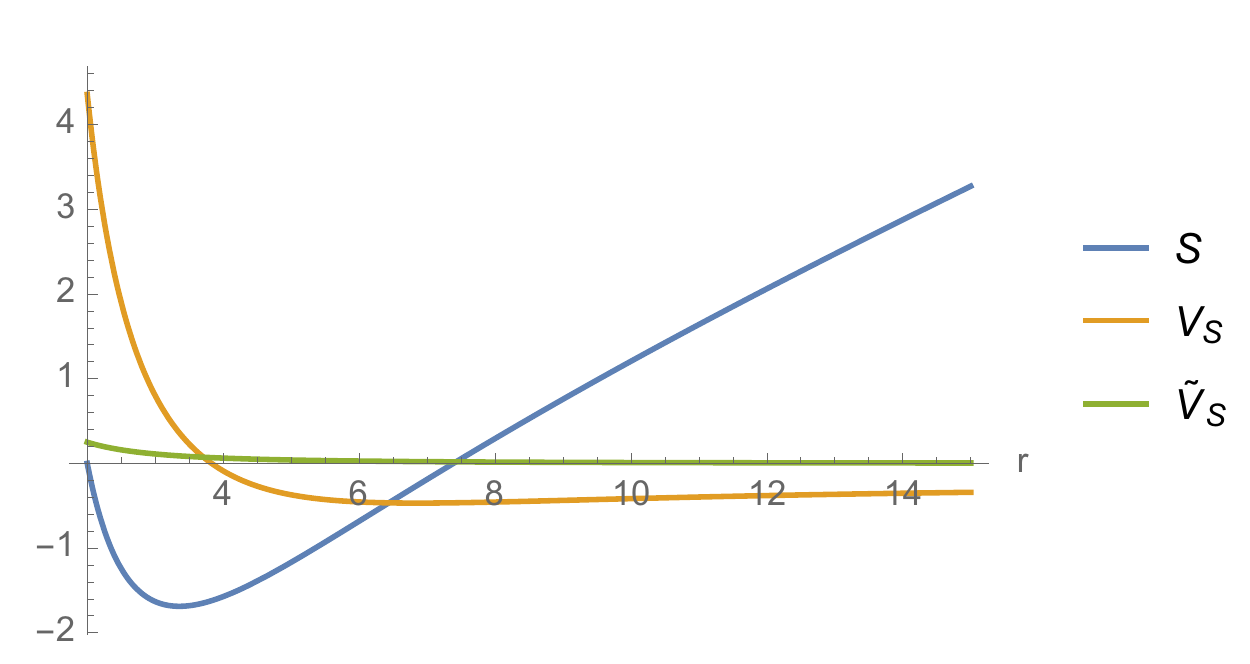}
	\caption{A typical numerical solution for $S$ with $\tilde V=1/r^2$. Here $n=3$, $K=1$, $\ourellnormal=2$, $\mu=10$, $\ell=1$.}
	\label{f:num_2}
\end{figure}

In any case, our numerical experiments strongly hint  at the following conjecture:

\begin{conjecture}
  \label{C15X17.21}
There are no non-trivial scalar master modes associated with the $L^2$-kernel of the operator $P_L$ given by \eq{11IV17.5}.
\end{conjecture}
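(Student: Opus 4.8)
Since the statement is a conjecture, what follows is a description of the scheme on which it rests, together with an identification of the one analytic ingredient that is still missing.

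First, let $h$ be an element of the $L^2$-kernel of $P_L$ from \eqref{11IV17.5} and decompose it into modes. The tensor master functions vanish by Proposition~\ref{P15X17}, the vector master functions by Corollary~\ref{C15X17.1}, and, when $n=2$, the scalar master functions by Corollary~\ref{coroTrivialmasterS}; moreover $k=0$ and, for $K=1$, $\ourellnormal=1$ do not carry scalar master functions at all, so these exceptional modes are not covered by the conjecture. It therefore suffices to show, for every remaining eigenvalue $k$ and for all admissible $(\mhere,\kappahere,n)$, that the two-dimensional Schr\"odinger operator $-\Delta+V_{S,k}$ on $(\R^2,\ztgriem)$ has trivial $L^2$-kernel: by the decay rates \eqref{10IV17.5} and Lemma~\ref{kernelmasterpotential} --- invoking Remark~\ref{R15X17.31} in the critical dimension $n=3$ --- this forces $\Phi_{S,\modeindex}=0$.

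By the argument following \eqref{17X12.01}, which rests on Lemma~\ref{L15X17.1}, triviality of that kernel follows as soon as one produces a vector field $X=S\,\partial_r\in W^{1,1}_{\loc}$ with
\[
\tilde V:=\partial_r S-\frac{S^{2}}{\VtwoOrF}+V_{S,k}\ \ge\ 0
\]
weakly on $(r_0,\infty)$ and $\tilde V>0$ on a non-empty open set. The plan is to prescribe a target $\tilde V(r)\ge 0$ and solve the Riccati equation $\partial_r S=S^{2}/\VtwoOrF+\tilde V-V_{S,k}$ on $[r_0,\infty)$ with initial value $S(r_0)=0$. Near $r=r_0$ the weight $\VtwoOrF$ has a simple zero, but the elementary expansion already used to check $W^{1,1}_{\loc}$-membership of the explicit profiles in Corollaries~\ref{C15X17.1}--\ref{coroTrivialmasterS} applies verbatim and shows $S$ is smooth at $r_0$ and $X\in W^{1,1}_{\loc}$. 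There are then two regimes. If the solution with $\tilde V\equiv 0$ reaches a zero at some $r_{S,0}$ such that $V_{S,k}\ge 0$ on $[r_{S,0},\infty)$ --- which is what the numerics show for $n\ge 5$ with $K\in\{0,1\}$, and for $K=-1$ with $\mhere>0$ --- one extends $S$ by $0$ beyond $r_{S,0}$, obtaining $\tilde V\ge 0$ everywhere and $\tilde V>0$ for $r>r_{S,0}$. Otherwise --- in particular for $n\in\{3,4\}$, where $V_{S,k}(\infty)$ from \eqref{10IX17.1} is non-positive and one has to borrow the positive bottom $\lambda^{2}/(4\ell^{2})$ of the spectrum of $-\Delta$ from Corollary~\ref{C15X17.11} (equivalently, use \eqref{17X2.01}) --- one prescribes $\tilde V=\epsilon\,r^{-2}$ with a suitable $\epsilon>0$ and uses the resulting globally-defined, asymptotically-linear $S$ directly, without cutting off.

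The hard part, and the reason this remains a conjecture rather than a theorem, is precisely global existence of the Riccati solution on all of $[r_0,\infty)$ --- plus, in the second regime, enough control of $S$ at infinity that the unique-continuation argument closing Lemma~\ref{L15X17.1} still applies. The dangerous term is $S^{2}/\VtwoOrF$, which can force blow-up at finite $r$; the natural remedy is a barrier argument sandwiching $S$ between a super- and a sub-solution of the Riccati equation built from the explicit form \eqref{10IV17.3} of $V_{S,k}$. For large $r$ the equation freezes to $\partial_r S\approx \ell^{2}S^{2}/r^{2}+\tilde V-V_{S,k}(\infty)$, whose linear solutions $S=c\,r$ with $c\,\ell^{2}\in[\,s^S_-,s^S_+\,]$ (the roots being exactly $s^S_\pm/\ell^2$ with $s^S_\pm$ as in \eqref{15X17.2}) are super-solutions there, and one would take the upper barrier to match this branch while bounding the lower barrier using the sign structure of $V_{S,k}$, whose only negative contributions are near-horizon and subleading. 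The delicate point is the intermediate region, where that near-horizon negativity competes with the Riccati nonlinearity and where the estimates must be made uniform in $k$ and over the whole admissible mass range --- including the arbitrarily large masses that were tested numerically; see the related analysis in~\cite{Kimura}. Once this global-existence statement is established, combining it with the treatment of the exceptional modes in the appendices and with the higher-dimensional linearised Birkhoff theorem of~\cite{Klinger2018} would upgrade Conjecture~\ref{C15X17.21}, and thereby also Conjecture~\ref{C15X17.21+}, to a theorem.
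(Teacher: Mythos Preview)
Your proposal is correct in spirit and matches the paper's own treatment: since the statement is explicitly a \emph{conjecture}, the paper does not prove it either, but presents exactly the Riccati/$S$-deformation scheme you describe---solving $\partial_r S = S^2/\VtwoOrF + \tilde V - V_{S,k}$ with $S(r_0)=0$, cutting off at a zero of $S$ when $V_{S,k}\ge 0$ beyond (the regime $n\ge 5$, $K\in\{0,1\}$, and $K=-1$ with $\mhere>0$), and otherwise prescribing $\tilde V=\epsilon r^{-2}$ and using the globally-defined solution---with the missing ingredient being precisely global existence of the Riccati solution on $[r_0,\infty)$.

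You go somewhat beyond the paper in two useful ways. First, you make explicit the asymptotic analysis: the linearisation $S=cr$ with $c\ell^2\in\{s^S_-,s^S_+\}$ is correct and is not spelled out in the paper. Second, you sketch a barrier strategy for the intermediate region, which the paper does not attempt; the paper stops at the numerical evidence (tested up to $\mhere=10^{30}$ in various dimensions) and simply states the conjecture. One minor point: your phrase ``borrow the positive bottom $\lambda^2/(4\ell^2)$ of the spectrum'' conflates two things---the rigorous small-mass bounds of Proposition~\ref{P15X17.2} do come from Corollary~\ref{C15X17.11} via \eqref{17X2.01}, but the numerical scheme underpinning the conjecture for $n\in\{3,4\}$ uses the full Riccati equation with $\tilde V=\epsilon r^{-2}$ directly, not the cruder choice $X=\tfrac{\lambda}{2\ell}\sqrt{\VtwoOrF}\,\partial_r$. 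This does not affect your overall description, which accurately captures both what is known and what remains open.
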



%


\section{Gauge vectors for vanishing master functions}
 \label{s5VII17.2}

In this section we study the consequences of the vanishing of the master functions.
We consider perturbations of the $(n+2)$-dimensional metric
\[
\zgriem=g_{ab} dx^a dx^b + r^2\gamma_{ij} dx^i dx^j\,,
\]
where
\[
g_{ab} dx^a dx^b=\VtwoOrF dt^2 + \VtwoOrF^{-1}dr^2\,,
\]
and where $\gamma$ is the metric of an $n$-dimensional manifold with constant sectional curvature $K$. The indices $a,b$ take values $t, r$, while $i, j$ are ``angular'' indices. We will denote by $D$ the covariant derivative associated to $g$ and by $\hat{D}$ that of $\gamma$.

As emphasised in~\cite{KodamaIshibashiMaster}, a general metric perturbation $\deltazgriem $ can be split into ``scalar'', ``vector'', and ``tensor'' parts as
\bel{7VI17.101}
 \deltazgriem =\deltazgriem ^S+\deltazgriem ^V + \deltazgriem ^T
 \,.
\ee
An analytic description of the splitting \eq{7VI17.101}, without referring to a mode-decomposition, is presented in  Appendix~\ref{appendicedecompo}.

We show in Appendix~\ref{s5VII17.21} below that the $TT$ conditions are consistent with the splitting above.

The components in \eq{7VI17.101} can be expanded as~\cite[Sections 2.1, 5.1 and 5.2]{KodamaIshibashiMaster}
\begin{align}
	\label{06IX17.1}
	\deltazgriem ^S_{ab}&=\sum_{\efindex} \fabSI \mathbb{S}^{\efindex}\,,
	&\deltazgriem ^S_{ai}&=\sum_{\efindex} r \faSI \mathbb{S}^{\efindex}_i\,,
	&\deltazgriem ^S_{ij}&=\sum_{\efindex} 2r^2(\HLSI\gamma_{ij}\mathbb{S}^{\efindex}+\HTSI\mathbb{S}^{\efindex}_{ij})
 \,,
\\
	\label{06IX17.2}
	\deltazgriem ^V_{ab}&=0\,,
	&\deltazgriem ^V_{ai}&=\sum_{\efindex} r \faVI \mathbb{V}^{\efindex}_i\,,
	&\deltazgriem ^V_{ij}&=\sum_{\efindex} 2r^2 \HTVI
        \mathbb{V}^{\efindex}_{ij}
        \,,
\\
	\label{06IX17.3}
	\deltazgriem ^T_{ab}&=0\,,
	&\deltazgriem ^T_{ai}&=0\,,
	&\deltazgriem ^T_{ij}&=\sum_{\efindex} 2r^2 \HTTI
        \mathbb{T}^{\efindex}_{ij}\,,
\end{align}
where the following holds: for $K=0$, the above are obvious decompositions in Fourier series, with coordinate-independent tensor components leading to zero eigenvalues. For $K=1$
(see, e.g.~\cite{Boucetta1999,Rubin1984})
\begin{align}
(\hat{\Delta}_n+k^2)\mathbb{S}^{\efindex}&=0\,,&& k^2=\ourellnormal(\ourellnormal+n-1)\,, &&\ourellnormal=0,1,2,\dots\,,
\\
 \label{7IX17.62}
 (\hat{\Delta}_n+k_V^2)\mathbb{V}^{\efindex}_i&=0\,,&& k_V^2=\ourellnormal(\ourellnormal+n-1)-1\,, &&\ourellnormal=1,2,\dots\,,\\
(\hat{\Delta}_n+k_T^2)\mathbb{T}^{\efindex}_{ij}&=0\,,&& k_T^2=\ourellnormal(\ourellnormal+n-1)-2\,, &&\ourellnormal=2,\dots\,,
\quad n>2
 \,.
\end{align}

For $K=-1$ there is no example
of compact quotient of $\H^n$ with explicit values of the whole spectrum, but
we have a countable set of increasing  eigenvalues  of finite multiplicity:
$$
k^2=(\lambda_0=0), \lambda_1, \dots
 \,,
  \qquad
k_V^2=\lambda_{V,0}, \dots
 \,,
$$
where $\lambda_{V,0}\geq (n-1)$ by non-negativity of the Hodge Laplacian (\ref{DeltaH}). 
The first non zero eigenvalue $\lambda_1$ is greater
or equal than a computable constant~\cite{schoen1982} that can be chosen to be $(n-1)^2/4$ for sufficiently small volumes.

Whatever the value of $K\in\{-1,0,1\}$ one sets
\begin{align}
\mathbb{S}^{\efindex}_i&=-\frac{1}{k}\hat{D}_i\mathbb{S}^{\efindex}
 \,,
  \quad
   k\ne 0
   \,,
    \label{4VII17.26}
\\
 \label{7IX17.61}
	\mathbb{S}^{\efindex}_{ij}
 &=
 \frac{1}{k^2}\hat{D}_i\hat{D}_j\mathbb{S}^{\efindex}+\frac{1}{n}\gamma_{ij}\mathbb{S}^{\efindex}
  \,,
  \quad
   k \ne 0
   \,,
   \\
 \mathbb{V}^{\efindex}_{ij}&=-\frac{1}{2k_V}(\hat{D}_i\mathbb{V}^{\efindex}_j
  +\hat{D}_j\mathbb{V}^{\efindex}_i)
 =-\frac{1}{2k_V}\mathcal{L}_{\mathbb{V}^{\efindex}}\gamma_{ij}
 \,,
 \quad
  k_V \ne 0
 \,.
\end{align}
We define the sets $\mathbb{\efindex}^\star_S$ and $\IVstar$ of indices $\efindex$ corresponding to modes governed by the scalar and vector master equations:
\beal{17X04.1}
\ISstar&=&\{\efindex \,|\, k(\efindex)>0,\, k^2(\efindex)>n\kappahere \}\,,\\ \label{17X04.2}
\IVstar&=&\{\efindex \,|\, k_V(\efindex)>0,\, k^2_V(\efindex)>(n-1)\kappahere \}\,,
\eea
and denote by $\ISdiamond$ and $\IVdiamond$ their complements.

Note that $
 k_V $ never vanishes when $\kappahere\in\{\pm1\}$, and that those vector fields  $\mathbb{V}^{\efindex}_j$ for which $\ourellnormal(I)$ in \eqref{7IX17.62} equals one, are Killing vector fields of $\gamma_{ij}$ (cf., e.g., the eigenvalue $\lambda^1_1$ for the Hodge Laplacian in~\cite[Theorem 3.1]{Boucetta1999}), so that the associated tensor $ \mathbb{V}^{\efindex}_{ij} $ vanishes.

From now on we assume that the master functions $\Phi_S$, $\Phi_V$, and $\Phi_T$ vanish.

The vanishing of the tensor potential directly gives $\deltazgriem ^T=0$~\cite[(5.4)]{KodamaIshibashiMaster}.

The vanishing of the vector potential implies~\cite[(5.10)-(5.13)]{KodamaIshibashiMaster},
for modes such that $k_V\ne 0$,
%
\begin{equation}\label{4VII17.20}
	\faVI =-\frac{r}{k_V}D_a \HTVI \,.
\end{equation}
Let us define the vector field $\YVstar{}$ as
\begin{equation}\label{8VIII18.7+}
  \YVstar{}:= -r^2\sum_{\efindex \in \IVstar} \frac{\HTVI}{k_V(I)}
 \mathbb{V}^{\efindex}_j \gamma^{ij} \partial_i
  \,.
\end{equation}
The convergence of the series is justified in Appendix~\ref{s5VII17.7}.

Let us denote by $\hVstar{ij}$ and $\hVstar{ia}$ tensors in which we have collected all those modes in $\deltazgriem ^V_{ij}$ and  $\deltazgriem ^V_{ia}$ which are governed by the master equations, and by $\hVdiamond{ij}$ and $\hVdiamond{ia}$ whatever remains:
\beal{7IX18.51}
	 \hVstar{ai}
 &=&
  \sum_{\efindex \in \IVstar} r \faVI \mathbb{V}^{\efindex}_i\,,
\\
	 \hVstar{ij}
 &=&
 \sum_{\efindex \in \IVstar} 2r^2 \HTVI
        \mathbb{V}^{\efindex}_{ij}
\\
	 \hVdiamond{ai}
 &=&
 \sum_{\efindex \in \IVdiamond } r \faVI \mathbb{V}^{\efindex}_i\,,
\\
	 \hVdiamond{ij}
 &=&
 \sum_{\efindex \in \IVdiamond} 2r^2 \HTVI
        \mathbb{V}^{\efindex}_{ij}
\eea
From \eq{8VIII18.06} and \eq{8VIII18.8+}, Appendix~\ref{s5VII17.7}, we  obtain
\bel{8VIII18.6}
 \| D_a (r^{-2} \hVstar{ij}) dx^i dx^j  \|_{H^k(\Nman)}
 \le 2 r^{-2}  \|  \hVstar{ai} dx^i  \|_{H^{k+1}(\Nman)}
  \,,
\ee
\begin{equation}\label{4VII17.21}
\deltazgriem ^V_{ij}=
 \hVs{ij} + r^2 (\mathcal{L}_{\YVstar{}}\gamma)_{ij}
  =
  \hVs{ij} +
  (\mathcal{L}_{\YVstar{}}\zgriem)_{ij}
   \,.
\end{equation}

Using \eqref{4VII17.20}, the mixed components $\deltazgriem ^V_{ai}$ take the form
\begin{equation}
	\deltazgriem ^V_{ai}=\sum_{\efindex} r \faVI  \mathbb V^{\efindex}_i
 =
  \hVs{ai}-\sum_{\efindex \in \IVstar}\frac{r^2}{k_V}  \mathbb V^{\efindex}_i D_a \HTVI
  =\hVs{ai} + D_a \YVstar{i}
   =\hVs{ai} +
   (\mathcal{L}_{\YVstar{}}\zgriem)_{ai}
\end{equation}
(note that $\zgriem_{ij}\gamma^{jk}=r^2\delta^k_i$).
We therefore have
\begin{equation}\label{4VII17.22}
	\deltazgriem ^V=\hVs{} + \mathcal{L}_{\YVstar{}}\zgriem\,.
\end{equation}

We continue with the scalar variations. We define
\begin{equation}\label{7IX17.102}
  	\hSs{ab}=\sum_{I \notin \ISstar } \fabSI \mathbb{S}^{\efindex}\,,
  \quad
  	\hSstar{ab}=\sum_{\efindex \in \ISstar} \fabSI \mathbb{S}^{\efindex}\,,
\end{equation}
with obvious similar definitions for $\hSstar{ai}$, etc.

The vanishing of the scalar master function implies~\cite[(2.7), (2.21) and (3.9)]{KodamaIshibashiMaster}
\begin{align}
\HLSI +\frac{1}{n}\HTSI +\frac{1}{r}D^a r \XaI =0\,,\label{4VII17.23}\\
0=\fabSI+(\mathcal{L}_{X_{\efindex}} g)_{ab} \label{4VII17.24}
\end{align}
where  we have assumed that $k\ne 0$ and
\begin{equation}\label{4VII17.25}
\XaI =\frac{r}{k}(\faSI+\frac{r}{k}D_a \HTSI )\,.
\end{equation}
Using \eqref{4VII17.23}, the angular part of the variation can be written as
\begin{equation}\begin{split}
 \hSstar{ij}
  &=\sum_{\efindex \in \ISstar}2 r^2 \left(\frac{\HTSI }{k^2} \hat{D}_i\hat{D}_j \mathbb{S}^{\efindex}-\gamma_{ij}\mathbb{S}^{\efindex}\frac{1}{r}X(r)\right)
\\
 &=r^2 \mathcal{L}_{\YSstar{}}\gamma_{ij}+2r\YSstar{}(r)\gamma_{ij}
\\
 &=\mathcal{L}_{\YSstar{}}(r^2\gamma_{ij})
 \,,
\end{split}
\end{equation}
where
\begin{equation}
	{\YSstar{}}:=\sum_{\efindex \in \ISstar}\left(\frac{1}{k^2}\HTSI \gamma^{ij}\hat{D}_j\mathbb{S}^{\efindex}\partial_i-\mathbb{S}^{\efindex}X^a \partial_a\right)\,.
\end{equation}
The convergence of all series above can be justified in a way very similar to that of Appendix~\ref{s5VII17.7}, and will be omitted.

Using \eqref{4VII17.25} and \eqref{4VII17.26} the mixed part is given by
\begin{equation}\begin{split}
\deltazgriem ^S_{ai}&=\sum_{\efindex \in \ISstar}\hat{D}_i\mathbb{S}^{\efindex}\left(-\XaI +\frac{r^2}{k^2}D_a \HTSI \right)\\
&=\hat{D}_i Y_{S,a}+ D_a Y_{S,i}=(\mathcal{L}_{\YSstar{}}\zgriem)_{ai}\,.
\end{split}\end{equation}

The $r,t$ part of the variation is finally, using \eqref{4VII17.24},
\begin{equation}
	\deltazgriem ^S_{ab}
	=\sum_{\efindex \in \ISstar}\fabSI\mathbb{S}^{\efindex}
	=-\sum_{\efindex \in \ISstar}\mathbb{S}^{\efindex}(\mathcal{L}_{\YSstar{}} g)_{ab}
	=(\mathcal{L}_{\YSstar{}} g)_{ab}\,.
\end{equation}

In conclusion, we have
\begin{equation}\label{4VII17.27}
 \deltazgriem =\hdiamond{}+\mathcal{L}_{\Ystar{}}\zgriem\,,
\end{equation}
with
\begin{equation}\label{4VII17.28}
 \Ystar{}={\YSstar{}}+\YVstar{}
 =\sum_{\efindex \in \ISstar}\left( \frac{1}{k^2}\HTSI \hat{D}^i\mathbb{S}^{\efindex}\,\partial_i
 -\mathbb{S}^{\efindex}X^a_{\efindex} \partial_a\right)
 -\sum_{\efindex \in \IVstar} \frac{\HTVI }{k_V}\mathbb{V}^{\efindex}
 \,.
\end{equation}

Using the estimates in Appendix \ref{sec:falloff_alln}
  we obtain for the asymptotics of ${\Ystar{}}$
\begin{equation}\label{6IX17.1}
\begin{split}
|{\Ystar{}}|^2&=\zgriem_{tt}(({\Ystar{}})^t)^2 + \zgriem_{rr} (({\Ystar{}})^r)^2+r^2 |({\Ystar{}})^i|^2_\gamma\\ &=O(r^{-2-2n+4})+O(r^{2-2-2n})+O(r^{2-2-2n})\\
&=O(r^{2-2n})\,.
\end{split}
\end{equation}

\section{Triviality of the kernel}
 \label{s8IX17.1}

We are ready now to pass to the proof of non-degeneracy:

\begin{theorem}
 \label{T8IX17.1}
Consider an $(n+2)$-dimensional Riemannian Kottler metric $\zgriem$ as in \eqref{21III17.1}, with $\mu>0$ for $K\in\{0,1\}$ or $\mu$ satisfying \eqref{18X17.31} for $K=-1$, and an axis of rotation at $r=r_0>0$, as described in Section~\ref{sA7IV17.1}. Suppose that

\begin{enumerate}
  \item $K=0$, $n\ge 2 $,
   $\mhere$  as in Proposition~\ref{P15X17.2}; or
  \item $K=1$, $n= 2$,
   $\mhere\not = \mu_c$ given by \eqref{9VII17.1}; or
  \item $K=-1$, $n= 2$.
\end{enumerate}

Then $(\R^2\times \Nman,\zgriem)$ is non-degenerate in the sense described in the introduction.
\end{theorem}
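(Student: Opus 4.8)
The plan is to take an arbitrary $h$ in the $L^2$-kernel of $P_L$ and show it must vanish, by peeling off its modes in the order: tensor, vector, scalar-master, then the exceptional modes. First I would invoke the preliminary reductions already established: $h$ is transverse and traceless, hence solves the linearised vacuum Einstein equations, and by \cite{Lee:fredholm} satisfies the decay \eqref{10IV17.4}, which via the calculations of Appendices~\ref{sec:falloff_alln}, \ref{s5VII17.1}, \ref{s5VII17.7} translates into the decay rates \eqref{10IV17.5} for the Ishibashi--Kodama master functions $\Phi_{S,\efindex},\Phi_{V,\efindex},\Phi_{T,\efindex}$. Then I would apply the vanishing results of Section~\ref{s7VI17.1}: for all three cases, Proposition~\ref{P15X17}(1) (positivity of $V_T$) kills the tensor master functions; Corollary~\ref{C15X17.1} kills all vector master functions with $k_V^2>(n-1)K$; and for the scalar master functions I would cite Corollary~\ref{coroTrivialmasterS} when $n=2$ (valid for all $K=0,\pm1$ under $k^2>\max(0,2K)$) and Proposition~\ref{P15X17.2} when $K=0$, $n\ge 2$ with $\mu<\mu(0,n)$, noting Remark~\ref{R15X17.31} to handle the borderline weight $s^S_\pm=\tfrac12$ in $n=3$. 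Crucially, in each case the vanishing theorems only need the modes $\efindex\in\ISstar\cup\IVstar$, i.e. precisely those governed by master equations.

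With all master functions zero, Section~\ref{s5VII17.2} gives the decomposition \eqref{4VII17.27}: $\deltagriem = h = \hdiamond{} + \mathcal{L}_{\Ystar{}}\zgriem$, where $\hdiamond{}$ collects only the exceptional modes ($\efindex\notin\ISstar\cup\IVstar$: namely $k=0$ scalar modes, $k_V=0$ vector modes, and for $K=1$ the $\ourellnormal=1$ modes), and $\Ystar{}$ is the explicit gauge vector field \eqref{4VII17.28} with controlled asymptotics \eqref{6IX17.1}, $|\Ystar{}|^2_{\zgriem}=O(r^{2-2n})$. Since $h$ is TT, the next step is to correct the gauge: set $\hat h := h - \mathcal{L}_{\Ystar{}}\zgriem = \hdiamond{}$, which is still a solution of the linearised Einstein equations, but no longer TT. However the point is that $\hdiamond{}$ consists only of exceptional modes, so one can now analyse it by hand. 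The exceptional modes are treated by the linearised-Birkhoff-type arguments and the $\ourellnormal=0,1$ analysis developed in the Appendices (\ref{ss2IX17.1}--\ref{s7VII17.101}, \ref{s9IX17.1}), adapting \cite{Dotti2016}: the $k=0$ scalar modes are, up to gauge, variations of the mass parameter; the $K=1$, $\ourellnormal=1$ modes are, up to gauge, variations of angular momentum; and the $k_V=0$ vector modes (present only for $K=0$) are handled similarly. Since we proved around \eqref{9VII17.1} that a mass variation is in $L^2$ only when $K=1$ and $\mu=\mu_c$ — which is excluded in case (2) and vacuous in cases (1),(3) — and since the angular-momentum variation is never in $L^2$ (Appendix~\ref{ss3IX17.1}), each surviving exceptional mode must be pure gauge with a gauge vector of controlled asymptotics. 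Collecting all the gauge vectors into a single $Y$, we get $h = \mathcal{L}_Y \zgriem$ with $|Y|_{\zgriem}$ decaying.

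Finally I would close the argument using the TT condition on the original $h$. Since $h=\mathcal{L}_Y\zgriem$ is transverse and traceless and $Y$ has suitable (decaying) asymptotic behaviour, an integration-by-parts / divergence-theorem argument shows $\mathcal{L}_Y\zgriem=0$: indeed $0=\operatorname{Tr}_{\zgriem}\mathcal{L}_Y\zgriem = 2\operatorname{div} Y$ and $0=\operatorname{div}(\mathcal{L}_Y\zgriem)$ which, combined with the decay of $Y$ guaranteeing vanishing boundary terms (again invoking the mapping properties from \cite{Lee:fredholm}), forces $\mathcal{L}_Y\zgriem=0$, i.e. $Y$ is a Killing field; but a decaying Killing field of an asymptotically hyperbolic Einstein metric vanishes, hence $h=0$. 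I expect the main obstacle to be bookkeeping rather than a single hard estimate: one must verify that the gauge vectors produced for the exceptional modes, and not merely those for the master modes, really do have the asymptotic decay needed for the boundary terms to vanish and for the final Killing/vanishing step to apply — in other words, ensuring that every piece of $h$ has been gauged away by a globally well-defined vector field on $\R^2\times\Nman$ (smooth across the rotation axis $r=r_0$) with the requisite fall-off, uniformly over the infinitely many modes, which is where the convergence arguments of Appendix~\ref{s5VII17.7} and the mode-by-mode decay rates are indispensable.
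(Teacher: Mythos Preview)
Your proposal is correct and follows essentially the same architecture as the paper's proof: decompose $h$ into master and exceptional modes, use the results of Section~\ref{s7VI17.1} to kill the master functions, invoke Section~\ref{s5VII17.2} to write the master part as $\mathcal L_{\Ystar{}}\zgriem$, handle the exceptional modes via the linearised-Birkhoff and $\ell=1$ analyses of Appendices~\ref{ss2IX17.1}--\ref{s03X17.1}, and assemble $h=\mathcal L_Y\zgriem$ with $|Y|_{\zgriem}=O(r^{-1})$.

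The one noteworthy difference is the closing step. The paper does \emph{not} run a bare integration-by-parts: since $|Y|_{\zgriem}=O(r^{-1})$ while $L^2$ requires decay faster than $r^{-(n+1)/2}$, the vector field $Y$ is not in $L^2$, so one cannot simply say ``$Y$ lies in the $L^2$-kernel of $L^*L$, which is trivial''. Instead the paper computes the indicial radius $R=(n+3)/2$ of $L^*L$, shows via \cite{AndChDiss} that the $L^2$-kernel of $L^*L$ is trivial, and then applies \cite[Theorem~C(b)]{Lee:fredholm} in the weighted space $\rho^{-\delta}L^2$ with $\delta\in\big(\tfrac{3-n}{2},\tfrac{n+3}{2}\big)$, where $Y$ \emph{does} live, to conclude $Y\equiv 0$. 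Your direct route also works --- pairing $h$ with $\mathcal L_Y\zgriem$ and integrating by parts gives $\int|h|^2$ equal to a boundary term of size $O(R^{-(n+1)})\cdot O(R^{-1})\cdot O(R^{n+1})=O(R^{-1})\to 0$, yielding $h=0$ immediately --- but your appeal to ``mapping properties from \cite{Lee:fredholm}'' is not the right justification here; what is needed is precisely this explicit boundary estimate. Note also that once $\mathcal L_Y\zgriem=0$ you already have $h=0$; the detour through ``$Y$ is a decaying Killing field, hence zero'' is superfluous.
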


\proof
Let $h\in L^2$ be in the kernel of the operator $P_L$.
Consider the decomposition of $h$ into master scalar, vectorial, and tensor modes, together with their non-master counterparts:
\bel{8IX17.1}
 h =
  \underbrace{
  \hSstar{} + \hSdiamond{}
  }_{h^S}
  +
  \underbrace{
  \hVstar{} + \hVdiamond{}
  }_{h^V}
  +
  \underbrace{
  \hTstar{}
  }_{ h^T}
 \,,
\ee
(note that  all tensorial modes are controlled by master functions).

Suppose, first, that $K=0$.
We show in Appendix~\ref{s7VII17.101} below that
all angle-independent modes $\hdiamond{}$ of $h$ (in particular, all non-master modes)
 are pure gauge:
%
\begin{equation}\label{07IX17.11+}
\hdiamond{}:= \hSdiamond{}+\hVdiamond{}
= \mcL_{\Ydiamond{}} \zgriem
\,,
\quad
| \Ydiamond{} |_\zgriem=O(r^{-3})
\,.
\end{equation}
It follows from Proposition~\ref{P15X17.2} together with the analysis in Section~\ref{s5VII17.1}  that,  for $n\ge 2$,
all master modes are likewise pure gauge for a nontrivial range of mass parameters $\mhere$ (for all $\mhere >0 $ when $n=2$):
\begin{equation}\label{07IX17.11+star}
\hstar{}:= \hSstar{}+\hVstar{}+\hTstar{}= \mcL_{\Ystar{}} \zgriem
\,,
%
\quad
 | \Ystar{}  | _\zgriem=O(r^{1-n})
 \,.
\end{equation}
Thus
\begin{equation}\label{p9XI17.1}
  h = \mcL_{Y} \zgriem
   \,,
   \quad
   Y:= \Ystar{}+\Ydiamond{}\,,
\quad
| Y |_\zgriem=O(r^{-1})
   \,.
\end{equation}

Now, $\deltazgriem$ in \eqref{4VII17.27} is in $TT$-gauge, and the operator obtained by composing the divergence and the trace-free part of the Lie derivative is precisely the operator $L^*L$ considered in~\cite[Proposition~G]{Lee:fredholm}. Keeping in mind that our $n$ is shifted by one as compared to the parameter $n$ used in~\cite{Lee:fredholm}, from~\cite[Proposition~G]{Lee:fredholm} we find that the indicial  radius of
$L^*L$ is $R=\frac{n+3}2$.   From~\cite[Theorem~C(c)]{Lee:fredholm},
choosing $\delta$ there so that the  Sobolev space $C^{k,\alpha}_\delta$ contains only decaying fields, we see that any element $Y$ in the  $L^2$-kernel of $L^*L$ has to decay at infinity at a rate as close to the lower end of the indicial interval as desired.
(In fact a more careful analysis shows that an element of the $L^2$-kernel must decay as
\begin{equation}\label{p9XI17.2}
   |Y|_\zgriem = O(r^{-(\frac{n+1}2+R)})
   = O(r^{-2})
   \,.)
\end{equation}
Since the $L^2$-kernel of $L^*L$ is the same as the
$L^2$-kernel of $L$,
one can invoke~\cite[Proposition~6.2.2]{AndChDiss} to conclude that the $L^2$-kernel of $L^*L$ is always trivial.

Next, \eqref{p9XI17.1} shows that for our gauge field  $Y$ it holds that
  $Y\rho^\delta$ is in $L^2$ for   $\delta >\frac{3-n}2$;
choosing $\delta \in (\frac{3-n}2,\frac{n+3}2)$ we can thus use~\cite [Theorem~C(b)]{Lee:fredholm}
to conclude that $Y\equiv 0$. Hence
\begin{equation}\label{6IX17.20}
  \deltazgriem \equiv 0
  \,,
\end{equation}
which concludes the proof when $\kappahere=0$.

The argument for $\kappahere=\pm 1$ is very similar: when $n=2$
there are no non-master tensor modes; the fact that non-master scalar modes are pure gauge is the contents of the linearised Birkhoff theorems of Appendices~\ref{ss2IX17.1} and \ref{s03X17.1}, while the pure-gauge character of the $\ourellnormal=1$, $K=1$
 modes is established in Appendix~\ref{s9IX17.1}.
\qed
%


%

\appendix

\section{Divergence of a symmetric two-tensor in coordinates}
 \label{s5VII17.21}

The object of this appendix is to show that the $TT$ condition does not mix the scalar, vector, and tensor modes.

Let us consider a warped product metric of the  following form (note that this allows for metrics more general than the metric $\zgriem$ of the main body of the paper):
$$
\mathcal G=g_{ab}(y)dy^ady^b+r^2(y)\gamma_{ij}(x)dx^idx^j,
$$
where the $y^a$'s are local coordinates on a $m$-dimensional  manifold and the $x^i$'s are local coordinates in $n$-dimensions.
The non trivial Christoffel symbols of $\mathcal G$ are:
$$
\Gamma^c_{ab}=\Gamma^c_{ab}(g)\;,\;\;\Gamma^k_{ij}=\Gamma^k_{ij}(\gamma)\;,
$$
$$
\Gamma^c_{ij}=-g^{cb}r\partial_br\,\gamma_{ij}\;,\;\;\Gamma^k_{ib}=r^{-1}\partial_br\,\delta^k_i\;.
$$
The divergence of a 2-tensor is
$$
\nabla_\alpha T^{\alpha\beta}=\partial_\alpha T^{\alpha\beta}+\Gamma^\alpha_{\alpha\sigma}T^{\sigma\beta}+\Gamma^\beta_{\alpha\sigma}T^{\alpha\sigma},$$
or equivalently
$$
\nabla_\alpha T^{\alpha\beta}=
(\sqrt{|\mathcal G|})^{-1}\partial_\alpha(\sqrt{|\mathcal G|}T^{\alpha\beta})+\Gamma^\beta_{\alpha\sigma}T^{\alpha\sigma},
$$
where
$$
\sqrt{|\mathcal G|}=\sqrt{|g|}\sqrt{|\gamma|}r^n.
$$
We deduce
$$
\nabla_\alpha T^{\alpha\beta}=
(\sqrt{|g|})^{-1}r^{-n}\partial_a(r^n\sqrt{|g|}T^{a \beta})+
(\sqrt{|\gamma|})^{-1}\partial_i(\sqrt{|\gamma|}T^{i\beta})+\Gamma^\beta_{\alpha\sigma}T^{\alpha\sigma}.
$$
In particular
$$
\nabla_\alpha T^{\alpha b}=
(\sqrt{|g|})^{-1}r^{-n}\partial_a(r^n\sqrt{|g|}T^{a b})+
(\sqrt{|\gamma|})^{-1}\partial_i(\sqrt{|\gamma|}T^{i b})+\Gamma^b_{ac}T^{ac}
-g^{bc}r\partial_cr\gamma_{ij}T^{ij},
$$
so
\begin{equation}\label{divTb}
\nabla_\alpha T^{\alpha b}=
r^{-n}D_a(r^nT^{a b})+
\hat D _i(T^{i b})
-g^{bc}r\partial_cr\gamma_{ij}T^{ij},
\end{equation}
where $\hat D $ is the $\gamma$-connection and $ D  $ is the $g$-connection.
Similarly
$$
\nabla_\alpha T^{\alpha j}=
(\sqrt{|g|})^{-1}r^{-n}\partial_a(r^n\sqrt{|g|}T^{a j})+
(\sqrt{|\gamma|})^{-1}\partial_i(\sqrt{|\gamma|}T^{i j})+\Gamma^j_{kl}T^{kl}
+r^{-1}\partial_brT^{bj},
$$
so
\begin{equation}\label{divTi}
\nabla_\alpha T^{\alpha j}=
r^{-n} D  _a(r^nT^{a j})+
\hat D _i(T^{i j})
+r^{-1}\partial_brT^{bj}.
\end{equation}

\subsection{Divergence of a scalar variation}
 \label{ss9VII17.1}

Let us write,  as in~\cite[Equation~(2.4)]{KodamaIshibashiMaster},
$$
T^{ab}=\mathbb{S}f^{ab}\,, \ T^{ai}=r^{-1}f^a\mathbb{S}^i\,, \ T^{ij}=2r^{-2}(H_L\mathbb{S}\gamma^{ij}+H_T\mathbb{S}^{ij})\;,
$$
Note that here $\mathbb{S}^i=\gamma^{ij}\mathbb{S}_j$ but $T^{ai}=\zgriem^{ij}T^a{}_j$. From (\ref{divTb}) and (\ref{divTi})
\[\begin{split}
\nabla_\alpha T^{\alpha a}&=\mathbb{S}r^{-n} D  _b(r^nf^{ab})+r^{-1}f^a\hat D _i\mathbb{S}^i
-g^{ac}\partial_cr\gamma_{ij}2r^{-1}(H_L\mathbb{S}\gamma^{ij}+H_T\mathbb{S}^{ij})\\
\nabla_\alpha T^{\alpha j}&=r^{-n} D  _b(r^{n-1}f^{b})\mathbb{S}^j+2r^{-2}
(H_L\hat D ^j\mathbb{S}+H_T\hat D _i\mathbb{S}^{ij})+r^{-2}\partial_brf^b\mathbb{S}^j\\
\end{split}\]
Assume  moreover  that (see~\cite{KodamaIshibashiMaster},
Equations (2.2) and (2.6))
$$
\Delta_\gamma \mathbb{S}=-k^2\mathbb{S}\;,\;\;\mathbb{S}_i=-\frac1 k \partial_i\mathbb{S}\;,\;\;\mathbb{S}_{ij}=\frac1{k^2}\hat D _i
\partial_j \mathbb{S}+\frac1n\gamma_{ij}\mathbb{S},
$$
We obtain
\[\begin{split}
 \nabla_\alpha T^{\alpha a}&=\mathbb{S}r^{-n} D  _b(r^nf^{ab})+r^{-1}f^a k \mathbb{S}
 -g^{ac}\partial_cr\gamma_{ij}2r^{-1}(H_L\mathbb{S}\gamma^{ij}
  +H_T\mathbb{S}^{ij})
 \,,
\\
 &=\mathbb{S}[r^{-n} D  _b(r^nf^{ab})+r^{-1}f^ak-2r^{-1}g^{ac}\partial_cr\,nH_L]
 \,,
\\
 \nabla_\alpha T^{\alpha j}&=r^{-n} D  _b(r^{n-1}f^{b})\mathbb{S}^j+2r^{-2}
 (H_L\hat D ^j\mathbb{S}+H_T\hat D _i\mathbb{S}^{ij})+r^{-2}\partial_brf^b\mathbb{S}^j
 \,,
\\
 &=\hat D ^j\mathbb{S}\bigg[-\frac1kr^{-n} D  _b(r^{n-1}f^{b})+2r^{-2}\left(H_L+H_T(-1+\frac{(n-1)\kappahere }{k^{2}}+\frac1n)\right)\\
 &\phantom{=\hat D^j\mathbb{S}\bigg[}-\frac1kr^{-2}\partial_brf^b\bigg]
 \,.
\end{split}
\]

\subsection{Divergence of a vector variation}
If, we have 
$$
T^{ab}=0\,, \ T^{ai}=r^{-1}f^a\mathbb{V}^i\,, \ T^{ij}=2r^{-2}H_T\mathbb{V}^{ij}\;,
$$
then from (\ref{divTb}) and (\ref{divTi})
\[
 \begin{split}
\nabla_\alpha T^{\alpha a}&=r^{-1}f^a\hat D _i\mathbb{V}^i
-g^{ac}\partial_cr\gamma_{ij}2r^{-1}H_T\mathbb{V}^{ij}
 \,,
\\
\nabla_\alpha T^{\alpha j}&=r^{-n} D  _b(r^{n-1}f^{b})\mathbb{V}^j+2r^{-2}
H_T\hat D _i\mathbb{V}^{ij}+r^{-2}\partial_brf^b\mathbb{V}^j
 \,.
 \end{split}
\]
Assume  moreover  that (see~\cite{KodamaIshibashiMaster}, Equations (5.7a), (5.7b) and (5.9))
$$
\Delta_\gamma \mathbb{V}=-k_V^2\mathbb{V}\;,\;\;\mathbb{V}_{ij}=-\frac1{2k_V}(\hat D _i
\mathbb{V}_j +\hat D _j
\mathbb{V}_i )\;,\;\;\hat D _i\mathbb{V}^i=0
 \,,
$$
where $k_V\ne 0$,
then
\[\begin{split}
\nabla_\alpha T^{\alpha a}&=0
 \,,\\
\nabla_\alpha T^{\alpha j}&=\mathbb{V}^j\left(r^{-n} D  _b(r^{n-1}f^{b})-\frac1{k_V}r^{-2}
H_T[-k_V^2+(n-1)\kappahere ]+r^{-2}\partial_brf^b\right)
 \,.
\end{split}\]

\subsection{Divergence of a tensor variation}
If we assume that
$$
T^{ab}=0\,, \ T^{ai}=0\,, \ T^{ij}=2r^{-2}H_T\mathbb{T}^{ij}\;,
$$
then from (\ref{divTb}) and (\ref{divTi})
\[\begin{split}
\nabla_\alpha T^{\alpha a}&=
-g^{ac}\partial_cr\gamma_{ij}2r^{-1}H_T\mathbb{T}^{ij}
 \,,
\\
\nabla_\alpha T^{\alpha j}&=2r^{-2}
H_T\hat D _i\mathbb{T}^{ij}
 \,.
\end{split}\]
Assume  moreover  that (see~\cite{KodamaIshibashiMaster} equation (5.1a) and (5.1b))
$$
\gamma_{ij}\mathbb{T}^{ij}=0\;,\;\;\hat D _i\mathbb{T}^{ij}=0
\,,
$$
then
\[
\nabla_\alpha T^{\alpha \mu} =0
  \,.
\]

\section{Divergence \& double divergence of $\deltagriem^S_{ij}$ and $\deltagriem^S_{aj}$}
 \label{sA17VIII1}

The aim of this appendix is to derive some divergence identities, as implicitly used in Appendix~\ref{appendicedecompo}.

We start by calculating the divergence of $\mathbb{S}^{\efindex}_{jk}$. Assuming $k\ne 0$ it holds that
\[
 \begin{split}
\hat{D}_i(\gamma^{ij}\mathbb{S}^{\efindex}_{jk})&=\frac{1}{k^2}\gamma^{ij}\hat{D}_i\hat{D}_j\hat{D}_k \mathbb{S}^{\efindex}+\frac{1}{n}\hat{D}_k \mathbb{S}^{\efindex}
\\
&=-\hat{D}_k \mathbb{S}^{\efindex}+\frac{1}{k^2}\gamma^{\ell m}\hat{R}_{mk}\hat{D}_\ell\mathbb{S}^{\efindex}+\frac{1}{n}\hat{D}_k \mathbb{S}^{\efindex}
\\
&=\left(\frac{1}{n} - 1+ \frac{(n-1)K}{k^2}\right)\hat{D}_k \mathbb{S}^{\efindex}\,,
 \end{split}
\]
where we have used $\hat{R}_{ij}=(n-1)K \gamma_{ij}$ and
\[
\nabla_i\nabla_j\nabla_k f = \nabla_k \nabla_j \nabla_i f + R^\ell{}_{jki} \partial_\ell f\,.
\]

The double divergence of $\mathbb{S}^{\efindex}_{jk}$ is then
\[
\hat{D}_\ell \hat{D}_i (\gamma^{\ell k}\gamma^{i j}\mathbb{S}^{\efindex}_{jk})=-\left(k^2\left(\frac{1}{n}-1\right)+(n-1)K \right)\mathbb{S}^{\efindex}\,.
\]
For $k\ne 0$ this gives for the divergence and double divergence of $\delta \mathring{g}^S_{ij}$
\begin{equation}\label{13VII17.1}\begin{split}
\hat{D}_i(\gamma^{ij}\delta\mathring{g}_{jk}^S)&=
\sum_{\efindex}2r^2\left[H^{\efindex}_{L,S}\hat{D}_k\mathbb{S}^{\efindex}+H^{\efindex}_{T,S}(\hat{D}_i \gamma^{ij}\mathbb{S}^{\efindex}_{jk})\right]\\
&=\sum_{\efindex}2r^2\hat{D}_k\mathbb{S}^{\efindex} \left[H^{\efindex}_{L,S}+H^{\efindex}_{T,S}\left(\frac{1}{n}-1+\frac{(n-1)K}{k^2}\right)\right] \,,
\end{split}\end{equation}
and
\begin{equation}\label{13VII17.2}\begin{split}
\hat{D}_\ell\hat{D}_i(\gamma^{\ell k}\gamma^{ij}\delta\mathring{g}_{jk}^S)&=
\sum_{\efindex}-2r^2k^2\mathbb{S}^{\efindex}\left[H^{\efindex}_{L,S}+H^{\efindex}_{T,S}\left(\frac{1}{n}-1+\frac{(n-1)K}{k^2}\right)\right]\,.
\end{split}\end{equation}
Similarly, for $\mathbb{V}^{\efindex}_{\ell m}$, assuming $k_V\ne 0$,
\[\begin{split}
\hat{D}_j(\gamma^{\ell j}\mathbb{V}^{\efindex}_{\ell m})&=
-\frac{1}{2 k_V}\gamma^{\ell j}\left(\hat{D}_j\hat{D}_\ell \mathbb{V}^{\efindex}_m + \hat{D}_j\hat{D}_m \mathbb{V}^{\efindex}_\ell \right)\\
&=-\frac{1}{2k_V}\left(-k_V^2\mathbb{V}^{\efindex}_m+\hat{D}_m\gamma^{\ell j}\hat{D}_j \mathbb{V}^{\efindex}_\ell + \hat{R}^j_{\ell jm}\gamma^{\ell i}\mathbb{V}^{\efindex}_i\right)\\
&=\frac{1}{2} \mathbb{V}^{\efindex}_m(k_V+(n-1)K)\,,
\end{split}\]
where we have used $\hat{D}_j(\gamma^{jk}\mathbb{V}_k)=0$~\cite[(5.7b)]{KodamaIshibashiMaster}, and
\[
\hat{D}_i\hat{D}_j(\gamma^{m i}\gamma^{\ell j}\mathbb{V}^{\efindex}_{\ell m})=(k_V+(n-1)K)\hat{D}_i(\gamma^{mi}\mathbb{V}_m)=0\,.
\]
Therefore
\begin{equation}\label{13VII17.3}
\hat{D}_i(\gamma^{ij}\delta\mathring{g}_{jk}^V)=
\sum_{\efindex}r^2 H^{\efindex}_{T,V} \mathbb{V}^{\efindex}_m(k_V+(n-1)K)\,,
\end{equation}
and
\begin{equation}\label{13VII17.4}
\hat{D}_\ell\hat{D}_i(\gamma^{\ell k}\gamma^{ij}\delta\mathring{g}_{jk}^V)=0\,.
\end{equation}

%

\section{The decomposition of $h$}
 \label{appendicedecompo}

In this appendix we wish to justify  the decomposition \eq{7VI17.101} of $\deltagriem$.

Recall that we denote by $x^a$ the coordinates $t$ and $r$, and by $x^i$ the coordinates on the compact boundaryless Riemannian manifold $(N,\gamma)$.
The metric functions $h_{ab}$ form obviously a family of $t$- and $r$-dependent scalar functions on $N$, similarly for the $\gamma$-trace $\gamma^{ij}h_{ij}$ of $h_{\mu\nu}$.  Next, the fields $h_{ai}dx^i$ form a family of $t$- and $r$-dependent covectors on $N$, while $h_{ij}dx^i dx^j$ forms a similar family of tensor fields on $N$. Removing the $\gamma$-trace of $h_{ij}dx^i dx^j$, one obtains a family of trace-free tensor fields on $N$.

We have  the standard two $L^2$-orthogonal  decompositions   for vector or covector fields, and for trace-free symmetric two-tensor fields (see e.g.~\cite{BergerEbin1969}):
$$
C^{\infty}(N,TN)=\mathrm{Im}\; d\oplus\ker d^*,
$$
$$
C^{\infty}(N,\mathring S_2N)=\mathrm{Im}\;\mathring{\mathcal L}\oplus\ker \div,
$$
where $\mathring{\mathcal L}$ is the conformal Killing operator,
$$
 \mathring{\mathcal L}(W)_{ij}= \hat D_i W_j + \hat D_j W_i - \frac 2 n \hat D^k W_k \gamma_{ij}
  \,.
$$
This allows to write any covector field uniquely as
$$
W_i=W_i^S+W_i^V\;,\; W^S\in \mathrm{Im}\; d\;,\;W^V\in\ker d^*
 \,.
$$
This decomposition can be applied to the fields $h_{ai}dx^i$.

Next, any trace free symmetric tensor field $u_{ij}dx^idx^j$ can be written in unique way as
$$
u _{ij}=u ^S_{ij}+u _{ij}^V+u _{ij}^T \;,
\ u ^S\in\mathring{\mathcal L}(\mathrm{Im}\;d)
\;,
\
u ^V\in\mathring{\mathcal L}(\ker d^*)
\;,
\ u ^T \in\ker \div
 \,.
$$
We apply this last decomposition to the $\gamma$-trace-free part of $h_{ij}dx^idx^j$ which, together with what been said above, results in \eq{7VI17.101}.

Note that (minus) the  divergence of $u$ is also decomposed as
$$
\hat D^ju _{ij}=\hat D^ju ^S_{ij}+\hat D^ju ^V_{ij}
 \,.
$$
Next, if we use
(compare Appendix~\ref{sA17VIII1})
$$
\begin{array}{lll}
\div \mathring{\mathcal L}&=&\hat D^*\hat D_g-\Ric+\frac{n-2}2dd^*\\
&=&\displaystyle{\Delta_{\mbox{\tiny Hodge}}-2\Ric+\frac{n-2}2dd^*}\\
&=& \displaystyle{d^*d+\frac{n}2dd^*-2\Ric},
\end{array}
$$
and if the metric $\gamma$ is Einstein, we have
$$
\div u ^S\in \mathrm{Im} \,d\;,\;\; \div u ^V\in \ker d^*.
$$
More precisely, if $\Ric=K(n-1)$ and  $u ^S=\mathring{\mathcal L}dS$ then
$$
\div u ^S=d\left[\frac n2 (d^*d S)-2K(n-1)S\right],
$$
and if $u ^V=\mathring{\mathcal L}V$, with $d^*V=0$ then
$$
\div u ^V=d^*d V-2K(n-1)V.
$$


%


\section{The asymptotics of scalar modes}
 \label{s5VII17.6}

Whatever the value of $k$, for large $r$ the potentials
$V_{S,k }$ tend  to
$$V_{S,k }(\infty)=\frac{(n-2)(n-4)}4.$$
We find the characteristic exponents of the master equation by solving the equation
$$
s(1-s)+V_{S,k }(\infty)=0,
$$
giving
$$
 s_\pm=\frac{1\pm|n-3|}2
 \,.
$$
This has to be compared with the asymptotics
\bel{5VI17.4}
{\Phi}_{S,\modeindex  }=O(\rho^{(n-2)/2})
  \,,
\ee
as obtained by directly translating the asymptotic behaviour of elements of the $L^2$-kernel of the linearised Einstein operator into the asymptotic behaviour of the master fields.
For $n\geq 4$ the decay \eq{5VI17.4} corresponds to $s_+$.
For $n=3$ there is only one index, implying logarithmic terms in an asymptotic expansion.
For $n=2$ the decay \eq{5VI17.4} corresponds to $s_-$, which is zero, but we will improve this asymptotics below.

\subsection{Estimate of $\Phi_{S,\efindex}$ in any dimension $n$}
 \label{sec:falloff_alln}

From $|h|_{\mathring\griem}=O(\rho^{n+1})$ we obtain for the components in $(t,r,z^j)$ coordinates $(j=2,\dots, n+1)$
\begin{equation}
\begin{aligned}\label{20IX17.3}
h_{tt}&=O(r^{1-n})\,, & h_{tr}&=O(r^{-1-n})\,, &  h_{rr}&=O(r^{-3-n})\,,\\
 h_{tj}&=O(r^{1-n})\,, & h_{rj}&=O(r^{-1-n})\,, & h_{jk}&=O(r^{1-n})\,.
\end{aligned}
\end{equation}
By \eqref{06IX17.1}
\[
 f^S_{tt,\modeindex}=O(r^{1-n})\,,\quad f^S_{tr,\modeindex}=O(r^{-1-n}) \,,\quad f^S_{rr,\modeindex}=O(r^{-3-n})\,,
\]
\[
f^S_{t,\modeindex}=O(r^{-n})\,,\quad f^S_{r,\modeindex}=O(r^{-2-n})\,,\quad H^S_{L/T,\modeindex}=O(r^{-1-n})\,.
\]
Then we have for $X_{a,\modeindex}$, defined in \eqref{4VII17.25},  assuming $k\ne 0$,
\[
X_{t,\modeindex}=O(r^{1-n})\,,
\]\[
\quad X_{r,\modeindex}=\frac{r}{k}(f_r+\frac{r}{k}D_r H_T)=\frac r k \left(O(r^{-2-n})+\frac r k O(r^{-2-n})\right)=O(r^{-n})\,,
\]
and~\cite[Equation~(2.7)]{KodamaIshibashiMaster}
\[\begin{split}
F_{\modeindex}:=&\, \HLSI+\frac{\HTSI}{n}+r^{-1}D^ar X_{a,\modeindex}\\
=&\, O(r^{-1-n})+r^{-1}g^{rr}X_{r,\modeindex}\\
=&\, O(r^{-1-n})+O(r^{1-n})=O(r^{1-n})\,,\\
F_{ab,\modeindex}:=&\, \fabSI+D_a X_{b,\modeindex}+D_b X_{a,\modeindex},
\end{split}\]\[
F_{tt,\modeindex}=O(r^{1-n})\,,\quad F_{tr,\modeindex}=O(r^{-n})\,,\quad F_{rr,\modeindex}=O(r^{-n-1})\,.
\]
The rescaled quantities, defined as~\cite[(2.13)]{KodamaIshibashiMaster}
\[
\tilde{F}_{\modeindex}:=r^{n-2}F_{\modeindex}\,,\quad \tilde{F}_{ab,\modeindex}:=r^{n-2}F_{ab,\modeindex}\,,
\]
are
\[
\tilde{F}_{\modeindex}=O(r^{-1})\,,\qquad \tilde{F}_{tt,\modeindex}=O(r^{-1})\,,\quad \tilde{F}_{tr,\modeindex}=O(r^{-2})\,,\quad \tilde{F}_{rr,\modeindex}=O(r^{-3})\,.
\]
Then~\cite[(2.20)]{KodamaIshibashiMaster}
\[
X_{\modeindex}:=\tilde{F}^t_{t,\modeindex}-2\tilde{F}_{\modeindex}=O(r^{-1})\,,
\quad Y_{\modeindex}:=\tilde{F}^r_{r,\modeindex}-2\tilde{F}_{\modeindex}=O(r^{-1})\,,
\quad Z_{\modeindex}:=\tilde{F}^r_{t,\modeindex}=O(1)\,,
\]
and finally~\cite[(3.1)]{KodamaIshibashiMaster}
\[
\Phi_{S,\modeindex}:=
H_{\modeindex}^{-1} r^{1-n/2}
(n\tilde{Z}_{\modeindex} - r (X_{\modeindex} + Y_{\modeindex}))
=O(r^{1-n/2})=O(\rho^{n/2-1})\,,
\]
where
\[
H_{\modeindex}:=m+x n(n+1)/2=O(1)\,,
\]\[
m:=k(\modeindex)^2-n K\,,\quad x:=2\mu r^{1-n}
\]
and $\tilde Z$ is of the same order as $Z$.

\subsection{Estimate of $\Phi_S$ in  dimension $n=2$}
 \label{s5VII17.1}

We revisit the preceding equations when $n=2$.
From
\cite[(2.11),(2.13)]{KodamaIshibashiMaster}
we see that
$$
 \tilde{F}={F}\,,\;\;F^a{}_a=0
  \,.
$$
Let us write
$$
H_T=h_tr^{-3}+O(r^{-3-\epsilon})\;,\;\;  D_rH_T=-3h_tr^{-4}+O(r^{-4-\epsilon}),
$$
where $\epsilon>0$.
From the estimates in Section~\ref{sec:falloff_alln},
$$
X_r=-3\frac1{k^2}h_tr^{-2}+O(r^{-2-\epsilon})\;,\;\;F=-3\frac1{k^2}h_tr^{-1}+
O(r^{-1-\epsilon})
$$
$$
F_{tt}=O(r^{-1})\;,\;\;F_{tr}=O(r^{-2}),
$$
and
$$
F_{rr}=2D_r(r^2D_rH_T)+O(r^{-3-\epsilon})=12h_tr^{-3}+O(r^{-3-\epsilon})
 \,.
$$
From $F^a{}_a=0$ we see that $h_t=0$, so $F$ and then $X+Y$ decay faster. We deduce from the definitions of $X$ and $Y$ that there exists $\epsilon>0$
 such that
$$
 F,X,Y=O(r^{-1-\epsilon})
  \,.
$$
This information inserted into~\cite[(2.24d)]{KodamaIshibashiMaster}  gives
$$
\tilde Z=O(r^{- \epsilon})
\,.
$$
In conclusion
$$
\Phi_{S,\modeindex}=O(r^{- \epsilon}),
$$
tends to zero at infinity.

\section{The asymptotics of vector modes}
 \label{s5VII17.8}

In order to estimate the rate of decay of $\Phi_{V,\modeindex}$ we start with~\cite[(5.10)]{KodamaIshibashiMaster}
\begin{align*}
F_r&=f_r+\frac{r}{k_V}D_r H_T=O(r^{-2-n})+O(r^{-1-n}))=O(r^{-1-n})\,,\\
F_t&=f_t+\frac{r}{k_V}D_t H_T=O(r^{-n})\,.
\end{align*}
From~\cite[(5.12)]{KodamaIshibashiMaster} we have
\[
r^{n-1}F_a=\epsilon_a{}^b D_b\Omega\,,
\]
where $\epsilon_{ab}=O(1)$. Therefore
\[
D_t\Omega=O(1)\,,\quad D_r\Omega=O(r^{-3})\,,
\]
and $\Omega=O(1)$. By~\cite[(5.13)]{KodamaIshibashiMaster} we have
\[
\Phi_{V,\modeindex}=r^{-n/2}\Omega=O(r^{-n/2})=O(\rho^{n/2})\,.
\]
%

%

\section{The linearised Birkhoff theorem}
 \label{ss2IX17.1}

In the case $K=1$, $n=2$, a gauge transformation $h_{\mu\nu}\to h_{\mu\nu}+\mathcal{L}_Y \zgriem_{\mu\nu}$, with gauge vector $Y$, takes the form
\beal{18VIII17.1}
h_{tt}&\to&h_{tt}+ Y^r \partial_r \VtwoOrF +2 \VtwoOrF  \partial_t Y^t
\,,
\\
\label{27VIII17.1}
h_{tr}&\to&h_{tr}+ \VtwoOrF^{-1} \partial_t Y^r+\VtwoOrF  \partial_r Y^t
\,,
\\
\label{27VIII17.2}
h_{rr}&\to&h_{rr}+ Y^r \partial_r \VtwoOrF^{-1}+2\VtwoOrF^{-1} \partial_r Y^r
\,,
\\
\label{27VIII17.3}
h_{\theta\varphi}&\to&h_{\theta\varphi} + r^2\sin^2 \theta \partial_\theta Y^\varphi+r^2\partial_\varphi Y^\theta
\,,
\\
\label{27VIII17.4}
h_{\varphi\varphi}&\to&h_{\varphi\varphi}+ 2r^2 \sin^2 \theta (r^{-1}Y^r+ Y^\theta \cot \theta +\partial_\varphi Y^\varphi )
\,,
\\
\label{27VIII17.5}
h_{\theta\theta}&\to&h_{\theta\theta}+ 2r Y^r+2r^2 \partial_\theta Y^\theta
\,,
\\
\label{04IX17.1}
h_{t\theta}&\to&h_{t\theta}+ \VtwoOrF  \partial_\theta Y^t+r^2\partial_t Y^\theta
\,,
\\
\label{04IX17.2}
h_{t\varphi}&\to&h_{t\varphi}+ \VtwoOrF  \partial_\varphi Y^t + r^2 \sin^2 \theta \partial_t Y^\varphi
\,,
\\
\label{04IX17.3}
h_{r\theta}&\to&h_{r\theta}+ \VtwoOrF^{-1}\partial_\theta Y^r + r^2 \partial_r Y^\theta
\,,
\\
\label{04IX17.4}
h_{r \varphi}&\to&h_{r\varphi}+ \VtwoOrF^{-1} \partial_\varphi Y^r + r^2\sin^2\theta \partial_r Y^\varphi
\,.
\eea

We consider an $\ourellnormal=0$  linearised solution $h_{\mu\nu}$  of the Einstein equations, i.e.
\bel{2IX17.1}
 h_{ab}= h_{ab}(t,r)
 \,,
 \quad
 h_{ia} \equiv 0
 \,,
 \quad
 h_{ij} = \fpsi (t,r) \zgriem _{ij}
 \,.
\ee
We assume that $h\in L^2$ and is in the kernel of the operator $P_L$. We set
\begin{equation}\label{2IX17.6}
 Y^r =  r \fpsi   /2  = O(r^{-2})
 \,,
 \quad
 Y^i \equiv 0
\end{equation}
(recall that $O(r^{-2})$ is understood for large $r$),
which implies
$$
 h_{ij} = \mcL_Y \zgriem_{ij}
 \,.
$$
Recall that $r_0>0$ has been defined as the largest zero of $f$. Let $\epsilon>0$.
We define $Y^t_\epsilon$  by integrating \eq{27VIII17.1} in $r$ and cutting-off  near $r_0$:
\begin{equation}\label{2IX17.2}
  Y^t_\epsilon =   -  \chi_\epsilon(r)\int_{r }^\infty
   \VtwoOrF^{- 1}(h_{rt} -  \partial_t Y^r  \VtwoOrF^{-1}) dr
    = O(r^{-4})
    \,,
\end{equation}
where $\chi_\epsilon$ is a smooth function which is identically equal to one for $r\ge r_0+\epsilon$ and zero for $r\in[r_0,r_0+\half \epsilon]$.
(The vector field defined by the integral above without the cut-off  might be singular at $r=r_0$; we will see at the end of the argument that the cut-off was unnecessary, but this is not clear from the outset.
Note that if $\epsilon_1 > \epsilon_2 $ then $Y_{\epsilon_1}(r)=Y_{\epsilon_2}(r)$ for $r> r_0+\epsilon_1$.)

Then, for   $r>r_0+ \epsilon $,
$$
 h_{tr} = \mcL_{Y_\epsilon} \zgriem_{tr}
 \,.
$$
%
An analysis of the components of $\mcL_{Y_\epsilon} \zgriem$, using (\ref{18VIII17.1})-(\ref{04IX17.4}), proves that
$$
 |\mcL_{Y_\epsilon} \zgriem|^2_\zgriem=O(r^{-6}),
$$
so $\mcL_{Y_\epsilon} \zgriem\in L^2$.

Set
\begin{equation}\label{2IX17.3}
  \hat h_{\mu\nu} = h_{\mu\nu} - \mcL_{Y_\epsilon} \zgriem_{\mu\nu}
  \,,
\end{equation}
thus $\hat h_{\mu\nu}$ is a solution of the linearised Einstein equations with all components vanishing for large $r$ except possibly $h_{tt}$ and $h_{rr}$. Now, a variation of the metric arising from a variation $\delta \mu$ of the mass  in the coordinate system \eq{21III17.1} takes the form
\begin{equation}\label{2IX17.4}
    2\frac{\delta \mu }{r} ( - dt^2 +  \VtwoOrF^{-2} dr^2 )
  \,,
\end{equation}
which suggests that it might be convenient to define new functions $\tilde h_{rr}$ and $\tilde h_{tt}$ as
\begin{equation}\label{2IX17.5}
  \tilde h_{rr}:= r \VtwoOrF^2 \hat h_{rr}
  \,,
  \quad
  \tilde h_{tt}:= r (\hat h_{tt} + \VtwoOrF^2 \hat h_{rr})
  \,.
\end{equation}
Inserting \eqref{2IX17.5} into the linearised Einstein tensor $G'[h]_{\mu\nu}$ one finds, again for  $r>r_0+\epsilon $,
\begin{equation}\label{2IX17.8}
  G_{tr}'[h] = \frac{  \partial_t\tilde h_{rr} }{r
   \left(-2 \mu +r^3+r\right)}
    \,,
\end{equation}
thus $\tilde h_{rr}$  depends at most upon $r$. One can now eliminate the second radial derivative of $\tilde h_{tt}$ between the $G_{tt}$ and $G_{rr}$ equations, obtaining
\begin{equation}\label{5IX17.1}
   \partial_r\left(\frac{\tilde h_{tt}}{r \VtwoOrF  }\right)
    =0
   \,.
\end{equation}
Hence,
\begin{equation}\label{2IX17.9}
  \tilde h_{tt} = C(t) r\VtwoOrF
  \,.
\end{equation}
for   $r>r_0+\epsilon $, for some function $C$ depending only upon $t$.
Inserting all this into the equations $G_{ij}=0$ gives $\partial_r \tilde h_{rr}=0$, and thus $\tilde h_{rr}$ is a constant, say $2\delta \mu$.

In terms of $\hat{h}_{rr}$ and $\hat{h}_{tt}$ we now have, for  $r>r_0+\epsilon $,
\begin{equation}\label{4IV18.1}
	\hat{h}_{tt}=f C(t) -\frac{2\delta\mu}{r}\,,\quad \hat{h}_{rr}=\frac{2\delta\mu}{r f^2}\,.
\end{equation}
%

This is not in $L^2$ unless $C\equiv 0$. Further, our gauge-transformed field with $C(t)=0$ is a variation of the mass, and therefore, as shown in Section \ref{sA7IV17.1}, it is in $L^2$ if and only if the mass takes the critical value.

For all masses except the critical one the tensor field $h$ is therefore pure gauge for  $r>r_0+\epsilon $. Since $\epsilon$ is arbitrary, we conclude that for  $r>r_0$ we have
\begin{equation}\label{4VI18.21}
h_{\mu\nu} = \mcL_{Y_0} \zgriem_{\mu\nu}
\,,
\end{equation}
where $Y_0$ is defined as in \eqref{2IX17.2} with $\chi_0\equiv 1$. As $h$ has been assumed to be smooth, it is simple to show from \eqref{4VI18.21} that $Y_0$ is smooth across the rotation axis, so that $h$ is pure gauge everywhere, as desired.

If we denote by $\hSdiamond{}$ the $\ourellnormal=0$-part of $h $,
and  by $\YSdiamond{}$ the  gauge vector field $Y_0$ just constructed, we have, for $\mu\neq\mu_c$,
\begin{equation}\label{4IX17.41+}
  \hSdiamond{} = \mcL_{\YSdiamond{}} \zgriem
  \,,
  \qquad
   | \YSdiamond{}|_{\zgriem} = O(r^{-3})
  \,.
\end{equation}

\section{The $l=1$ modes for $K=1$, $n=2$}
 \label{s9IX17.1}

In this section we analyze the $\ourellnormal=1$ modes when $(\Nman,\gamma_{ij})$ is a two-dimensional round unit sphere.  We follow the treatment of the Lorentzian case in~\cite{Dotti2016}, which requires only trivial modifications when addressing the Riemannian setting. We present the argument here because of  the need  of establishing estimates for the gauge vector fields.

As explained in~\cite[end of Section~2.2]{IshibashiWaldIII} (compare~\cite[Section~II.A]{Dotti2016}),
in the case at hand a general metric perturbation can be decomposed as $h_{\mu\nu}=h^{(-)}_{\mu\nu}+h^{(+)}_{\mu\nu}$, where $h^{(-)}_{\mu\nu}$ is the \emph{odd} and $h^{(+)}_{\mu\nu}$ the \emph{even} part. The linearised Einstein equations decouple into separate equations for the odd and even parts.

\subsection{Odd perturbations}
 \label{ss11IX17.1}

Odd perturbations take the form~\cite[Equation~(35)]{Dotti2016}
\bel{08IX17.1}
h^{(-)}_{\mu\nu}=
\begin{pmatrix}
	0 &
	 r^2h_a^{(l,m)}\hat{\epsilon}_i{}^j\hat{\nabla}_j S_{(l,m)} \\
	r^2h_a^{(l,m)}\hat{\epsilon}_i{}^j\hat{\nabla}_j S_{(l,m)} &
	 2r^4 k^{(l,m)}\hat{\epsilon}_{(i}{}^m\hat{\nabla}_{j)}\hat{\nabla}_m S_{(l,m)}
\end{pmatrix}\,,
\ee
where $h_a^{(l,m)}$ and $k^{(l,m)}$ are functions of $t$ and $r$, $\hat{\nabla}$ is the covariant derivative on the sphere, $\hat{\epsilon}_{ij}=\sin \theta (\delta_{i}^{\theta}\delta_j^{\phi}-\delta_i^{\phi} \delta_j^{\theta})$
and $S_{(l,m)}$ are the scalar spherical harmonics.

Restricting to $l=1$ modes, the $ij$ components of $h^{(-)}_{\mu\nu}$ vanish and the $ai$ components take the form
\bel{08IX17.3}
h^{(-)}_{ai}=\sum_{m=-1}^1\sqrt{\frac{3}{4\pi}}h_a^{(l=1,m)}J_{(m)i}
\ee
where the $J_{(m)}$'s form a basis of Killing vector fields on $S^2$.

Gauge transformations defined by a gauge vector $Y$ of the form~\cite[(28)]{Dotti2016}
\[
Y^a=0\,,\quad Y^i=r^4 \hat{\epsilon}^{ij}\hat{\nabla}_j f_Y
 \\,
\]
for some function $f_Y$, preserve the odd character of the perturbation and those with
\[
f_Y=\sum_{m=-1}^1 f_Y^{(m)} S_{(l=1,m)}\,,
\]
stay within the $l=1$ modes. The effect of such a gauge transformation on the perturbation is given by~\cite[(73)]{Dotti2016}
\[
h^{(-)}_{ai}\to
\sum_{m=-1}^1\sqrt{\frac{3}{4\pi}}(h_a^{(l=1,m)}+r^2\partial_a f_Y^{(m)})r^{-2} J_{(m)i}\,,
\]
with all other components unaffected.

Defining $\hat{h}^{(-)}$ by $h^{(-)}_{\mu\nu}=\hat{h}^{(-)}_{\mu\nu}+\mathcal{L}_\Ydiamondm  {} \zgriem$ with a gauge vector $\Ydiamondm  {}$ defined as $(\Ydiamondm  {})^ a:=0$ and
\bel{08IX17.2}
 (\Ydiamondm  {})^ i=-r^{4}\hat{\epsilon}^{ij}\partial_j
 \big(
  \sum_{m=-1}^1 S_{(l=1,m)}\int h_r^{(l=1,m)} r^{-2}dr
   \big)
    =O(r^{-4})\,,
\ee
the components $\hat{h}^{(-)}_{ri}$ vanish, leaving only $\hat{h}^{(-)}_{ti}$. The norm of the gauge part is found to be
\[
|\mathcal{L}_\Ydiamondm  {}\zgriem|^2_{\zgriem}=O(r^{-6})\,,
\]
and, as $f_Y^{(m)}$ is regular at $r_0$ and therefore $f^{-1}\partial_t f_Y^{(m)}$ is as well, $\mathcal{L}_\Ydiamondm  {}\zgriem\in L^2$.

As the $h^{(-)}_{ti}$-components are of the form \eqref{08IX17.3}, we can set one of them to zero by a rotation, leaving $\hat{h}^{(-)}_{t\varphi}=\sqrt{\frac{3}{4\pi}}\hat{h}^{(l=1,0)}_t  \sin^2\theta$,
for some functions $\hat{h}^{(l=1,0)}_t (t,r)$,
as the only non-zero component of the perturbation.

The linearised Einstein equations give
\beal{08IX17.4}
r^2 \partial_r^2 \hat{h}^{(l=1,0)}_t -2 \hat{h}^{(l=1,0)}_t=0
\,,\\ \label{08IX17.5}
2 \partial_t \hat{h}^{(l=1,0)}_t-r \partial_r\partial_t \hat{h}^{(l=1,0)}_t=0
\,.
\eea
%
Integrating \eqref{08IX17.5} twice, we obtain
\[
\hat{h}^{(l=1,0)}_t=C_1 r^2 t + g(r)\,,
\]
where $C_1$ has to vanish because of the periodicity of the $t$ coordinate. Inserting this into \eqref{08IX17.4} leads to
\[
\hat{h}^{(l=1,0)}_t=C_2 r^2+C_3 r^{-1}
 \,.
\]
As the tensors $dt dx^i$ are not smooth at the axis of rotation $r=r_0$ we require $C_2 r_0^2+C_3 r_0^{-1}=0$, i.e.
\[
\hat{h}^{(l=1,0)}_t=C_2\frac{r^3-r_0^3}{r}\,.
\]
Perturbations of this form are exactly the variations of angular momentum within the Riemannian Kerr anti-de Sitter family which are analyzed in Section \ref{ss3IX17.1}.
As these variations are not in $L^2$ we have $\hat{h}=0$,
 and if we denote by $
  h^{\diamond 1 -} $ the odd part of the $l=1$ modes, we obtain
\begin{equation}\label{10IX17.12}
  h^{\diamond 1 -} = \mcL_{\Ydiamondm{}} \zgriem
  \,,
  \quad
  |\Ydiamondm{}|_\zgriem = O(r^{-3})
   \,.
\end{equation}

\subsection{Even perturbations}
 \label{ss11IX17.2}

Even $\ourellnormal=1$ solutions of the linearised Einstein equations can be parameterised as~\cite[Equation~(36)]{Dotti2016}
\begin{equation} \label{pert+}
 (h_{\alpha \beta}^{(+)} ) = \left( \begin{array}{cc}
 h_{ab}^{(\ell=1)}
  &
   \hat D_i q_a^{(\ell=1)}
\\
 \hat D_i q_b^{(\ell=1)}
  &
   \frac{r^2}{2} J^{(\ell=1)} \gamma_{ij}
 \end{array} \right).
\end{equation}
Under gauge transformations
with gauge-vector $Y$ of the form
\begin{equation} \label{vec+}
(Y_\alpha ) = (Y_a, r^2\; \hat D_i X)
 \,,
\end{equation}
$(h_{\alpha \beta}^{(+)} )$ transforms to $(\hat{h}_{\alpha \beta}^{(+)} )$ given by~\cite[Equation~(122)]{Dotti2016}
\begin{equation} \label{pert+gauge+1}
\left( \begin{array}{cc}
h_{ab}^{(\ell=1)} + \tilde D_a Y_b^{(\ell=1)}  + \tilde D_b Y_a^{(\ell=1)}  & \hat D_i (q_a^{(\ell=1)} + r^2 \tilde D_a X^{(\ell=1)} + Y_a^{(\ell=1)})
\\
 \hat D_i (q_b^{(\ell=1)} + r^2 \tilde D_b X^{(\ell=1)} + Y_b^{(\ell=1)})
  &\;\;\;\;
   \frac{r^2}{2} \gamma_{ij}  (J^{(\ell=1)}
  - 4 X^{(\ell=1)}  + \frac{4}{r} Y^a_{(\ell=1)}  \tilde D_a r )
 \end{array}
  \right)
\,.
\end{equation}

According to~\cite[Section~IV.A.2]{Dotti2016}, it is convenient to define $(X,Y_a)$ by solving the following system of equations:
\begin{eqnarray}
    & r^2D_tX^{(\ell=1)}+Y_t=-q_t^{(\ell=1)}=O(r^{-1})
     \,,&
 \label{11IX17.11}
\\
   &
 r^2D_rX^{(\ell=1)}+Y_r=-q_r^{(\ell=1)}=O(r^{-3})
  \,,
   &
 \label{11IX17.12}
\\
    &
 D^b Y_b=-\frac12
  \zgriem^{ab}h_{ab}^{(\ell=1)}=O(r^{-3})
 \,.
 &
 \label{11IX17.13}
\end{eqnarray}
%
%
 With this choice, $\hat{h}^{(+)}$ satisfies
\bel{19IX17.07}
	\hat{h}^{(+)}_{ai}=0\,,\qquad
	\zgriem^{ab}\hat{h}^{(+)}_{ab}=0\,.
\ee
Note that \eqref{11IX17.11}-\eqref{11IX17.13} imply
\begin{equation}\label{11IX17.14}
 D^b(r^2D_bX)=\frac12\zgriem^{ab}h_{ab}-D^bq_b
 \,.
\end{equation}
%
The homogeneous version of the equation \eqref{11IX17.14} for $X$ has no non-trivial solutions tending to zero at infinity by the maximum principle. The operator at the left-hand side of \eqref{11IX17.14} has indicial exponents in $\{0,-3\}$, and therefore \eqref{11IX17.14} has a unique solution $X = O(r^{-3})$ which is a linear combination of $\ourellnormal=1$ spherical harmonics.

%
The conditions \eqref{19IX17.07} do not fix the gauge uniquely: an additional gauge transformation satisfying
\bel{19IX17.01}
r^2 D_a X+Y_a=0\,,\quad D^a Y_a=0\,,
\ee
preserves the form of $\hat{h}^{(+)}$.

By a rotation we set $J=J^{(1)}S_{(l=1,m=1)}$. We define new variables $C_a$ as
\begin{equation}\label{tt1s}
{\hat{h}}_{ab}^{(\ell=1,T)} = \frac{1}{\VtwoOrF} \left[  C_a^{(\ell=1)} D_b r + C_b^{(\ell=1)} D_a r - \tilde g_{ab} C_d^{(\ell=1)} D^d r \right], \;\;\;
C_a^{(\ell=1)} = {\hat{h}}_{ab}^{(\ell=1,T)} D^b r,
\end{equation}
and decompose them into modes
\bel{18IX17.1}
C_a=\sum_{m=1}^3 C^{(m)}_a S_{(l=1,m)}\,.
\ee
Eliminating $C_r$ between the $m=2,3$ components of $G_{tr}'[h]$ and $G_{t\phi}'[h]$ we obtain
\bel{18IX17.2}
\partial_r C_t^{(2,3)}+\frac{C_t^{(2,3)}}{r-2 \mu+r^3}=0\,,
\ee
and therefore, by integration from the axis of rotation at $r_0$, the $C_t^{(2,3)}$ depend only on $r$. The $m=2,3$ components of $G_{t\phi}'[h]$ show $\partial_t C_r^{(2,3)}=-\partial_r C_t^{(2,3)}$ which implies that $C_t^{(2,3)}$ vanishes and $C_r^{(2,3)}$ can only depend on $r$. Eliminating second derivatives between $G_{tt}'[h]$ and $G_{rr}'[h]$ shows that $C_r^{(2,3)}$ vanishes as well.

To handle the $m=1$ equations we define new variables $Z_r^{(1)}(t,r)$ and $Z_t^{(1)}(t,r)$ by
\bel{18IX17.3}
C_r^{(1)}=6 \mu \partial_r Z_r^{(1)}+\frac{r}{2}\partial_r J^{(1)}\,,\quad
C_t^{(1)}=Z_t^{(1)}-\frac{3\mu-r}{2}\partial_t J^{(1)}-r \VtwoOrF 6\mu \partial_t\partial_r Z_r^{(1)}\,.
\ee
Note that this defines $Z_r^{(1)}$ only up to an irrelevant constant.

The $m=1$ component of $G_{tr}'[h]$ directly gives $Z_t^{(1)}=0$. Eliminating third order derivatives from the remaining equations we obtain
\bel{18IX17.4}
\partial_r J^{(1)}+4 r\VtwoOrF \partial^2_r Z_r^{(1)}+\left(12 r^2+8\right) \partial_r Z_r^{(1)}=0\,.
\ee
Differentiating the $m=1$ equations by $r$ and using \eqref{18IX17.4} to express derivatives of $J$ by $Z_r^{(1)}$ gives two fifth order and one fourth order equation for $Z_r^{(1)}$. Eliminating higher derivatives we finally obtain a third order equation for $Z_r^{(1)}$
\bel{19IX17.05}\begin{split}
r^2\VtwoOrF \partial_r D^a(r^2 D_a Z_r^{(1)}) +r^4\partial_r(r^{-2}\VtwoOrF ) D^a(r^2 D_a Z_r^{(1)})=0\,.
\end{split}\ee
This implies
\bel{19IX17.06}
D^a(r^2 D_a Z_r^{(1)})=\frac{Cr^2}{\VtwoOrF}\,,
\ee
with a constant $C$ which has to vanish for $Z_r^{(1)}$ to be regular at $r_0$.

We now consider the remaining gauge freedom. We see from \eqref{19IX17.01} that for any $X$ satisfying
\bel{19IX17.03}
D^a(r^2 D_a X)=0
\ee
there exists an associated $Y_a$ giving a gauge transformation which preserves \eqref{19IX17.07}.

Inserting the definition of our new variables into \eqref{pert+gauge+1} we find that under a gauge transformation satisfying \eqref{19IX17.01} the fields $Z_r^{(1)}$ and $J^{(1)}$ transform as
\beal{19IX17.02}
 \partial_r Z_r^{(1)} &\mapsto& \partial_r( Z_r^{(1)}+X^{(1)})
 \,,
\\
 J^{(1)} &\mapsto& J^{(1)}-4X^{(1)}+\frac{4\VtwoOrF}{r}Y^{(1)}_r\,,
\eea
where $X$ and $Y_a$ have been split into $l=1$ modes as for $C_a$ in \eqref{18IX17.1}.

As $Z_r^{(1)}$ satisfies \eqref{19IX17.03} we can render it independent of $r$ by a gauge transformation with $\partial_r X^{(1)}=-\partial_r Z_r^{(1)}$, which preserves \eqref{19IX17.07}.

With $\partial_r Z_r^{(1)}\equiv 0$ we see from \eqref{18IX17.4} that $J^{(1)}$ can only depend on $t$. From the remaining equations $\partial_t^2 J^{(1)}=0$, i.e. $J^{(1)}$ is constant.

We can exploit the remaining freedom in $X^{(1)}$ to set
\bel{20IX17.1}
X^{(1)}=\frac{J^{(1)}}{4}\,,\qquad Y_a=0\,,
\ee
obtaining $J^{(1)}\equiv 0$.
	This gives $Z_r^{(1)}=\text{const}$ and therefore $C_r\equiv C_t\equiv 0$.

We arrive at $h^{(+)}=\mathcal{L}_{\Ydiamond{}^{+}}\zgriem$ where $\Ydiamond{}^{+}$ is the combined gauge vector consisting of the part defined by \eqref{11IX17.11}--\eqref{11IX17.13}, that given by \eqref{19IX17.02} and that given by \eqref{20IX17.1}. From the asymptotics \eqref{20IX17.3} of $h$ and  from \eqref{18VIII17.1}--\eqref{04IX17.4}
with the right-hand sides being zero now,
we conclude that
\bel{20IX17.4}
 h^{\diamond 1 +} \equiv h^{(+)}= \mcL_{\Ydiamond{}^+} \zgriem
  \,,
  \quad
 |\Ydiamond{}^{+}|_\zgriem=O(r^{-1})
 \,.
\ee

An alternative, and somewhat simpler, proof can be given using~\cite{JezierskiWaluk2}. Since the last reference is only available in Polish so far~\cite{Waluk}, we felt it more appropriate to provide the argument above.

%
\section{A linearised Birkhoff-type theorem with $\kappahere=0$}
 \label{s7VII17.101}

In the case $K=0$  the gauge transformations take the form
\beal{06IX17.10}
h_{tt}&\to& h_{tt}+ Y^r \partial_r \VtwoOrF+2 \VtwoOrF  \partial_t Y^t
\,,
\\
\label{06IX17.11}
h_{tr}&\to& h_{tr}+ \VtwoOrF^{-1} \partial_t Y^r+\VtwoOrF  \partial_r Y^t
\,,
\\
\label{06IX17.12}
h_{rr}&\to&h_{rr}+ Y^r \partial_r \VtwoOrF^{-1}+2\VtwoOrF^{-1} \partial_r Y^r
\,,
\\
\label{06IX17.13}
h_{ij}&\to&h_{ij}+ \delta_{ij} 2r Y^r +r^2 (\partial_i Y^j+\partial_j Y^i)
\,,
\\
\label{06IX17.16}
h_{ti}&\to&h_{ti}+ \VtwoOrF  \partial_i Y^t+r^2\partial_t Y^i
\,,
\\
\label{06IX17.18}
h_{ri}&\to&h_{ri}+ \VtwoOrF^{-1}\partial_i Y^r + r^2 \partial_r Y^i
\,.
\eea

We consider a $k=0=k_V=k_T$ linearised solution of the Einstein equations, i.e. one that only depends on $t$ and $r$:
\bel{07IX17.1}
h_{\mu\nu}=h_{\mu\nu}(t,r)\,.
\ee
As there are (constant) harmonic vectors on the background with $K=0$ we have to consider both scalar and vector perturbations (the tensor ones are always controlled by the master functions).  Scalar perturbations can be treated analogously to the $K=1$ case in Appendix \ref{ss2IX17.1}: They take the form \eqref{2IX17.1} and by choosing a gauge vector $Y$ as in \eqref{2IX17.6} and \eqref{2IX17.2}, we can set, for $r>r_0+\epsilon$, $\hat{h}_{ij}\equiv 0$ and $\hat{h}_{tr}\equiv 0$ where $\hat{h}_{\mu\nu}=h_{\mu\nu}-\mathcal{L}_Y\zgriem_{\mu\nu}$ and $\epsilon>0$ is an arbitrary cutoff distance.

We define $\tilde{h}_{rr}$, $\tilde{h}_{tt}$ as
\begin{equation}\label{12IX17.1}
\tilde h_{rr}:= r^{n-1} \VtwoOrF^2 \hat h_{rr}
\,,
\quad
\tilde h_{tt}:= r^{n-1} (\hat h_{tt} + \VtwoOrF^2 \hat h_{rr})
\,,
\end{equation}
and insert into the linearised Einstein tensor $G'_{\mu\nu}[\hat h]$ using the equations in~\cite[Appendix B]{KodamaIshibashiSeto}. For $r>r_0+\epsilon$ the $G'_{tt}[h]$, $G'_{tr}[h]$, $G'_{rr}[h]$ components lead to
\beal{07IX17.4}
\tilde{h}_{rr}=\tilde{h}_{rr}(r)\,,\\
\tilde{h}_{tt}=C(t) r\VtwoOrF
\eea
where $C$ has to vanish for $h$ to be in $L^2$. Inserting this into $G'_{ij}[h]$ gives $\tilde{h}_{rr}=2\delta \mu$, for some constant $\delta \mu$.

The only remaining perturbations, up to gauge, are variations of the mass. As these are never in $L^2$ for $K=0$, the scalar part of the perturbation has to be pure gauge.
	
We now consider the vector part, which takes the form
\begin{equation}\label{09IV18.1}
h_{ab}=0\,,\quad h_{ai}=h_{ai}(r)\,,\quad h_{ij}=0\,.
\end{equation}

We choose a gauge vector, which we denote by  $\Ydiamond{}$, such that $\Ydiamond{}^a\equiv 0$ and $\Ydiamond{}^i=\Ydiamond{}^i(t,r)$.
We define $(\Ydiamond{})^i$ by integrating \eqref{06IX17.18}
to obtain $h_{ri}=\mathcal{L}_Y\zgriem_{ri}$:
\bel{07IX17.3}
(\Ydiamond{})^i=\int r^{-2} \left(h_{ri}
-\VtwoOrF^{-1}\partial_i (\Ydiamond{})^r \right) dr
=\int r^{-2} h_{ri} dr = O(r^{-n-2})\,.
\ee
Equations \eqref{06IX17.10}--\eqref{06IX17.18} show that
$$
|\mcL_{\Ydiamond{}} \zgriem|^2_\zgriem=O(r^{-2n-2}),
$$
and, as it is regular at $r_0$, $\mcL_{\Ydiamond{}} \zgriem\in L^2$.

The nontrivial linearised Einstein equations turn out to be
\beal{07IX17.5}
r^2 \partial_r^2 h_{ti}+r (n-2) \partial_r h_{ti}-2 (n-1) h_{ti}=0
\,,\\ \label{07IX17.7}
2 \partial_t \hat{h}_{ti}-r \partial_r\partial_t \hat{h}_{ti}=0
\,.
\eea

Integrating \eqref{07IX17.7} twice gives
\[
\hat{h}_{ti}=C_1 r^2 t + g(r)\,,
\]
where $C_1$ has to vanish because of the periodicity of the $t$ coordinate. Inserting this into \eqref{07IX17.5} leads to
\[
\hat{h}_{ti}=C_2 r^2+C_3 r^{1-n}
\]
where $C_2$ has to vanish to ensure $\hat{h}_{ti}$ is in $L^2$, while $C_3$ has to vanish because the tensors
$dt dx^i$ are not smooth at the axis of rotation $r=r_0$.

Thus $\hat{h}_{\mu\nu}\equiv 0$ and, if we denote by $\hdiamond{}$ the part of $h $ from which all higher modes have been removed (in the notation of \eq{06IX17.1}, and setting $\mathbb{S}^0(x^i)\equiv 1$ for simplicity,
\begin{equation}\label{7IX17.401}
\hdiamond{\mu\nu} dx^\mu dx^\nu= f_{ab,0}^S dx^a dx^b +
r f_{ai,0}^V dx^a dx^i +
2r^2( H^S_{L,0} \gamma_{ij} + H^T_{T,0} \mathbbm{T}^0_{ij})dx^i dx^j
\,,)
\end{equation}
we obtain
\begin{equation}\label{07IX17.11}
\hdiamond{}= \mcL_{\Ydiamond{}} \zgriem
\,,
\qquad
| \Ydiamond{}  | _\zgriem=O(r^{-n-1})
\,.
\end{equation}
%

%
\section{A linearised Birkhoff-type theorem with $\kappahere=-1$}
 \label{s03X17.1}

In the case $\kappahere=-1$  the gauge transformations take the form
\beal{03X17.01}
h_{tt}&\to& h_{tt}+ Y^r \partial_r \VtwoOrF+2 \VtwoOrF  \partial_t Y^t
\,,
\\
\label{03X17.02}
h_{tr}&\to& h_{tr}+ \VtwoOrF^{-1} \partial_t Y^r+\VtwoOrF  \partial_r Y^t
\,,
\\
\label{03X17.03}
h_{rr}&\to&h_{rr}+ Y^r \partial_r \VtwoOrF^{-1}+2\VtwoOrF^{-1} \partial_r Y^r
\,,
\\
\label{03X17.04}
h_{ij}&\to&h_{ij}
+\delta_{ij}\left(Y^r\frac{2r}{(x^{n+1})^2}-Y^{n+1}\frac{2r^2}{(x^{n+1})^3}\right)
\\ \nn
&&\phantom{h_{ij}}+\frac{r^2}{x^{n+1}} (\partial_i Y^j+\partial_j Y^i)
\,,
\\
\label{03X17.05}
h_{ti}&\to&h_{ti}+ \VtwoOrF  \partial_i Y^t+\frac{4r^2}{B^2}\partial_t Y^i
\,,
\\
\label{03X17.06}
h_{ri}&\to&h_{ri}+ \VtwoOrF^{-1}\partial_i Y^r + \frac{4r^2}{B^2} \partial_r Y^i
\,,
\eea
where we use the form $h_\kappahere=\sum_{i=2}^{n+1} (dx^i)^2 / (x^{n+1})^2$ of the hyperbolic metric.

 We consider the scalar part of a $l=0$ (i.e.\ $k=0$) solution of the linearised Einstein equations in dimension $n=2$. This will cover  the full $l=0$ case for $\kappahere=-1$, as the tensor part is controlled by the master functions and there are no  vector modes with $l=0$, as described in Section \ref{s5VII17.2}.

 Such a perturbation takes the same form as for the $\kappahere=1$ case, i.e.
\bel{03X17.07}
 h_{ab}= h_{ab}(t,r)
 \,,
 \quad
 h_{ia} \equiv 0
 \,,
 \quad
 h_{ij} = \fpsi (t,r) \zgriem _{ij}
 \,.
\ee
We define a gauge vector $\Ydiamond{\epsilon}$, according to \eqref{2IX17.6} and \eqref{2IX17.2}. This gives, for $r>r_0+\epsilon$ where $\epsilon>0$ is an arbitrary cutoff distance, $h_{tr}=\mathcal{L}_{\Ydiamond{\epsilon}} \zgriem_{tr}$ and $h_{ij}=\mathcal{L}_{\Ydiamond{\epsilon}}\zgriem_{ij}$.

As in \eqref{2IX17.3} we define $\hat{h}_{\mu\nu}$ as $h_{\mu\nu} - \mcL_{\Ydiamond{}} \zgriem$, then $\tilde{h}_{rr}$, $\tilde{h}_{tt}$ are defined as for $K=1$,
\begin{equation}\label{03X17.10}
\tilde h_{rr}:= r \VtwoOrF^2 \hat h_{rr}
\,,
\quad
\tilde h_{tt}:= r (\hat h_{tt} + \VtwoOrF^2 \hat h_{rr})
\,,
\end{equation}
and all this is inserted into the linearised Einstein tensor $G'_{\mu\nu}[\hat h]$. We find that, for $r>r_0+\epsilon$
\beal{03X17.11}
\tilde{h}_{rr}=2\delta\mu\,,
\quad
\tilde{h}_{tt}=C(t) r\VtwoOrF
 \,,
\eea
for some constant $\delta\mu$. As in the $K=1$ case, $C\equiv0$ as $h$ is assumed to be in $L^2$ and, up to gauge, the only perturbations remaining are variations of the mass. For the case $K=-1$ these are never in $L^2$.

We conclude that $\Ydiamond{0}$ is smooth everywhere and, if we denote again  by $\hdiamond{}$ the part of $h$ from which all higher modes have been removed,
we obtain
\begin{equation}\label{09X17.03}
\hdiamond{}= \mcL_{\Ydiamond{}} \zgriem
\,,
\qquad
| \Ydiamond{}  | _\zgriem=O(r^{-3})
\,.
\end{equation}
%

%


\section{The Riemannian  Kerr anti-de Sitter metrics}
 \label{ss3IX17.1}

The Riemannian  Schwarzschild-anti de Sitter metrics belong to the family of the Riemannian  Kerr anti-de Sitter metrics, parameterised with $m$ and an ``angular momentum parameter'' $a$. The variations of those metrics with respect to the parameter $a$ provide non-trivial solutions of the linearised Einstein equations at the Schwarzschild-anti de Sitter metric, which we need to analyse. For this is it is convenient to start with a discussion of the family of the Riemannian  Kerr anti-de Sitter metrics with  small parameter $a$. Our presentation follows~\cite{ChHoerzinger}, where Kerr-Newman-de Sitter metrics were considered.

In Boyer-Lindquist coordinates, after the replacements $a \rightarrow i a $ and $ t \rightarrow it$  the Kerr anti-de Sitter  metric becomes
\begin{eqnarray}
\griem&=& \frac{\Sigma}{\Delta_r} dr^2 +  \frac{\Sigma}{\Delta_\theta} d\theta^2 + \frac{\sin^2 (\theta)}{\Xi^2 \Sigma} \Delta_\theta ( a d t + (r^2 -a^2) d \varphi)^2 \nonumber \\
 &\phantom{=}&+ \frac{1}{\Xi^2 \Sigma} \Delta_r (dt - a \sin^2(\theta) d \varphi)^2\,,
\label{15III15.9}
\end{eqnarray}
where, after setting $\lambda = \Lambda/3$, we have
$
\Sigma=r^2-a^2 \cos^2(\theta)$ and
$$ \Delta_r=(r^2-a^2)\left( 1- \lambda r^2 \right) -2 \mhere  r  \,,
$$
$
\Delta_\theta = 1 - \lambda a^2 \cos^2(\theta)$, and $\Xi=1 - \lambda a^2$.

Keeping in mind that we are interested in the metric for small $a$, we consider $\mhere>0$ and we assume that the largest zero of $\Delta_r$, which we denote by $r_0$, is positive.
For $r\in[r_0,\infty)$ we introduce a new coordinate  $\rho $  defined as
\begin{eqnarray}
\rho = \int_{r_0}^r \frac{1}{\sqrt{\Delta_r}}  dr
  = \frac{2}{  \sqrt{ \kappa  }}\sqrt{ (r-r_0)}\mathbbm{1}_{1 }(r-r_0)\,,
\end{eqnarray}
where
\bel{10V15.11}
 \kappa:=  \left|\Delta_{r} ' |\right|_{r={r_0}} \neq 0
 \,,
\ee
and with a function  $\mathbbm{1}_{1}$ which is smooth near the origin and satisfies $\mathbbm{1}_{1}(0)=1$.
Inverting, it follows that
\begin{eqnarray}
r=r_0 + \frac{\kappa}{4} \rho ^2 \mathbbm{1}_2(\rho ^2)
 \,,
 \qquad
 \Delta_r =  \frac{\kappa^2}{4}  \rho ^2 \mathbbm{1}_3(\rho ^2)
 \,,
\end{eqnarray}
with functions $\mathbbm{1}_2$, $\mathbbm{1}_3$ which are smooth near the origin, with $\mathbbm{1}_2(0)=1=\mathbbm{1}_3(0)$.

Smoothness at $r=r_0$ requires that  $t$ defines a $2\pi \omega$-periodic coordinate, with
\bel{7V15.4}
    \omega
  := \frac{ 2 \Xi \left(r_0^2-a^2\right) }{ \Delta_{r} ' (r_0)  }
    \,.
\ee
In order to guarantee regularity near the intersection of the axis $\{\sin \theta=0\}$ with the axis $\{\Delta_r=0\}$, near $\theta=0$ we use a coordinate system $(\rho ,\tau,\theta,\phi )$,
with $t=\omega  \tau$  and $\phi $ defined through the formula
\bel{5V15.1}
d \varphi
:= \alpha  d\phi  +  \frac{a  }{ a^2-r_0^2 \color{black}} d t
  \equiv \alpha  d\phi  +  \frac{a  \omega }{ a^2-r_0^2 \color{black}} d \tau
 \,,
\ee
for some constants $\alpha ,\omega  \in \R^*$ which will be determined shortly by requiring $2\pi$-periodicity of $\tau$ and $\phi $.
In this coordinate system the metric takes the form
\bean
 g
  &  = &
   {\Sigma} \bigg\{
    d \rho ^2
    + \frac{1}{\Xi^2 \Sigma ^2}
        \bigg[ \frac{   \kappa^2 \omega ^2 \Sigma ^2}{4\left( r_0^2-a^2 \color{black} \right)^2} \mathbbm{1}_4(\rho  ^2,\sin^2(\theta)) \rho  ^2 d \tau^2
\\
 \nonumber
  &&
   \phantom{ xx}
        + \alpha  ^2 \big(\Delta_\theta \left(a^2-r ^2\right)^2
         +a^2 \Delta_r \sin ^2(\theta ) \big)
         \sin^2(\theta) d \phi ^2
\\
  &&
   \phantom{ xx}
 + F(\rho^2,\sin^2(\theta))
   \rho  ^2 \sin   ^2(\theta )
   d \tau d \phi
    \bigg]
 +  \frac{1}{\Delta_\theta} d  \theta^2
 \bigg\}
 \,,
\eeal{6V15.2}
for some smooth functions $\mathbbm{1}_4$ and $F$, with $\mathbbm{1}_4(0,y)=1$.
As is well known, when $(\rho ,\tau)$ are viewed as polar coordinates around $\rho =0$, the one form $\rho ^2 d \tau$ and the quadratic form $d\rho ^2 + \rho ^2 d \tau^2$ are smooth.
Similarly when $(\theta,\phi )$ are  polar coordinates around $ \theta=0$, the one form
$\sin^2(\theta)d \phi $ and the quadratic form $d\theta^2 + \sin(\theta)^ 2d \phi ^2$ are smooth. It is then easily inferred that
the requirements of $2\pi$-periodicity of $  \tau$ and $\phi $, together with
\bel{7V15.2}
 \frac{   \kappa^2 \omega^2 }{4\Xi^2\left( r_0^2-a^2 \color{black}\right)^2} =1
\,,
\quad
 \frac{ \alpha  ^2 \Delta_\theta^2 \left(a^2-r ^2\right)^2 }{\Xi^2 (r^2-a^2 \cos^2(\theta))^2}\bigg|_{  \theta =0} \equiv \alpha ^2= 1
 \,,
\ee
imply smoothness both of  the sum of the diagonal terms of the metric $g$  and of the off-diagonal term $g_{\tau\phi } d \tau d \phi $  on
$$
 \Omega:= \{(r,\tau,\theta,\phi )\in[r_0,\infty)\times S^1\times [0,\pi)\times S^1\}
 \,.
$$
Here $[r_0,\infty)\times S^1$ is understood as $\R^2$ with center of rotation at $r_0$, similarly $[0,\pi)\times S^1$ is understood as a disc $D^2$  of radius $\pi$.

The above calculations remain valid without changes near $\theta=\pi$.   When $\theta\in(0,\pi]$ we will denote by $\hat \tau$ and $\hat \phi $   the relevant angular coordinates, and $\hat \omega $, $\hat \alpha $ the corresponding coefficients. Thus,  for $\theta\in(0,\pi]$:
\bel{15V15.51}
 t=\hat \omega  \hat \tau\,,
 \quad
  d \varphi= \hat \alpha  d\hat \phi  +  \frac{a \hat \omega }{ a^2-r_0^2 \color{black}} d \hat \tau
 \,,
\ee
with
\bel{15V15.52}
\hat \omega  = \pm \omega
\,,
\quad
 \hat \alpha    = \pm 1
 \,.
\ee
We obtain likewise a smooth metric on the set
$$
 \widehat \Omega:= \{(r,\hat \tau,\theta,\widehat \phi )\in[r_0,\infty)\times S^1\times (0,\pi]\times S^1\} \approx \R^2 \times D^2
 \,.
$$

Disregarding issues of orientation, without loss of generality we can choose the plus signs above.
The manifold $M$,
obtained by patching together $\Omega$ with $\widehat \Omega$, using the obvious identifications resulting from the formulae
\bel{15V15.61}
\omega d\tau=\hat \omega  d\hat \tau\,,
 \quad
     \alpha  d  \phi  +  \frac{a   \omega }{ a^2-r_0^2 \color{black}} d  \tau
      = \hat \alpha  d\hat \phi  +  \frac{a \hat \omega }{ a^2-r_0^2 \color{black}} d \hat \tau
 \,,
\ee
is diffeomorphic to $\R^2 \times S^2$.

Note that while $d\varphi$ is a well defined one-form on $M$, the function $\varphi$ is a well defined coordinate-modulo-$2\pi$ on $M$ if and only if  $\frac{a  \omega }{ a^2-r_0^2 \color{black}}\in \Z^*$. We emphasise that it is \emph{not} necessary to impose this last restriction to obtain a well defined smooth Riemannian metric on $M$, and we will not impose it.

Differentiating \eq{15III15.9} with respect to $a$ in the $(\rho,\tau,\theta,\phi)$ coordinates we obtain
\begin{eqnarray}
  \label{kds.metric-a}
    \frac{dg}{da}\big|_{a=0} &=& -
    2 \alpha \omega
    \,
     \big(\frac{r^2-r_0^2}{r_0^2}+\VtwoOrF
     \big)  \sin^2 (\theta) \,d\tau \, d\phi
 \, .
\end{eqnarray}
%
Therefore variations with respect to the angular momentum parameter $a$ are never in $L^2$.

\section{The master equation for vector perturbations}
 \label{s5VII17.7}

The object of this appendix is to justify the convergences of the mode-decomposition series in the vector sector. We thus consider the vector projection $h^V$ of $h $, which we decompose into a complete
(cf., e.g.,~\cite{Boucetta1999})
 set of vector harmonics $\mVI$:
\bel{8VIII18.1}
 h^V_{ab}=0
 \,,
 \
 h^V_{ai} = r \sum_I \faVI \mVI
 \,,
 \
  h_{ij}^V= -  r^2 \sum_{I \,: \, k_V(I)\ne 0} \HTVI \frac 1 {k_V(I)}(\hat D_{i} \mVIj  + \hat D_{j} \mVI )
  \,,
\ee
where $k_V= k_V(I)$ in the last sum is determined by the corresponding eigenvalue of the vector Laplacian $\hat \Delta $ acting on $\mVI$:
\bel{8VIII18.2}
 \hat \Delta \mVI = - k_V^2(I) \mVI
 \,.
\ee
For $k\in \N$ let  $H^k(\Nman) $ denote the space of tensor fields on $\Nman$ of Sobolev regularity with $k$ derivatives. Standard functional analysis shows that we have
\bel{6IX17.21}
 \| D_{i_1}\cdots D_{i_\ell} \mVI \|_{L^2(\Nman)} \approx (1+k_V(I) )^\ell
 \,,
\ee
where we use $\approx$ to denote equivalence of norms,
hence
\bel{8VIII18.5}
  \| h^V_{ai}dx^i\|_{H^k(\Nman)}^2  \approx r^{2} \sum _I (1+ k_V(I) ^2)^ {k } |\faVI|^2
  \,.
\ee
Similarly, for any $j$,
\bel{8VIII18.5+}
    \|D_{a_1 \ldots a_j}  (r^{-1}h^V_{ai})dx^i\|_{H^k(\Nman)}^2  \approx  \sum _I (1+ k_V(I) ^2)^ {k } |D_{a_1 \ldots a_j}\faVI|^2
  \,.
\ee

The Ishibashi-Kodama master functions $\Phi_{V,I}$ are  defined, for
$$
 k_V(I)> (n-1)K
 \,,
$$
as solutions of the (integrable)
system (cf., e.g.,~\cite[Equations~(2.13)-(2.15)]{IshibashiKodamaStability})
  of PDE's
\begin{equation}\label{8VIII17.3}
  \partial_b(r^{n/2}\Phi_{V,\modeindex} )  =
  -r^{n-1}\epsilon_{b}{}^{a}
      \big(
       \faVI + \frac r {k_V(I)} D_a \HTVI
       \big)
       \,.
\end{equation}
Note that $r\ge r_0>0$ throughout, where $r_0$ is the location of the event horizon, so there is no issue of singularities arising in the equations  at $r=0$ in the current case.

If $\Phi_{V,I}$ is known we have, formally,
\beal{8VIII18.4}
 \lefteqn{
  \phantom{xxx}
   D_a (r^{-2} h_{ij}^V)
  =
    -   \sum_{I \,: \, k_V(I)\ne 0}
   \frac 1 {k_V(I)}D_a \HTVI(\hat D_{i} \mVIj  + \hat D_{j} \mVI )
   }
&&
\\
  &  =  &
         -   \sum_{I \,: \, 0<  \ k_V(I) \le (n-1)\kappahere }
   \frac 1 {k_V(I)}D_a \HTVI(\hat D_{i} \mVIj  + \hat D_{j} \mVI )
\nn
\\
  &    &
 +
   r^{-1}  \sum_{I \,: \, k_V(I)> (n-1)\kappahere }
  \big( \faVI
  -   {r^{-n+1}}
     \epsilon_a{}^b D_b ( \Phi_{V,\modeindex}
    \big)
      (\hat D_{i} \mVIj  + \hat D_{j} \mVI )
  \,.
   \nn
\eea

It is convenient to define the parts  $ \hVstar{ij}$, respectively  $\hVstar{ai}$, of $h_{ij}^V$, respectively of  $ \hVstar{ai}$, in which the low vector harmonics have been removed:
\begin{eqnarray}
 \label{6IX17.25}
 &
 \phantom{xx}
  \hVstar{ij}:= h_{ij}^V +r^2  \sum_{I \,: \, 0<  \ k_V(I) \le (n-1)\kappahere }
   \frac 1 {k_V(I)} \HTVI(\hat D_{i} \mVIj  + \hat D_{j} \mVI )
   \,,
   &
\\
 \label{6IX17.5}
 &
  \hVstar{ai}:= h_{ai}^V - r \sum_{I \,: \, 0<  \ k_V(I) \le (n-1)\kappahere }
    \faVI \mVI
   \,.
   &
\end{eqnarray}

From now on we assume that $\Phi_{V,I}$ vanishes.  From \eq{8VIII18.4} we then obtain
\bel{8VIII18.06}
 \| D_a (r^{-2} \hVstar{ij}) dx^i dx^j  \|_{H^k(\Nman)}
 \le 2 r^{-2}  \|  \hVstar{ai} dx^i  \|_{H^{k+1}(\Nman)}
  \,,
\ee
which furthermore justifies the convergence and equality \eqref{8VIII18.4} in the $\Phi_{V,I}\equiv 0$ case. Set, again formally
\begin{equation}\label{8VIII18.7}
  \YVstar{i}:= -r^2\sum_{I \,: \, k_V(I)> (n-1)\kappahere } \frac{\HTVI}{k_V(I)} \mvI
  \,.
\end{equation}
Then
\begin{equation}\label{8VIII18.07+}
  \| \YVstar{i} \|_{H^k(\Nman)}  =  \| \hVstar{ij}\|_{H^{k-1}(\Nman)}
  \,,
\end{equation}
which justifies convergence in \eqref{8VIII18.7}. Next, again formally
\begin{equation}\label{8VIII18.8}
  D_a\YVstar{i}  =
   -r^2\sum_{I \,: \, k_V(I)> (n-1)\kappahere } \frac{D_a\HTVI}{k_V(I)} \mvI =
   -r\sum_{I \,: \, k_V(I)> (n-1)\kappahere } \faVI \mvI
    =
    \hVstar{ai}
  \,.
\end{equation}
Hence
\begin{equation}\label{8VIII18.9}
 \phantom{xxx}
  \|D_a\YVstar{}  \|_{H^k(\Nman)}
 =
   \| \hVstar{ai}dx^i\|_{H^k(\Nman)}
  \,.
\end{equation}
This justifies convergence in \eqref{8VIII18.8}, and thus
\begin{equation}\label{8VIII18.8+}
    \hVstar{ai}=
    D_a\YVstar{i}  = \mcL_{\YVstar{}} \zgriem_{ai}
  \,.
\end{equation}
%


%

\bigskip

\noindent{\sc Acknowledgements:} The research of PTC was supported in
part by the Austrian Research Fund (FWF), Project  P29517-N27, and by the Polish National Center of Science (NCN) under grant 2016/21/B/ST1/00940.
 ED  was supported by the grant ANR-17-CE40-0034 of the French National Research Agency ANR (project CCEM).
PK was supported by a uni:docs grant of the University of Vienna.
Useful discussions with Jacek Jezierski and Olivier Sarbach are acknowledged.  We are grateful to Piotr Waluk and Jacek Jezierski for making their results available prior to publication.

\bibliographystyle{amsplain}

\def\polhk#1{\setbox0=\hbox{#1}{\ooalign{\hidewidth
  \lower1.5ex\hbox{`}\hidewidth\crcr\unhbox0}}} \def\cprime{$'$}
  \def\cprime{$'$} \def\polhk#1{\setbox0=\hbox{#1}{\ooalign{\hidewidth
  \lower1.5ex\hbox{`}\hidewidth\crcr\unhbox0}}} \def\cprime{$'$}
  \def\cprime{$'$}
\providecommand{\bysame}{\leavevmode\hbox to3em{\hrulefill}\thinspace}
\providecommand{\MR}{\relax\ifhmode\unskip\space\fi MR }
\providecommand{\MRhref}[2]{%
  \href{http://www.ams.org/mathscinet-getitem?mr=#1}{#2}
}
\providecommand{\href}[2]{#2}



\begin{thebibliography}{10}

\bibitem{Klinger2018}
P.~Klinger, \emph{Non-degeneracy of {R}iemannian {S}chwarzschild-anti de
	{S}itter metrics: {B}irkhoff-type results in linearized gravity},  (2018),
arXiv:1806.05023 [gr-qc].

\bibitem{Kodama:1985bj}
H.~Kodama and M.~Sasaki, \emph{{Cosmological Perturbation Theory}}, Prog.
Theor. Phys. Suppl. \textbf{78} (1984), 1--166.

\bibitem{Kimura}
M.~Kimura, \emph{A simple test for stability of black hole by
	{$S$}-deformation}, Class.\ Quantum Grav. \textbf{34} (2017), 235007,
arXiv:1706.01447 [gr-qc].

\bibitem{Rubin1984}
M.A. Rubin and C.R. Ordóñez, \emph{Eigenvalues and degeneracies for
	n‐dimensional tensor spherical harmonics}, Journal of Mathematical Physics
\textbf{25} (1984), no.~10, 2888--2894.

\bibitem{ACD2}
M.T. Anderson, P.T. Chru\'{s}ciel, and E.~Delay, \emph{Non-trivial, static,
  geodesically complete space-times with a negative cosmological constant.
  {II}. {$n\geq5$}}, AdS/CFT correspondence: Einstein metrics and their
  conformal boundaries, IRMA Lect. Math. Theor. Phys., vol.~8, Eur. Math. Soc.,
  Z\"urich, 2005, arXiv:gr-qc/0401081, pp.~165--204.

\bibitem{AndChDiss}
L.~Andersson and P.T. Chru\'{s}ciel, \emph{On asymptotic behavior of solutions
  of the constraint equations in general relativity with ``hyperboloidal
  boundary conditions''}, Dissert. Math. \textbf{355} (1996), 1--100 (English).
  \MR{MR1405962 (97e:58217)}

\bibitem{BergerEbin1969}
M.~Berger and D.~Ebin, \emph{Some decompositions of the space of symmetric
  tensors on a {R}iemannian manifold}, Jour.\ Diff.\ Geom. \textbf{3} (1969),
  379--392. \MR{0266084}

\bibitem{besse:einstein}
A.L. Besse, \emph{Einstein manifolds}, Ergebnisse d. Math. 3. Folge, vol.~10,
  Springer, Berlin, 1987.

\bibitem{Birmingham}
D.~Birmingham, \emph{Topological black holes in anti-de {Sitter} space},
  Class.\ Quantum Grav. \textbf{16} (1999), 1197--1205, arXiv:hep-th/9808032.
  \MR{MR1696149 (2000c:83062)}

\bibitem{BjorakerHosotani}
J.~Bjoraker and Y.~Hosotani, \emph{{Monopoles, dyons and black holes in the
  four-dimensional Einstein-Yang-Mills theory}}, Phys.\ Rev. \textbf{D62}
  (2000), 043513, arXiv:hep-th/0002098.

\bibitem{BBSLMR}
J.L. Bl\'azquez-Salcedo, J.~Kunz, F.~Navarro-L\'erida, and E.~Radu,
  \emph{{Static Einstein-Maxwell Magnetic Solitons and Black Holes in an Odd
  Dimensional AdS Spacetime}}, Entropy \textbf{18} (2016), 438,
  arXiv:1612.03747 [gr-qc].

\bibitem{Boucetta1999}
M.~Boucetta, \emph{Spectre des laplaciens de {L}ichnerowicz sur les sph\`eres
  et les projectifs r\'eels}, Publ.\ Mat. \textbf{43} (1999), 451--483.
  \MR{1744618}

\bibitem{ChDelayKlingerBH}
P.T. Chru{\'s}ciel, E.~Delay, and P.~Klinger, \emph{{Non-singular spacetimes
  with a negative cosmological constant: IV. Stationary black hole solutions
  with matter fields}},  (2017), arXiv:1708.04947 [gr-qc].

\bibitem{ChHoerzinger}
P.T. Chru\'sciel and M.~H\"orzinger, \emph{{The Euclidean quantisation of
  Kerr-Newman-de Sitter black holes}}, Journal of High Energy Physics
  \textbf{2016} (2016), no.~4, 1--37, arXiv:1511.08496 [hep-th].

\bibitem{Dotti2016}
G.~Dotti, \emph{Black hole non-modal linear stability: the {S}chwarzschild
  ({A})d{S} cases}, Class.\ Quantum Grav. \textbf{33} (2016), 205005, 43.
  \MR{3555702}

\bibitem{IshibashiKodamaStability}
A.~Ishibashi and H.~Kodama, \emph{Stability of higher-dimensional
  {S}chwarzschild black holes}, Progr.\ Theoret.\ Phys. \textbf{110}, 901--919,
  arXiv:hep-th/0305185. \MR{2029760}

\bibitem{IshibashiWaldIII}
A.~Ishibashi and R.~M. Wald, \emph{Dynamics in non-globally hyperbolic static
  spacetimes {III: Anti-de Sitter} spacetime}, Class. Quant. Grav. \textbf{21}
  (2004), 2981.

\bibitem{JezierskiWaluk2}
J.~Jezierski and P.~Waluk,  (2017), in preparation.

\bibitem{JezierskiWaluk}
J.~Jezierski and P.~Waluk, \emph{Degrees of freedom of weak gravitational field
  on a spherically symmetric background}, Acta Phys.\ Pol. \textbf{10} (2017),
  391--395.

\bibitem{KodamaIshibashiMaster}
H.~Kodama and A.~Ishibashi, \emph{{A Master equation for gravitational
  perturbations of maximally symmetric black holes in higher dimensions}},
  Prog.\ Theor.\ Phys. \textbf{110} (2003), 701--722, arXiv:hep-th/0305147.
  \MR{2033676}

\bibitem{KodamaIshibashiMasterChargedBH}
\bysame, \emph{{Master equations for perturbations of generalised static black
  holes with charge in higher dimensions}}, Prog.\ Theor.\ Phys. \textbf{111}
  (2004), 29--73, arXiv:hep-th/0308128.

\bibitem{KodamaIshibashiSeto}
H.~Kodama, A.~Ishibashi, and O.~Seto, \emph{{Brane world cosmology: Gauge
  invariant formalism for perturbation}}, Phys.\ Rev. \textbf{D62} (2000),
  064022, arXiv:hep-th/0004160.

\bibitem{Kottler}
F.~Kottler, \emph{{\"Uber die physikalischen Grundlagen der Einsteinschen
  Gravitationstheorie}}, Annalen der Physik \textbf{56} (1918), 401--462.

\bibitem{Lee:fredholm}
J.M. Lee, \emph{Fredholm operators and {E}instein metrics on conformally
  compact manifolds}, Mem. Amer. Math. Soc. \textbf{183} (2006), vi+83,
  arXiv:math.DG/0105046. \MR{MR2252687}

\bibitem{Lichnerowicz61}
A.~Lichnerowicz, \emph{Propagateurs et commutateurs en relativit\'e
  g\'en\'erale}, Inst. Hautes \'Etudes Sci. Publ. Math. (1961), 56.
  \MR{0157736}

\bibitem{OkuyamaMaeda}
N.~Okuyama and K.~Maeda, \emph{{Five-dimensional black hole and particle
  solution with nonAbelian gauge field}}, Phys. Rev. \textbf{D67} (2003),
  104012, 18 pp., arXiv:gr-qc/0212022 [gr-qc]. \MR{2005022}

\bibitem{RaduTchrakian}
E.~Radu and D.H. Tchrakian, \emph{{Gravitating Yang-Mills fields in all
  dimensions}}, {418th WE-Heraeus-Seminar: Models of Gravity in Higher
  Dimensions: From theory to Experimental search Bremen, Germany, August 25-29,
  2008}, 2009, arXiv:0907.1452 [gr-qc].

\bibitem{schoen1982}
Richard Schoen, \emph{A lower bound for the first eigenvalue of a negatively
  curved manifold}, J. Differential Geom. \textbf{17} (1982), no.~2, 233--238.

\bibitem{Waluk}
P.~Waluk, Master's thesis, 2017.

\bibitem{WinstanleyEYM}
E.~Winstanley, \emph{{Existence of stable hairy black holes in SU(2) Einstein
  Yang-Mills theory with a negative cosmological constant}}, Class.Quant.Grav.
  \textbf{16} (1999), 1963--1978.

\end{thebibliography}
\end{document}